\documentclass[1p]{elsarticle}

\usepackage{amssymb}
\usepackage{amsfonts}
\usepackage{amstext}
\usepackage{graphicx}
\usepackage{amscd}
\usepackage{amsmath}
\usepackage{mathrsfs}
\usepackage{stmaryrd}

\newcommand{\ihbar}{\imath \hbar}

\newcommand{\Ran}{\mathrm{Ran}}
\newcommand{\coker}{\mathrm{coker}}
\newcommand{\coRan}{\mathrm{coRan}}
\newcommand{\Ted}{\underset{\rightarrow}{\mathbb{T}e}}
\newcommand{\Teg}{\underset{\leftarrow}{\mathbb{T}e}}
\newcommand{\Ped}{\underset{\rightarrow}{\mathbb{P}e}}
\newcommand{\HS}{{\mathcal H_{\mathcal S}}}
\newcommand{\HA}{{\mathcal H_{\mathcal A}}}
\newcommand{\tr}{\mathrm{tr}}
\newcommand{\llangle}{\langle \hspace{-0.2em} \langle}
\newcommand{\rrangle}{\rangle \hspace{-0.2em} \rangle}
\newcommand{\rdagger}{{\rotatebox{180}{\scriptsize $\dagger$}}}
\newcommand{\PSI}{\mathbf{\Psi}}
\newcommand{\Aut}{\mathrm{Aut}}
\newcommand{\id}{\mathrm{id}}
\newcommand{\RHO}{\wp}
\newcommand{\dist}{\mathrm{dist}}
\newcommand{\Obj}{\mathrm{Obj}}
\newcommand{\Morph}{\mathrm{Morph}}
\newcommand{\Funct}{\mathrm{Funct}}
\newcommand{\EndFunct}{\mathrm{EndFunct}}
\newcommand{\pinv}{\star}
\newcommand{\transp}{\text{\textsc{t}}}
\newcommand{\Verylongrightarrow}{= \hspace{-0.2em} = \hspace{-0.2em} \Longrightarrow}
\newcommand{\Verylongleftarrow}{\Longleftarrow \hspace{-0.2em} = \hspace{-0.2em} =}
\newcommand{\Verylongeq}{= \hspace{-0.2em} = \hspace{-0.2em} = \hspace{-0.2em} =}
\newcommand{\upcircarrow}{\overset{\uparrow}{\bullet}}
\newcommand{\Sp}{\mathrm{Sp}}

\newtheorem{defi}{Definition}
\newtheorem{prop}{Property}
\newtheorem{propo}{Proposition}
\newenvironment{proof}{\noindent \textit{Proof:}}{\hfill $\Box$ \\}
\newtheorem{cor}{Corollary}
\newtheorem{theo}{Theorem}

\journal{Journal of Geometry and Physics}

\begin{document}
\begin{frontmatter}

\title{Purification of Lindblad dynamics, geometry of mixed states and geometric phases}

\author[uti]{David Viennot}
\address[uti]{Institut UTINAM (CNRS UMR 6213, Universit\'e de Bourgogne-Franche-Comt\'e, Observatoire de Besan\c con), 41bis Avenue de l'Observatoire, BP 1615, 25010 Besan\c con cedex, France.}

\begin{abstract}
We propose a nonlinear Schr\"odinger equation in a Hilbert space enlarged with an ancilla such that the partial trace of its solution obeys to the Lindblad equation of an open quantum system. The dynamics involved by this nonlinear Schr\"odinger equation constitutes then a purification of the Lindblad dynamics. We study the (non adiabatic) geometric phases involved by this purification and show that our theory unifies several definitions of geometric phases for open systems which have been previously proposed. We study the geometry involved by this purification and show that it is a complicated geometric structure related to an higher gauge theory, i.e. a categorical bibundle with a connective structure.
\end{abstract}

\end{frontmatter}

\tableofcontents

\section{Introduction}
Quantum information\cite{Heiss} and open quantum systems\cite{Breuer} are subjects of particular interest in the modern physics, dealing with decoherence processes, quantum computation and communication, entanglement processes, distillation protocols, Schmidt decomposition, Markovian and non-Markovian effects, etc. A particular interesting subject in quantum information is the process of purification\cite{Bengtsson} which consists for a mixed state $\rho$ of the Hilbert space $\HS$ to find a pure state $\Psi \in \HS \otimes \HA$ in an enlarged Hilbert space such that $\rho = \tr_{\HA} |\Psi \rrangle \llangle \Psi|$. The auxiliary Hilbert space $\HA$ can be viewed as describing an effective environment. In this paper, we want to study the purification process with respect to the dynamics of mixed states. When the dynamics is conservative, i.e. when it is described by a Liouville-von Neumann equation $\ihbar \dot \rho = [H_{\mathcal S},\rho]$ (no relaxation effect occurs), the dynamics of the purification and the related mathematical structures have been extensively studied, see for example ref.\cite{Andersson1, Dittmann}. We study in this paper the case where the environment of the quantum system induces relaxation effects, with a dynamics obeying to a Lindblad equation\cite{Breuer}. We will show that the purified state obeys in the enlarged Hilbert space to a nonlinear Schr\"odinger equation. The emergence of a nonlinearity is not a new phenomenon in the relation between dynamics of mixed and pure states. In ref. \cite{Gisin} it is shown that the pure dynamics closest to the Lindblad dynamics (in the sense that this pure dynamics is viewed as the dynamics of some tangent vectors on the density matrix manifold) is nonlinear; and in ref. \cite{Bechmann}, a purification protocol needing a nonlinear operation has been proposed (a purification protocol is a set of operations and measurements transforming a mixed state $\rho$ to a pure state $\psi$ of the same Hilbert space without the trace operation -- in fact the state $\psi$ depends only on the protocol and is independent from the initial mixed state $\rho$ --, it is a question different from the purification process discussed in the present paper). Moreover the Liouville equation for a piecewise deterministic process is associated for its deterministic part with a nonlinear Schr\"odinger equation but which does not take into account the jump part (see ref.\cite{Breuer} chapter 6.1) in contrast with our equation for the purified dynamics.\\
Geometrization of physical theories is a great active area in theoretical physics. In nonrelativistic quantum dynamics, it is in particular related to the theory of geometric phases (so-called Berry phases) \cite{Berry}. As shown by Simon\cite{Simon}, the dynamics of pure states and the Berry phases take place in a geometric structure which is a principal fibre bundle endowed with a connection. Some generalisations of geometric phases have been proposed for mixed states: Uhlmann\cite{Uhlmann0,Uhlmann1, Uhlmann2} has proposed a concept of geometric phases based on the theory of transition probabilities for pair of mixed states, the involved geometric structures have been analysed in ref. \cite{Dittmann, Dittmann2, Jencova, Gimento}; Sj\"oqvist etal\cite{Sjoqvist, Tong} have proposed a concept of geometric phase based on an interferometric theory, the involved geometric structures have been analysed in ref. \cite{Andersson1, Andersson}; and we have proposed a concept of geometric phase in the adiabatic limit based on the generalisation of the geometric structure studied by Simon from vector bundles to $C^*$-modules \cite{Viennot1, Viennot2}. By using the equation of the purified dynamics, we will build a general theory of geometric phases for open quantum systems which unifies these previous approaches. The dynamics of the density matrices present different geometric phases appearing at different levels. This is due to a more complicated gauge structure called higher gauge theory in the literature \cite{Baez1, Baez2, Baez3, Wockel, Nikolaus}. We will show that this structure is a generalization in the category theory of the principal bundle structure, complicated by the stratified structure of the density matrix manifold\cite{Bengtsson}.\\
The approach followed in the present paper could be usefull for some problems of quantum control and quantum information concerning open systems. In ref. \cite{Viennot3} we have shown how use the fields associated with the geometry of categorical bundles to analyse the control of a quantum system hampered by entanglement with another one. By the purification of the Lindblad equation, the decoherence and the relaxtion effects on the mixed state of the open quantum system, appears in purified picture as entanglement between the quantum state and the ancilla  (described by the auxiliary Hilbert space). The fields associated with the categorical bundles presented in this paper, can then be used to analyse the control of an open quantum system (in the purified picture) with a similar method as the one followed in ref. \cite{Viennot3}. Moreover some approaches of quantum computation based on the geometrization of the quantum dynamics and on the geometric phases have been proposed \cite{Zanardi, Lucarelli, Sjoqvist2, Xu}, usually called holonomic quantum computation (HQC). These approaches are based on the geometric properties of fiber bundles modelizing the dynamics of closed quantum systems. The present work, with the construction of the categorical bundles describing open quantum systems in the purified picture, is a first step to a generalisation of the HQC taking into account the decoherence and the relaxation effects occuring for the open systems.\\
This paper is organized as follows. Section II is devoted to the geometry associated with the purification process by recalling the stratified structure of the density matrix manifold and by introducing some representations of the purified states with the associated inner products. This section introduces some mathematical tools needed to the understanding of the following. Section III shows that the purified state satisfies a nonlinear Schr\"odinger equation if the associated mixed state satisfies a Lindblad equation. Section IV presents a general theory of geometric phases for open quantum systems; there relations with the different propositions of geometric phases are analysed. 
Section V studies the geometric structure involved by the general geometric phase theory and the purification process, firstly from the point of view of the ordinary differential geometry, secondly from the point of view of the category theory. It concludes by the introduction of the connective structure and the physical interpretations of the different fields involved by the connection.\\

\noindent {\it \textbf{A note about the notations used here:}\\
We adopt the Einstein's convention: a bottom-top repetition of an index induces a summation.}\\
{\it  $\mathcal B(\mathcal H)$ denotes the set of the bounded linear operators of the Hilbert space $\mathcal H$. For an operator $A \in \mathcal B(\mathcal H)$, $\Ran A$, $\ker A$ and $\Sp(A)$ denote its range, kernel and spectrum. $\forall A,B \in \mathcal B(\mathcal H)$, we denote the commutator and the anticommutator by $[A,B]=AB-BA$ and $\{A,B\}=AB+BA$. $\Aut \mathfrak g$, with $\mathfrak g$ a vector space or an algebra, denotes the set of the automorphisms of $\mathfrak g$.}\\
{\it The symbol ``$\simeq$'' between two spaces (resp. manifolds) denotes that they are isomorphic (resp. homeomorphic). The symbol ``$\approx$'' between two manifolds denotes that they are locally homeomorphic. The symbol ``$A \hookrightarrow B$'' denotes an inclusion of $A$ into $B$. $G \rtimes H$ denotes a semi-direct product between two groups $G$ and $H$; $\mathfrak g \niplus \mathfrak h$ denotes a semi-direct sum between two algebras $\mathfrak g$ and $\mathfrak h$.}\\
{\it $\Pr_i : V_1 \times V_2 \times ... \times V_n \to V_i$, with $V_j$ some sets, denotes the canonical projection $\Pr_i(x_1,x_2,...,x_n) = x_i$. Let $M$ be a manifold, $T_xM$ denotes its tangent space at $x$ ($TM$ denotes its tangent bundle) and $\Omega^n(M,\mathfrak g)$ denotes its space of $\mathfrak g$-valued differential $n$-forms. Let $f: M \to N$ be a diffeomorphism between two manifolds, $f_* : TM \to TN$ denotes its tangent map (its push-forward) and $f^* : \Omega^* N \to \Omega^* M$ denotes its cotangent map (its pull-back).} Let $E \xrightarrow{\pi} M$ be a fibre bundle ($E$ and $M$ are manifolds and $\pi$ is a surjective map); $\Gamma(M,E)$ denotes the set of its local sections.\\
{\it For a category $\mathscr C$, $\Obj \mathscr C$ denotes its collection of objects and $\Morph \mathscr C$ denotes its collection of arrows (morphisms). $\forall o \in \Obj \mathscr C$, $\id_o$ denotes the trivial arrow from and to $o$ (the identity map of $o$). $\forall a \in \Morph \mathscr C$, $s(a)$ denotes the source of $a$, and $t(a)$ denotes the target of $a$. $\Funct(\mathscr C,\mathscr C')$ denotes the set of functors from $\mathscr C$ to $\mathscr C'$ ($\EndFunct(\mathscr C) \equiv \Funct(\mathscr C,\mathscr C)$).}\\
{\it $\forall A(t) \in \mathcal B(\mathcal H)$, $\Teg^{-\int_{t_0}^t A(t')dt'} = U_A(t,t_0)$ denotes the time-ordered exponential (the Dyson series) i.e. the solution of the equation: $ \frac{dU_A(t,t_0)}{dt} = -A(t) U_A(t,t_0)$ (with $U_A(t_0,t_0) = \id_{\mathcal H}$). $\Ted^{-\int_{t_0}^t A(t')dt'} = V_A(t_0,t)$ denotes the time-anti-ordered exponential, i.e. the solution of the equation: $ \frac{dV_A(t_0,t)}{dt} = - V_A(t_0,t) A(t)$ (with $V_A(t_0,t_0) = \id_{\mathcal H}$).}

\section{Purification process}
\subsection{The purification bundle and the stratification}
Let $\HS$ be the Hilbert space of the studied system $\mathcal S$ (we consider that $\HS$ is finite dimensional, $\HS \simeq \mathbb C^n$, since the applications of the present work concern essentially quantum information theory where the qubit state spaces are finite dimensional). Mixed states of $\mathcal S$ are represented by density operators; the space of density operators being
\begin{equation}
\mathcal D_0(\HS) = \{ \rho \in \mathcal B(\HS), \rho^\dagger = \rho, \rho \geq 0, \tr \rho = 1 \}
\end{equation}
where $\mathcal B(\HS) \simeq \mathfrak M_{n\times n}(\mathbb C)$ denotes the space of linear operators of $\HS$. The mixed state $\omega_\rho : \mathcal B(\HS) \to \mathbb C$ associated with $\rho \in \mathcal D_0(\HS)$ is defined by
\begin{equation}
\forall A \in \mathcal B(\HS), \quad \omega_\rho(A) = \tr(\rho A)
\end{equation}
$\mathcal B(\HS)$ being a $C^*$-algebra and $\omega_\rho$ being a state of this $C^*$-algebra (see ref.\cite{Bratteli}). In some cases, we can also consider non-normalized density operators:
\begin{equation}
\mathcal D(\HS) = \{ \rho \in \mathcal B(\HS), \rho^\dagger = \rho, \rho \geq 0 \}
\end{equation}
\begin{equation}
\forall A \in \mathcal B(\HS), \quad \omega_\rho(A) = \frac{\tr(\rho A)}{\tr \rho}
\end{equation}
A purification of a density operator $\rho \in \mathcal D(\HS)$ is a state $\Psi_{\rho} \in \HS \otimes \HA$ such that
\begin{equation}
\rho = \tr_{\HA} |\Psi_\rho \rrangle \llangle \Psi_\rho |
\end{equation}
where $\HA$ is an arbitrary auxiliary Hilbert space called the \textit{ancilla} and $\llangle .|. \rrangle$ denotes the inner product of $\HS \otimes \HA$ induced by the tensor product ($\tr_\HA$ denotes the partial trace over $\HA$). Because of the Schmidt theorem\cite{Bengtsson} one needs $\dim \HA \geq \dim \HS$. In order to avoid any unnecessary complication, we choose $\dim \HA = \dim \HS$ throughout this paper.\\
An interesting choice of ancilla consists to the algebraic dual of $\HS$, $\HA = \HS^* = \{\langle \psi|, \psi \in \HS\}$ (the space of continuous linear functionals of $\HS$). In that case $\HS \otimes \HA = \HS \otimes \HS^* = \mathcal B(\HS)$. This choice is called \textit{standard purification}. $\mathcal B(\HS)$ is endowed with the Hilbert-Schmidt inner product:
\begin{equation}
\forall W,Z \in \mathcal B(\HS), \quad \langle Z|W \rangle_{HS} = \tr_\HS(Z^\dagger W) = \overline Z_{i\alpha} W^{i \alpha}
\end{equation}
where $W = W^{i\alpha} |\zeta_i \rangle \langle \zeta_\alpha|$, $(\zeta_i)_{i=1,...,n}$ being an orthonormal basis of $\HS$. We have then
\begin{equation}
\begin{array}{rcl} \pi_{HS} : \mathcal B(\HS) & \to & \mathcal D(\HS) \\ W & \mapsto & WW^\dagger \end{array}
\end{equation}
The restriction of $\pi_{HS}$ on $S_{HS} \mathcal B(\HS) = \{W \in \mathcal B(\HS), \|W\|_{HS}^2 = \tr_\HS(W^\dagger W) = 1 \}$ has its values in $\mathcal D_0(\HS)$. $\pi_{HS}$ is a surjective map.
\begin{equation}
\forall W\in S_{HS}\mathcal B(\HS), \forall U \in U(\HS), \qquad \tilde W = WU \Rightarrow \pi_{HS}(\tilde W) = WUU^\dagger W^\dagger = \pi_{HS}(W)
\end{equation}
$U(\HS) \simeq U(n)$ denotes the group of unitary operators of $\HS$. The action of $U(\HS)$ on $S_{HS}\mathcal B(\HS)$ is not transitive except for the restriction on the faithfull operators ($\det W \not= 0$).\\
$ U(\HS)$ acts also on the left of $S_{HS}\mathcal B(\HS)$. This action induces the adjoint action of the group on $\mathcal D_0(\HS)$:
\begin{equation}
\forall W\in S_{HS}\mathcal B(\HS), \forall U \in U(\HS), \qquad \tilde W = UW \Rightarrow  \pi_{HS}(\tilde W) = U \pi_{HS}(W) U^{-1}
\end{equation}
$\forall \rho \in \mathcal D_0(\HS)$, the orbit of the density operator $U(\HS)\rho = \{U\rho U^{-1}, U\in U(\HS)\}$ is constituted by all operators which are isospectral to $\rho$. $\Sigma(\HS) = \mathcal D_0(\HS) / U(\HS) \simeq \Sigma(n)$ where
\begin{equation}
\Sigma(n) = \{(p_1,...,p_n) \in [0,1]^n, p_i \leq p_{i+1}, \sum_{i=1}^n p_i = 1 \}
\end{equation}
is the $(n-1)$-simplex of the possible eigenvalues ($\dim_{\mathbb R} \Sigma(n) = n-1$). $\Sigma(\HS)$ is the set of diagonal density operators with sorted eigenvalues. Let $\pi_D: \mathcal D_0(\HS) \to \Sigma(\HS)$ be the canonical projection associated with the quotient space $\mathcal D_0(\HS) / U(\HS)$.
\begin{equation}
\forall \sigma \in \Sigma(\HS), \quad \pi_D^{-1}(\sigma) = \{U\sigma U^{-1}, U \in U(\HS)\} \simeq U(\HS)/U(\HS)_\sigma
\end{equation}
where $U(\HS)_\sigma = \{U \in U(\HS), U\sigma U^{-1} = \sigma \}$ is the isotropy subgroup (the stabilizer) of $\sigma$. The bundle $\mathcal D_0(\HS) \xrightarrow{\pi_D} \Sigma(\HS)$ is not locally trivial, it is a stratified space\cite{Bengtsson, Bredon}. A stratum $\Sigma^i(\HS)$ of $\Sigma(\HS)$ is characterized by the degeneracy profil of the eigenvalues. If $\sigma \in \Sigma^i(\HS)$ has $k_1$ eigenvalues with degeneracy equal to $q_1$, $k_2$ eigenvalues with degeneracy equal to $q_2$, etc, then its stabilizer is $U(\HS)_\sigma \simeq U(q_1)^{k_1} \times ... \times U(q_l)^{k_l}$ (the products are the group direct products) and then $\pi_D^{-1}(\sigma) \simeq U(n)/(U(q_1)^{k_1} \times ... \times U(q_l)^{k_l})$ ($\sum_i k_iq_i = n$). The stratum $\Sigma^\circ(\HS)$ with no eigenvalue degeneracy is associated with the normalizer $U(\HS)_\sigma \simeq T^n$ (the $n$-torus) and $\forall \sigma \in \Sigma^\circ(\HS)$, $\pi_D^{-1}(\sigma) \simeq U(n)/T^n = Fl(n,\mathbb C)$ (the flag manifold); in the other cases, the fibers of the strata in $D_0(\HS)$ are homeomorphic to partial flag manifolds as Grassmanian manifolds $Gr_p(\mathbb C^n) = U(n)/(U(n-p)\times U(p))$, projective manifolds $\mathbb CP^{n-1} = U(n)/(U(n-1) \times U(1))$, etc. We distinguish between the regular strata such that $\det \rho \not= 0$ (no zero eigenvalue) and the singular strata such that $\det \rho =0$. We remark the presence of two particular strata associated with vertices of $\Sigma(n)$, the singular stratum of the pure states $\Sigma^0(\HS)$ ($p_1=...=p_{n-1}=0$ and $p_n = 1$, $\pi_D^{-1}(\sigma) \simeq U(n)/(U(n-1)\times U(1)) = \mathbb CP^{n-1}$), and the regular stratum of the microcanonical state $\Sigma^\infty(\HS)$ ($p_1=...=p_n = \frac{1}{n}$, $\pi_D^{-1}(\sigma) \simeq  U(n)/U(n)= \{1\}$). Fig. \ref{simplex} illustrates the stratification for the smallest dimensional cases.
\begin{figure}
\begin{center}
\includegraphics[width=14cm]{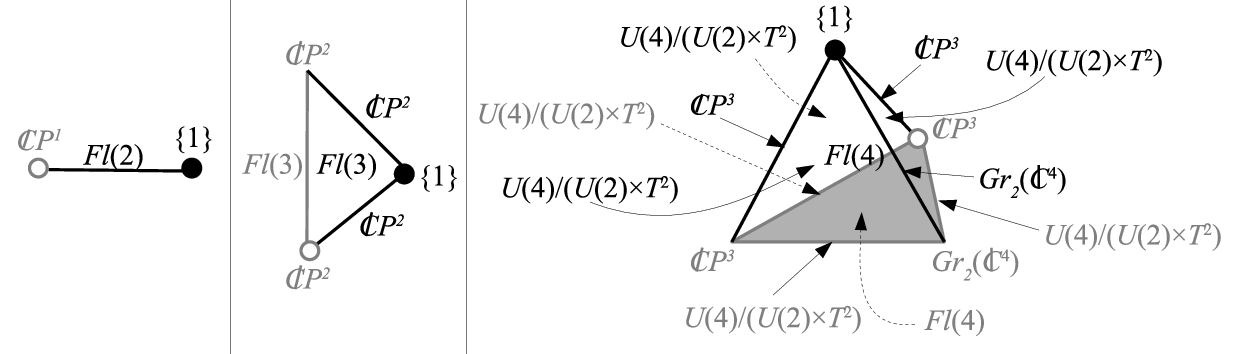}
\caption{\label{simplex} The simplex $\Sigma(n)$ for $n=2$ (left), $n=3$ (middle) and $n=4$ (right). The strata are the open sub-simplices of each simplex (including $\Sigma^\circ(n)$ the interior of $\Sigma(n)$ -- the stratum with no degenerated eigenvalues --). For each stratum $\Sigma^i$ the model manifold of $\pi^{-1}_D(\sigma)$ is indicated. Singular strata are written in grey whereas regular strata are written in black ($\bullet$ denotes the stratum of the microcanonical state, and $\circ$ denotes the stratum of the pure states). The description of the simplexes can be found table \ref{simpl_desc}. The details about the stratification can found in ref.\cite{Bengtsson} where $\Sigma(n)$ is called Weyl chamber by analogy with the group representation theory.}
\end{center}
\end{figure}

\begin{table}
\begin{center}
\caption{\label{simpl_desc} Descriptions of the stratification of the simplex $\Sigma(n)$.}
\begin{tabular}{|c|c|c|c|}
\hline
$n$ & strata $\Sigma^i(n)$ & sub-simplices & fiber $\pi_D^{-1}(\sigma)$ \\
\hline \hline
2 & $0<p_1<p_2$ & edge & $Fl(2,\mathbb C)$  \\
\cline{2-4}
 & $p_1=p_2=1/2$ & vertex & $\{1\}$  \\
\cline{2-4}
 & $p_1=0<p_2=1$ & vertex & $\mathbb CP^1$ \\
\hline 
3 & $0<p_1<p_2<p_3$ & facet & $Fl(3,\mathbb C)$ \\
\cline{2-4}
& $0<p_1 = p_2 < p_3$ & edges & $\mathbb CP^2$ \\
& $0<p_1 < p_2 = p_3$ & & \\
\cline{2-4}
& $p_1=0 < p_2 < p_3$ & edge & $Fl(3,\mathbb C)$ \\
\cline{2-4}
& $p_1=0 <p_2=p_3=1/2$ & vertices & $\mathbb CP^2$ \\
& $p_1=p_2=0<p_3=1$ & & \\
\cline{2-4}
& $p_1=p_2=p_3=1/3$ & vertex & $\{1\}$ \\
\hline
4 & $0<p_1<p_2<p_3<p_4$ & cell & $Fl(4,\mathbb C)$ \\
\cline{2-4}
& $0<p_1=p_2<p_3<p_4$ & facets & $U(4)/(U(2)\times T^2)$ \\
& $0<p_1<p_2=p_3<p_4$ & & \\
& $0<p_1<p_2<p_3=p_4$ & & \\
\cline{2-4}
& $0=p_1<p_2<p_3<p_4$ & facet & $Fl(4,\mathbb C)$ \\
\cline{2-4}
& $0<p_1=p_2<p_3=p_4$ & edge & $Gr_2(\mathbb C^4)$ \\
\cline{2-4}
& $0<p_1=p_2=p_3<p_4$ & edges & $\mathbb CP^3$ \\
& $0<p_1<p_2=p_3=p_4$ & & \\
\cline{2-4}
& $p_1=0<p_2<p_3=p_4$ & edges & $U(4)/(U(2)\times T^2)$ \\
& $p_1=0<p_2=p_3<p_4$ & & \\
& $p_1=p_2=0<p_3<p_4$ & & \\
\cline{2-4}
& $p_1=p_2=p_3=0<p_4=1$ & vertices & $\mathbb CP^3$ \\
& $p_1=0<p_2=p_3=p_4=1/3$ & & \\
\cline{2-4}
& $p_1=p_2=0<p_3=p_4=1/2$ & vertex & $Gr_2(\mathbb C^4)$ \\
\cline{2-4}
& $p_1=p_2=p_3=p_4=1/4$ & vertex & $\{1\}$ \\
\hline
\end{tabular}
\end{center}
\end{table}

Let $\HA$ be an arbitrary ancilla, and let $(\zeta_i)_{i=1,...,n}$ and $(\xi_\alpha)_{\alpha=1,...,n}$ be orthonormal basis of $\HS$ and $\HA$. Let $\PSI : \mathcal B(\HS) \xrightarrow{\simeq} \HS \otimes \HA$ be the (non canonical) isomorphism defined by
\begin{equation}
\forall W \in \mathcal B(\HS), \quad \PSI(W) = W^{i \alpha} \zeta_i \otimes \xi_\alpha
\end{equation}
with $W = W^{i \alpha} |\zeta_i \rangle \langle \zeta_\alpha|$. The restriction of $\PSI$ on $S_{HS} \mathcal B(\HS)$ has its values in $S(\HS \otimes \HA) = \{ \Psi \in \HS \otimes \HA, \|\Psi\|_{\HS \otimes \HA} = 1 \} \simeq S^{2n^2-1}$ (the $(2n^2-1)$-dimensional sphere). Note that $\llangle \PSI(Z) | \PSI(W) \rrangle = \langle Z|W \rangle_{HS}$.\\
The following commutative diagramm summarizes the bundle of purification:
$$ \begin{CD}
S_{HS} \mathcal B(\HS) @>{\simeq}>{\Psi}> S(\HS \otimes \HA) @>{\simeq}>> S^{2n^2-1} \\
@V{\pi_{HS}}VV  @V{\pi_{\mathcal A}}VV  @VVV \\
\mathcal D_0(\HS)  @= \mathcal D_0(\HS)  @>{\simeq}>> \mathcal D_0(\mathbb C^n)  @= \bigsqcup_i \mathcal D^i_0(\mathbb C^n) @<{\hookleftarrow}<< U(n)/U(n)_\sigma \\
@V{\pi_D}VV   @V{\pi_D}VV  @VVV @VVV @VVV \\
\Sigma(\HS)  @=  \Sigma(\HS)  @>{\simeq}>> \Sigma(n) @= \bigsqcup_i \Sigma^i(n) @<{\hookleftarrow}<< \{\sigma\}
\end{CD}$$ 
where $\hookleftarrow$ denotes inclusion maps and $\simeq$ denotes homeomorphisms, the vertical arrows are projections.\\
The ancilla plays the role of an effective environment with which the system is entangled. The right action of $U(\HS)$ on $S_{HS}\mathcal B(\HS)$ is associated with unitary transformations of $\HA$.
\begin{equation}
\forall U \in U(\HS), \forall W \in S_{HS}\mathcal B(\HS), \quad \PSI(WU) = 1_\HS \otimes U^\transp \PSI(W) \Rightarrow \pi_\Psi(\PSI(WU)) = \rho_W
\end{equation}
where $U^\transp = \langle \zeta_\beta|U\zeta^\alpha \rangle |\xi_\alpha \rangle \langle \xi^\beta| \in U(\HA)$ ($\transp$ denotes the transposition) and  $\rho_W = \pi_{\mathcal A}(\PSI(W)) = \pi_{HS}(W) = WW^\dagger$. Such a transformation has no consequence for the observator which has only information concerning the system (information concerning the environment is lost by the partial trace $\tr_\HA$). It is then associated with an inobservable reconfiguration of the ancilla (the effective environment). In contrast, the left action of $U(\HS)$ on $S_{HS}\mathcal B(\HS)$ is associated with unitary transformations of $\HS$ modifying the mixed state:
\begin{equation}
\forall U \in U(\HS), \forall W \in S_{HS}\mathcal B(\HS), \quad \PSI(UW) = U \otimes 1_\HA \Psi(W) \Rightarrow \pi_\Psi(\PSI(UW)) = U \rho_W U^{-1}
\end{equation}
In quantum information, it can be interesting to consider also SLOCC transformations (Stochastic Local Operations and Classical Communication). By virtue of the principle of the quantum open systems, information concerning the ancilla (the environment) is lost by taking the partial trace (information stored in the environment is not directly accessible), the physicist cannot performs SLOCC transformations on $\HA$. The group of SLOCC transformations on $\HS$ is $SL(\HS) \simeq SL(n,\mathbb C)$ (the group of invertible operators of $\HS$ with determinant equal to 1). Since $U(\HS) \supset SU(\HS) \subset SL(\HS)$ ($SU(\HS)$ is the group of unitary operators of $\HS$ with determinant equal to 1), the group of unitary and SLOCC transformations are $SL(\HS) \times U(1) = GL(\HS)$ ($U(1)$ is the group of phase changes and $GL(\HS)$ is the group of invertible operators of $\HS$).
\begin{equation}
\forall U \in GL(\HS), \forall W \in \mathcal B(\HS), \quad \PSI(UW) = U \otimes 1_\HA \PSI(W) \Rightarrow \pi_\Psi(\PSI(UW)) = U \rho_W U^\dagger
\end{equation}
by noting that $S_{HS} \mathcal B(\HS)$ and $\mathcal D_0(\HS)$ are not stable by the right action of $SL(\HS)$ (we must then consider $\mathcal B(\HS)$ and $\mathcal D(\HS)$). We extend $\pi_D$ on the whole of $\mathcal D(\HS)$ by $\forall U \in SL(\HS)$, $\forall \rho \in \mathcal D_0(\HS)$, $\pi_D(U\rho U^\dagger) = \pi_D(\rho)$.

\subsection{The $C^*$-module structure}
The purification space $\HS \otimes \HA$ can also be viewed as a left Hilbert $\mathcal B(\HS)$-$C^*$-module\cite{Landsman}, the left action of the $C^*$-algebra $\mathcal B(\HS)$ being defined by
\begin{equation}
\forall A \in \mathcal B(\HS), \forall \Psi \in \HS \otimes \HA, \quad A \Psi = A \otimes 1_\HA \Psi = \Psi^{i\alpha} (A\zeta_i) \otimes \xi_\alpha
\end{equation}
The inner product of the $C^*$-module is defined by
\begin{equation}
\forall \Psi,\Phi \in \HS \otimes \HA, \quad \langle \Phi|\Psi \rangle_* = \tr_\HA |\Psi \rrangle \llangle \Phi| \in \mathcal B(\HS)
\end{equation}
It satisfies the usual properties:
\begin{equation}
\forall A,B \in \mathcal B(\HS), \forall \Psi,\Phi,\Xi \in \HS \otimes \HA, \quad \langle \Phi|A\Psi+B\Xi \rangle_* = A \langle \Phi|\Psi\rangle_* + B \langle \Phi|\Xi \rangle_*
\end{equation}
\begin{equation}
\forall \Psi,\Phi \in \HS \otimes \HA, \quad \langle \Phi|\Psi \rangle^\dagger_* = \langle \Psi|\Phi \rangle_*
\end{equation}
\begin{equation}
\forall \Psi \in \HS \otimes \HA, \quad \langle \Psi|\Psi \rangle_* \geq 0 \text{ and }  \langle \Psi|\Psi \rangle_* = 0 \iff \Psi = 0
\end{equation} 
The density operator associated with a state of $\HS \otimes \HA$ is its square norm with respect to the $C^*$-module structure:
\begin{equation}
\forall \Psi \in \HS \otimes \HA, \quad \|\Psi\|^2_* = \langle \Psi|\Psi \rangle_* = \rho_\Psi \in \mathcal D(\HS)
\end{equation}

$\mathcal B(\HS)$ as the standard purification space is related to the $C^*$-module structure by
\begin{equation}
\forall A \in \mathcal B(\HS), \forall W \in \mathcal B(\HS), \quad \PSI(AW) = A \PSI(W)
\end{equation}
\begin{equation}
\forall W,Z \in \mathcal B(\HS), \quad \langle \PSI(W)|\PSI(Z) \rangle_* = ZW^\dagger
\end{equation}
The right action of $\mathcal B(\HS)$ on itself induces a particular class of operators of the $C^*$-module acting only on $\HA$:
\begin{equation}
\forall A \in \mathcal B(\HS), \forall W \in \mathcal B(\HS), \quad \PSI(WA) = 1_\HS\otimes A^\transp \PSI(W)
\end{equation}

\subsection{$C^*$-adjointness}
In this part we want to define the adjoint, with respect to the inner product of the $C^*$-module, of the right action of $\mathcal B(\HS)$ on $\HS \otimes \HA$.
\begin{defi}
Let $\Gamma \in \mathcal B(\HS)$. We define $\Gamma^\rdagger: \HS \otimes \HA \to \mathcal B(\HS)$ by
\begin{eqnarray}
\forall \Psi \in \HS \otimes \HA, \quad \Gamma^\rdagger(\Psi) & = & {\Gamma^i}_j \Psi^{j \alpha} (W^\pinv_\Psi)_{\beta i} |\xi_\alpha \rangle \langle \xi^\beta| \\
& = & {(W^\pinv_\Psi \Gamma W_\Psi)_\beta}^\alpha |\xi_\alpha \rangle \langle \xi^\beta| \\
& = & {{(W^\pinv_\Psi \Gamma W_\Psi)^\transp}^\alpha}_\beta |\xi_\alpha \rangle \langle \xi^\beta|
\end{eqnarray}
where $W_\Psi \in \mathcal B(\HS)$ is such that $\PSI(W_\Psi)=\Psi$, ${\Gamma^i}_j = \langle \zeta^i|\Gamma \zeta_j \rangle$, $\Psi^{j \alpha} = W^{j \alpha}_\Psi = \langle \zeta^j | W_\Psi \zeta^\alpha \rangle = \llangle \zeta^j \otimes \xi^\alpha|\Psi \rrangle$, and where $W^\pinv_\Psi \in \mathcal B(\HS)$ is the pseudo-inverse of $W_\Psi$:
\begin{equation}
W^\pinv_\Psi W_\Psi = 1 - P_{\ker W_\Psi} \text{ and } W_\Psi W^\pinv_\Psi = P_{\Ran W_\Psi}
\end{equation}
where $P_{\ker W_\Psi}$ and $P_{\Ran W_\Psi}$ are respectively the projections on the kernel and on the range of $W_\Psi$ in $\HS$.
\end{defi}
\begin{prop}
Let $\Gamma^\ddagger(\Psi) = \Gamma^\rdagger(\Psi)^\dagger$ (the adjoint $\dagger$ is in the sense of the inner product of $\HA$). We have
\begin{eqnarray}
\forall \Phi \in \HS\otimes \HA, & \quad & \langle \Phi|1_\HS \otimes \Gamma^\rdagger(\Psi) \Psi \rangle_* = \langle 1_\HS \otimes \Gamma^\ddagger(\Psi) \Phi|\Psi \rangle_* \\
& \iff & \langle \Phi|P_{\Ran W_\Psi} \Gamma \otimes 1_\HA \Psi \rangle_* = \langle 1_\HS \otimes \Gamma^\ddagger(\Psi) \Phi|\Psi \rangle_*
\end{eqnarray}
\end{prop}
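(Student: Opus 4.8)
The plan is to transport the whole statement into the standard-purification dictionary supplied by the isomorphism $\PSI$, in which the left action, the right action, and the $C^*$-inner product all become concrete operator products. Since $\PSI$ is an isomorphism I would write $\Psi = \PSI(W_\Psi)$ and, for the arbitrary test vector, $\Phi = \PSI(Z)$ with $Z \in \mathcal B(\HS)$ arbitrary, so that ``$\forall \Phi$'' becomes ``$\forall Z \in \mathcal B(\HS)$''. The three identities I would lean on are the left-action rule $\PSI(AW) = A\PSI(W)$, the right-action rule $\PSI(WA) = 1_\HS \otimes A^\transp \PSI(W)$, and the inner-product rule $\langle \PSI(W)|\PSI(Z)\rangle_* = ZW^\dagger$, all established earlier.

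First I would dispose of the equivalence ``$\iff$'': since the two displayed equations share the same right-hand side, it suffices to show that the two vectors paired with $\Phi$ on their left-hand sides coincide, i.e.\ that $1_\HS \otimes \Gamma^\rdagger(\Psi)\,\Psi = P_{\Ran W_\Psi}\Gamma \otimes 1_\HA\,\Psi$. Reading $\Gamma^\rdagger(\Psi)$ off the second line of Definition~1 as the operator on $\HA$ with matrix $(W^\pinv_\Psi \Gamma W_\Psi)^\transp$, the right-action rule gives $1_\HS \otimes \Gamma^\rdagger(\Psi)\,\Psi = \PSI(W_\Psi W^\pinv_\Psi \Gamma W_\Psi)$; the pseudo-inverse identity $W_\Psi W^\pinv_\Psi = P_{\Ran W_\Psi}$ then replaces the first two factors by the projector, and the left-action rule rewrites the result as $P_{\Ran W_\Psi}\Gamma \otimes 1_\HA\,\Psi$. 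This both settles the equivalence and explains the appearance of $P_{\Ran W_\Psi}$: it is forced by the fact that $W_\Psi W^\pinv_\Psi$ is the identity only on faithful operators, and degenerates to a proper projector on the singular strata where $\det W_\Psi = 0$.

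It then remains to verify the adjointness equation itself by computing both $C^*$-inner products. For the left-hand side I would use the vector identity just proved together with $\langle\PSI(Z)|\PSI(V)\rangle_* = VZ^\dagger$, obtaining $\langle\Phi|1_\HS \otimes \Gamma^\rdagger(\Psi)\Psi\rangle_* = P_{\Ran W_\Psi}\Gamma W_\Psi Z^\dagger$. For the right-hand side, setting $M = W^\pinv_\Psi \Gamma W_\Psi$, the dagger on $\HA$ turns $\Gamma^\rdagger(\Psi) = M^\transp$ into $\Gamma^\ddagger(\Psi) = \overline M$, so the right-action rule gives $1_\HS \otimes \Gamma^\ddagger(\Psi)\,\Phi = \PSI(Z M^\dagger)$, and the inner-product rule then yields $\langle 1_\HS \otimes \Gamma^\ddagger(\Psi)\Phi|\Psi\rangle_* = W_\Psi M Z^\dagger = W_\Psi W^\pinv_\Psi \Gamma W_\Psi Z^\dagger = P_{\Ran W_\Psi}\Gamma W_\Psi Z^\dagger$. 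Both sides agree, completing the proof. I expect the only real obstacle to be bookkeeping: keeping the transpositions and complex conjugations straight as one passes between operators on $\HS$ (where $\Gamma$, $W_\Psi$, $W^\pinv_\Psi$ live) and operators on $\HA$ (where $\Gamma^\rdagger(\Psi)$, $\Gamma^\ddagger(\Psi)$ live), and respecting the order-reversal built into the convention $\langle\PSI(W)|\PSI(Z)\rangle_* = ZW^\dagger$.
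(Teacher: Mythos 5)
Your proof is correct and follows essentially the same route as the paper's: both first establish the vector identity $1_\HS\otimes\Gamma^\rdagger(\Psi)\Psi = P_{\Ran W_\Psi}\Gamma\otimes 1_\HA\Psi$ via $W_\Psi W_\Psi^\pinv = P_{\Ran W_\Psi}$, which settles the ``$\iff$'', and then verify the adjointness relation by computing both $C^*$-inner products directly. The only difference is cosmetic: you work at the operator level through the $\PSI$-dictionary ($\PSI(AW)=A\PSI(W)$, $\PSI(WA)=1_\HS\otimes A^\transp\PSI(W)$, $\langle\PSI(W)|\PSI(Z)\rangle_*=ZW^\dagger$), whereas the paper carries out the identical computation in index notation.
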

For this reason, we call $\Gamma^\ddagger$ the $C^*$-adjoint of $\Gamma$.\\
\begin{cor}
\begin{equation}
\forall \Psi \in \HS \otimes \HA, \quad 1_\HS \otimes \Gamma^\rdagger(\Psi)\Psi = P_{\Ran W_\Psi} \Gamma \otimes 1_\HA \Psi
\end{equation}
\end{cor}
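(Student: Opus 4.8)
The statement is presented as a corollary of the preceding Property, and that is exactly the route I would take: deduce it from the adjointness relation together with the non-degeneracy of the $C^*$-module inner product. First I would read the two lines of the Property as a single chain of equalities sharing the common middle term $\langle 1_\HS \otimes \Gamma^\ddagger(\Psi)\Phi|\Psi\rangle_*$, so that for \emph{every} $\Phi \in \HS \otimes \HA$
\[
\langle \Phi | 1_\HS \otimes \Gamma^\rdagger(\Psi)\Psi \rangle_* = \langle 1_\HS \otimes \Gamma^\ddagger(\Psi)\Phi|\Psi\rangle_* = \langle \Phi | P_{\Ran W_\Psi}\Gamma \otimes 1_\HA \Psi \rangle_* .
\]
Using the left $\mathcal B(\HS)$-linearity of $\langle \cdot|\cdot\rangle_*$ in its second slot with the scalar coefficients $1$ and $-1$, this reads $\langle \Phi | \chi \rangle_* = 0$ for all $\Phi$, where $\chi = 1_\HS \otimes \Gamma^\rdagger(\Psi)\Psi - P_{\Ran W_\Psi}\Gamma \otimes 1_\HA \Psi$. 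Specializing to $\Phi = \chi$ gives $\langle \chi|\chi\rangle_* = 0$, whence $\chi = 0$ by the definiteness property $\langle \Psi|\Psi\rangle_* = 0 \iff \Psi = 0$. That is the whole argument.

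If I wanted a proof independent of the Property, I would instead compute both sides directly through the isomorphism $\PSI$. Writing $\Psi = \PSI(W_\Psi)$ with $W_\Psi$ uniquely determined (since $\PSI$ is an isomorphism), I would invoke the third form of the definition, which identifies $\Gamma^\rdagger(\Psi)$ with the $\HA$-operator $(W^\pinv_\Psi \Gamma W_\Psi)^\transp$. Feeding $A = W^\pinv_\Psi \Gamma W_\Psi$ into the right-action intertwining formula $\PSI(WA) = 1_\HS \otimes A^\transp \PSI(W)$ gives
\[
1_\HS \otimes \Gamma^\rdagger(\Psi)\Psi = \PSI\big(W_\Psi W^\pinv_\Psi \Gamma W_\Psi\big).
\]
The pseudo-inverse relation $W_\Psi W^\pinv_\Psi = P_{\Ran W_\Psi}$ then collapses the argument to $\PSI(P_{\Ran W_\Psi}\Gamma W_\Psi)$, and the left-action formula $\PSI(BW) = B \otimes 1_\HA \PSI(W)$ with $B = P_{\Ran W_\Psi}\Gamma$ returns precisely $P_{\Ran W_\Psi}\Gamma \otimes 1_\HA \Psi$, the right-hand side.

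The difficulties here are bookkeeping rather than conceptual. In the first route the one genuine trap is circularity: one must make sure the two right-hand sides of the Property really are the \emph{identical} expression $\langle 1_\HS \otimes \Gamma^\ddagger(\Psi)\Phi|\Psi\rangle_*$, so that the equality of the two left-hand sides is read off from the Property itself and not secretly from the Corollary one is trying to prove; treating the Property as a single equality chain with a shared middle term is what guarantees this. In the second route the sensitive step is matching transpose conventions, namely verifying that $\Gamma^\rdagger(\Psi)$, written with components ${{(W^\pinv_\Psi \Gamma W_\Psi)^\transp}^\alpha}_\beta$ against $|\xi_\alpha\rangle\langle\xi^\beta|$, is exactly $A^\transp$ for $A = W^\pinv_\Psi \Gamma W_\Psi$ in the sense of the map $U \mapsto U^\transp = \langle\zeta_\beta|U\zeta^\alpha\rangle |\xi_\alpha\rangle\langle\xi^\beta|$ entering the right-action formula, and checking that it is $\Ran$ (not $\ker$) that appears, since the relevant identity is $W_\Psi W^\pinv_\Psi = P_{\Ran W_\Psi}$ and not $W^\pinv_\Psi W_\Psi = 1 - P_{\ker W_\Psi}$. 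Neither point is serious, which is consistent with the statement being a corollary.
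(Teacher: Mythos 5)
Your second route is, up to notation, exactly the paper's proof: the paper carries out the index computation ${\Gamma^\rdagger(\Psi)^\alpha}_\beta \Psi^{i\beta}\,\zeta_i\otimes\xi_\alpha = {\Gamma^k}_j\Psi^{j\alpha}{(W_\Psi W_\Psi^\pinv)^i}_k\,\zeta_i\otimes\xi_\alpha = P_{\Ran W_\Psi}\Gamma\otimes 1_\HA\Psi$, which is precisely what your chain $\PSI(W_\Psi A)=1_\HS\otimes A^\transp\,\PSI(W_\Psi)$ with $A=W^\pinv_\Psi\Gamma W_\Psi$, followed by $W_\Psi W^\pinv_\Psi=P_{\Ran W_\Psi}$ and the left-action formula, encapsulates. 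That part is correct and complete, and your two flagged checkpoints (the transpose convention and the use of $\Ran$ rather than $\ker$) are indeed the only places where care is needed.

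The first route, however, is circular \emph{in the context of this paper}, and your stated safeguard does not remove the circularity. The equivalence ``$\iff$'' between the two displayed equations of the Property is exactly the content of the Corollary: in the paper's own proof the Corollary's identity is established first by direct computation, the first displayed equation of the Property is established second, and the second displayed equation is then obtained by substituting the Corollary into the first. So if you take both displayed equations as given and subtract them, you are using the Corollary to prove the Corollary. Reading the Property ``as a single equality chain with a shared middle term'' only restates the assumption that both links hold; it does not establish that the second link was proven independently of the Corollary. The deduction mechanism itself is sound --- additivity of $\langle\cdot|\cdot\rangle_*$ in the second slot gives $\langle\Phi|\chi\rangle_*=0$ for all $\Phi$, and specializing $\Phi=\chi$ with the definiteness of the $C^*$-inner product gives $\chi=0$ --- but the premise it feeds on is downstream of the conclusion. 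Since your second route stands on its own, the proposal as a whole is correct; the first route should either be dropped or be preceded by an independent proof of the second equation of the Property.
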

\begin{proof}
\begin{eqnarray}
1_\HS \otimes  \Gamma^\rdagger(\Psi)\Psi & = & {\Gamma^\rdagger(\Psi)^\alpha}_\beta \Psi^{i \beta} \zeta_i \otimes \xi_\alpha \\
& = & {\Gamma^k}_j \Psi^{j\alpha} (W_\Psi^\pinv)_{\beta k} \Psi^{i \beta} \zeta_i \otimes \xi_\alpha \\
& = & {\Gamma^k}_j \Psi^{j\alpha} {(W_\Psi W_\Psi^\pinv)^i}_k \zeta_i \otimes \xi_\alpha \\
& = & {(P_{\Ran W_\Psi})^i}_k {\Gamma^k}_j \Psi^{j \alpha} \zeta_i \otimes \xi_\alpha \\
& = & P_{\Ran W_\Psi} \Gamma \otimes 1_\HS \Psi
\end{eqnarray}

\begin{eqnarray}
\langle \Phi|1_\HS \otimes \Gamma^\rdagger(\Psi)\Psi \rangle_* & = & \tr_\HA |1_\HS \otimes \Gamma^\rdagger(\Psi)\Psi \rrangle \llangle \Phi| \\
& = & {\Gamma^\rdagger(\Psi)^\alpha}_\beta \Psi^{i \beta} \overline{\Phi_{j \alpha}} |\zeta_i \rangle \langle \zeta^j|
\end{eqnarray}

\begin{eqnarray}
\langle 1_\HS \Gamma^\ddagger(\Psi)\Phi|\Psi\rangle_* & = & \tr_\HA |\Psi \rrangle \llangle 1_\HS \otimes \Gamma^\ddagger(\Psi) \Phi| \\
& = & \Psi^{i \beta} \overline{{\Gamma^\ddagger(\Psi)_\beta}^\alpha} \overline{\Phi_{j \alpha}} |\zeta_i\rangle \langle \zeta^j| \\
& = & \Psi^{i \beta} {\Gamma^\rdagger(\Psi)^\alpha}_\beta \overline{\Phi_{j \alpha}} |\zeta_i\rangle \langle \zeta^j|
\end{eqnarray}
\end{proof}
It can be interesting to relate $\Ran W_\Psi$ (which appears in the definition of the $C^*$-adjoint) with the density operator.
\begin{prop}
Let $\Psi \in \HS \otimes \HA$, $W_\Psi \in \mathcal B(\HS)$ be such that $\PSI(W_\Psi) = \Psi$, and $\rho_\Psi = \|\Psi\|^2_* = \pi_{\mathcal A}(\Psi) = W_\Psi W_\Psi^\dagger \in \mathcal D(\HS)$. We have $\Ran W_\Psi = \Ran \rho_\Psi$ and $\coker W_\Psi = \ker \rho_\Psi$.
\end{prop}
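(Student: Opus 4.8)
The plan is to reduce both identities to one elementary fact about positive factorizations, namely that $W_\Psi W_\Psi^\dagger$ and $W_\Psi^\dagger$ share the same kernel, and then to exploit the self-adjointness of $\rho_\Psi$ together with finite dimensionality (which guarantees that all ranges are closed, so that a subspace coincides with its double orthogonal complement). Throughout I take the cokernel of an operator on $\HS$ in its Hilbert-space incarnation, i.e.\ $\coker W_\Psi = (\Ran W_\Psi)^\perp$, which in finite dimension is canonically $\ker W_\Psi^\dagger$.

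First I would establish $\ker \rho_\Psi = \ker W_\Psi^\dagger$. Since $\rho_\Psi = W_\Psi W_\Psi^\dagger$, for every $x \in \HS$ one has the positivity identity $\langle x | \rho_\Psi x \rangle = \langle x | W_\Psi W_\Psi^\dagger x \rangle = \| W_\Psi^\dagger x \|^2$. Hence $\rho_\Psi x = 0$ forces $\langle x | \rho_\Psi x \rangle = 0$, thus $W_\Psi^\dagger x = 0$; the reverse inclusion $\ker W_\Psi^\dagger \subseteq \ker \rho_\Psi$ is immediate from $\rho_\Psi = W_\Psi W_\Psi^\dagger$. This gives $\ker \rho_\Psi = \ker W_\Psi^\dagger$. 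The second claimed equality then follows at once, because $\coker W_\Psi = (\Ran W_\Psi)^\perp = \ker W_\Psi^\dagger = \ker \rho_\Psi$.

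For the first equality I would use that $\rho_\Psi$ is self-adjoint, so that $\Ran \rho_\Psi = (\ker \rho_\Psi^\dagger)^\perp = (\ker \rho_\Psi)^\perp$. Substituting the kernel computed above yields $\Ran \rho_\Psi = (\ker W_\Psi^\dagger)^\perp = \bigl((\Ran W_\Psi)^\perp\bigr)^\perp = \Ran W_\Psi$, where the last step uses that in the finite-dimensional space $\HS$ the range $\Ran W_\Psi$ is a closed subspace and hence equals its double orthogonal complement. I do not expect any real obstacle here: the whole argument is the positivity identity plus orthogonal-complement bookkeeping, and the only point requiring care is fixing the convention $\coker W_\Psi = (\Ran W_\Psi)^\perp \simeq \ker W_\Psi^\dagger$ and invoking closedness of ranges, both of which are automatic because $\HS \simeq \mathbb{C}^n$ is finite dimensional.
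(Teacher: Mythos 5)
Your proof is correct, and your convention $\coker W_\Psi = (\Ran W_\Psi)^\perp$ agrees with the one the paper uses implicitly (it writes $\HS = \Ran W_\Psi \overset{\bot}{\oplus} \coker W_\Psi$). Your route is genuinely more economical than the paper's. You reduce everything to the single lemma $\ker \rho_\Psi = \ker W_\Psi^\dagger$, obtained from the positivity identity $\langle x|\rho_\Psi x\rangle = \|W_\Psi^\dagger x\|^2$, and then get both claims by orthogonal-complement bookkeeping: $\coker W_\Psi = \ker W_\Psi^\dagger = \ker\rho_\Psi$ directly, and $\Ran\rho_\Psi = (\ker\rho_\Psi)^\perp = (\ker W_\Psi^\dagger)^\perp = \Ran W_\Psi$ using self-adjointness of $\rho_\Psi$ and closedness of ranges in finite dimension. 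The paper instead proves three separate inclusions in coordinates: $\coker W_\Psi \subset \ker\rho_\Psi$ and $\Ran\rho_\Psi \subset \Ran W_\Psi$ by direct index manipulation, and $\Ran W_\Psi \subset \Ran\rho_\Psi$ by passing to the eigenbasis of $\rho_\Psi$ and observing $p^i = \|\Psi^i\|^2_\HA$ (which is your positivity identity in disguise, evaluated on eigenvectors); it then closes the cokernel equality with a rank--nullity dimension count rather than with double orthocomplements. Both arguments rest on the same two facts --- positivity of $W_\Psi W_\Psi^\dagger$ and finite dimensionality --- but yours isolates them cleanly and avoids both the choice of eigenbasis and the dimension count, at the cost of nothing; the paper's version has the minor virtue of exhibiting the explicit component formula $\rho_\Psi = \Psi^{i\alpha}\overline{\Psi}_{j\alpha}|\zeta_i\rangle\langle\zeta^j|$, which it reuses elsewhere.
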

\begin{proof}
$\coker W_\psi = \{\phi \in \HS, \Psi^{i \alpha} \overline \phi_i = 0\}$. $\phi \in \coker W_\Psi \Rightarrow \overline \phi_i \Psi^{i\alpha} \overline \Psi_{j\alpha} = 0 \Rightarrow \overline \phi_i {(\rho_\Psi)^i}_j = 0$ since $\rho_\Psi = \Psi^{i \alpha} \overline \Psi_{j \alpha} |\zeta_i \rangle \langle \zeta^j|$. We have then $\langle \phi|\rho_\Psi = 0 \Rightarrow \rho_\Psi |\phi \rangle = 0 \Rightarrow \phi \in \ker \rho_\Psi$. We have then $\coker W_\Psi \subset \ker \rho_\Psi$.\\
${(\rho_\Psi)^i}_j |\zeta_i \rangle = \Psi^{i\alpha} \overline \Psi_{j \alpha} |\zeta_i \rangle \Rightarrow \forall \phi \in \HS, {(\rho_\Psi)^i}_j \phi^j |\zeta_i \rangle = \Psi^{i\alpha} \overline \Psi_{j \alpha} \phi^j|\zeta_i \rangle$. We have then $\forall \phi \in \HS, \rho_\Psi \phi = W_\Psi (W_\Psi^\dagger \phi) \Rightarrow \rho_\Psi \phi \in \Ran W_\Psi$. We have then $\Ran \rho_\Psi \subset \Ran W_\Psi$.\\
Suppose that $(\zeta_i)_i$ the basis of $\HS$ is the eigenbasis of $\rho_\Psi$. ${(\rho_\Psi)^i}_j = \Psi^{i\alpha} \overline \Psi_{j\alpha} = p^i \delta^i_j$. We can write $\langle \Psi_j|\Psi^i \rangle_\HA = p^i \delta^i_j$ with $\Psi^i = \Psi^{i\alpha} \xi_\alpha \in \HA$. We have then $p^i = \|\Psi^i\|^2_\HA$. It follows that $p^i = 0 \Rightarrow \Psi^i = 0 \Rightarrow \Psi^{i\alpha} = 0, \forall \alpha$. $\Ran W_\Psi = \{\Psi^{i\alpha} \phi_\alpha |\zeta_i \rangle, \phi_\alpha \in \mathbb C\}$ and then $\Ran W_\Psi \subset \coRan \rho_\Psi$ ($\coRan \rho_\Psi = (\ker \rho_\Psi)^\bot$). But $\coRan \rho_\Psi = \Ran \rho_\Psi^\dagger = \Ran \rho_\Psi$ ($\rho_\Psi^\dagger = \rho_\Psi$). We have then $\Ran W_\Psi \subset \Ran \rho_\Psi$. \\
The two last paragraphs show that $\Ran W_\Psi = \Ran \rho_\Psi$. The first one shows that $\coker W_\Psi \subset \ker \rho_\Psi$ but $\dim \coker W_\Psi = \dim \HS - \dim \Ran W_\Psi$ (by definition $\HS = \Ran W_\Psi \overset{\bot}{\oplus} \coker W_\Psi$). We have then $\dim \coker W_\Psi = \dim \HS - \dim \Ran \rho_\Psi = \dim \ker \rho_\Psi$ (by virtue of the rank-nullity theorem). This induces that $\coker W_\Psi = \ker \rho_\Psi$.
\end{proof}

\section{Purification of Lindblad dynamics}
\subsection{From the Lindblad equation to the nonlinear Schr\"odinger equation}
\begin{theo}
Let $t\mapsto \rho(t) \in \mathcal D_0(\HS)$ be a solution of the Lindblad equation
\begin{equation}
\ihbar \dot \rho = [H_{\mathcal S},\rho] - \frac{\imath}{2} \gamma^k \{\Gamma^\dagger_k \Gamma_k,\rho\} + \imath \gamma^k \Gamma_k \rho \Gamma_k^\dagger
\end{equation}
with $H_{\mathcal S} \in \mathcal B(\HS)$ the system Hamiltonian, $\Gamma_k \in \mathcal B(\HS)$ the quantum jump operators and $\gamma^k \in [0,1]$ the relaxation rates. Let $t \mapsto \Psi_\rho(t) \in \HS \otimes \HA$ be a purification state of $\rho(t)$. $\Psi_\rho$ is solution of the following projected non-hermitian nonlinear Schr\"odinger equation:
\begin{eqnarray}
& & (1_\HS \otimes P_{\Ran \rho}) \ihbar \dot \Psi_\rho \nonumber \\
\label{PNLSE} & & = (1_\HS \otimes P_{\Ran \rho}) \left(H_{\mathcal S}\otimes 1_\HA \Psi_\rho - \frac{\imath}{2} \gamma^k \Gamma^\dagger_k \Gamma_k \otimes 1_\HA \Psi_\rho + \frac{\imath}{2} \gamma^k \Gamma_k \otimes \Gamma_k^\ddagger(\Psi_\rho) \Psi_\rho \right)
\end{eqnarray}
\end{theo}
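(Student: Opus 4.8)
The plan is to transport the entire statement into the standard purification picture $\mathcal B(\HS)$, where the partial trace becomes $W \mapsto WW^\dagger$ and every operator acting on $\HS \otimes \HA$ becomes a left or right matrix multiplication. Concretely I would write $\Psi_\rho = \PSI(W_\rho)$ with $W_\rho \in \mathcal B(\HS)$, so that $\rho = W_\rho W_\rho^\dagger$ and, by Property 3, $\Ran W_\rho = \Ran \rho$ and $\coker W_\rho = \ker\rho$. Using $\PSI(AW) = A \otimes 1_\HA\, \PSI(W)$ and $\PSI(WA) = 1_\HS \otimes A^\transp\, \PSI(W)$, the coherent and damping terms $H_{\mathcal S} \otimes 1_\HA \Psi_\rho$ and $\Gamma_k^\dagger \Gamma_k \otimes 1_\HA \Psi_\rho$ become $\PSI(H_{\mathcal S} W_\rho)$ and $\PSI(\Gamma_k^\dagger \Gamma_k W_\rho)$, reducing the whole equation to a first-order matrix ODE for $t \mapsto W_\rho(t)$.

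The only term requiring care is the nonlinear one. From Definition 1 one reads off $\Gamma_k^\ddagger(\Psi_\rho)^\transp = W_\rho^\dagger \Gamma_k^\dagger (W_\rho^\dagger)^\pinv$, so applying $\PSI(WA) = 1_\HS \otimes A^\transp \PSI(W)$ and then the left action of $\Gamma_k$ yields $\Gamma_k \otimes \Gamma_k^\ddagger(\Psi_\rho)\, \Psi_\rho = \PSI\big(\Gamma_k\, \rho\, \Gamma_k^\dagger (W_\rho^\dagger)^\pinv\big)$, where I used $W_\rho W_\rho^\dagger = \rho$. Likewise, through $\PSI$ and Property 3, the factor $1_\HS \otimes P_{\Ran \rho}$ acts as the support projection onto $\Ran \rho = \Ran W_\rho$. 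Introducing the non-hermitian generator $K = H_{\mathcal S} - \tfrac{\imath}{2}\gamma^k \Gamma_k^\dagger \Gamma_k$, the statement to be proved is then equivalent to the projected matrix equation $P_{\Ran\rho}\, \ihbar \dot W_\rho = P_{\Ran\rho}\big(K W_\rho + \tfrac{\imath}{2}\gamma^k \Gamma_k \rho \Gamma_k^\dagger (W_\rho^\dagger)^\pinv\big)$.

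The heart of the proof is to check that this projected $W$-equation is consistent with the Lindblad equation through $\rho = W_\rho W_\rho^\dagger$. I would differentiate, $\ihbar \dot\rho = \ihbar \dot W_\rho W_\rho^\dagger + \ihbar W_\rho \dot W_\rho^\dagger$, substitute the $W$-equation, and use the pseudo-inverse identity $(W_\rho^\dagger)^\pinv W_\rho^\dagger = P_{\Ran W_\rho} = P_{\Ran\rho}$: the nonlinear contribution collapses to $\tfrac{\imath}{2}\gamma^k \Gamma_k \rho \Gamma_k^\dagger$, the coherent part to $K\rho$, and adding the hermitian conjugate $\ihbar W_\rho \dot W_\rho^\dagger = -(\ihbar \dot W_\rho W_\rho^\dagger)^\dagger$ reconstructs exactly $[H_{\mathcal S},\rho] - \tfrac{\imath}{2}\gamma^k \{\Gamma_k^\dagger \Gamma_k,\rho\} + \imath \gamma^k \Gamma_k \rho \Gamma_k^\dagger$. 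On a regular stratum ($\det\rho \neq 0$) one has $P_{\Ran\rho} = 1_\HS$ and $W_\rho^\pinv = W_\rho^{-1}$, so this substitution is a clean one-line verification and the Schr\"odinger equation holds without projection.

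I expect the main obstacle to be the singular strata, where $\ker\rho \neq \{0\}$. There the purification $W_\rho$ is determined only up to the ancilla gauge $W_\rho \mapsto W_\rho U$ ($U \in U(\HA)$), and the partial trace forgets everything about $W_\rho$ on $\coker W_\rho = \ker\rho$; consequently the Lindblad equation fixes $\dot W_\rho$ only modulo contributions annihilated by $W_\rho^\dagger$ and by the trace. The projector $1_\HS \otimes P_{\Ran\rho}$ is inserted precisely to quotient out this undetermined part, and the delicate step is to show that the terms discarded in passing from $\dot W_\rho W_\rho^\dagger$ to its projected form take values in $\ker\rho$, so that they are invisible to $\pi_{HS}$. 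This is controlled by Property 3 together with the pseudo-inverse relations $W_\rho^\pinv W_\rho = 1 - P_{\ker W_\rho}$ and $W_\rho W_\rho^\pinv = P_{\Ran W_\rho}$, which ensure that $(W_\rho^\dagger)^\pinv$ inverts $W_\rho^\dagger$ only on $\Ran\rho$ and kills $\ker\rho$; keeping these projections consistent across the two tensor factors is where the real work lies.
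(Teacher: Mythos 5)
Your setup --- passing to the standard purification $W_\rho$ with $\rho = W_\rho W_\rho^\dagger$ and identifying $\Gamma_k\otimes\Gamma_k^\ddagger(\Psi_\rho)\,\Psi_\rho = \PSI\big(\Gamma_k\rho\Gamma_k^\dagger(W_\rho^\dagger)^\pinv\big)$ --- is exactly the paper's, but the logical direction of your main step is reversed, and that reversal leaves a genuine gap. The theorem asserts that a purification of a Lindblad solution obeys the projected equation; the paper proves it by observing that the Lindblad equation determines only the anti-self-adjoint part of $X=\ihbar\dot W_\rho W_\rho^\dagger$, so that $X = H_{\mathcal S}\rho - \frac{\imath}{2}\gamma^k\Gamma_k^\dagger\Gamma_k\rho + \frac{\imath}{2}\gamma^k\Gamma_k\rho\Gamma_k^\dagger + K$ with $K=K^\dagger$ arbitrary, then fixing the gauge by $K=0$ and right-multiplying by $(W_\rho^\dagger)^\pinv$. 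That additive self-adjoint freedom is the essential point: it is present even on regular strata where $W_\rho$ is invertible (it encodes the ancilla gauge $W_\rho\mapsto W_\rho U(t)$), and it is not the same thing as the part of $\dot W_\rho$ killed by the projector, which is where your last paragraph locates the whole ambiguity. Without the decomposition $X=Y+K$ and the choice $K=0$ you have no route from the Lindblad equation to any equation for $\dot W_\rho$.

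What you actually check is the converse implication (projected $W$-equation $\Rightarrow$ Lindblad), and that check fails precisely where the projection matters. Substituting your equation into $\ihbar\dot\rho = \ihbar\dot W_\rho W_\rho^\dagger + \ihbar W_\rho\dot W_\rho^\dagger$ and using $(W_\rho^\dagger)^\pinv W_\rho^\dagger = P_{\Ran\rho}$, the jump term collapses to $\frac{\imath}{2}\gamma^k\Gamma_k\rho\Gamma_k^\dagger P_{\Ran\rho}$, and adding the adjoint gives $\frac{\imath}{2}\gamma^k\{\Gamma_k\rho\Gamma_k^\dagger,P_{\Ran\rho}\}$, not $\imath\gamma^k\Gamma_k\rho\Gamma_k^\dagger$; on a singular stratum where $\Gamma_k$ moves population out of $\Ran\rho$ these differ, and what you recover is the modified master equation of the paper's second theorem rather than the Lindblad equation, so ``reconstructs exactly'' is false there. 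A smaller but real slip: $1_\HS\otimes P_{\Ran\rho}$ acts on the ancilla factor, so under $\PSI^{-1}$ it is the right multiplication $\ihbar\dot W_\rho P_{\coRan W_\rho}=\cdots$, not the left-projected equation $P_{\Ran\rho}\,\ihbar\dot W_\rho = P_{\Ran\rho}(\cdots)$ that you wrote. On regular strata your verification, supplemented by an existence-and-uniqueness argument for the nonlinear ODE, would deliver the result; but the content of the theorem is the projected equation on singular strata, and there your argument does not close.
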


\begin{proof}
Let $W_\rho(t) \in \mathcal B(\HS)$ be a standard purification of $\rho$, $\rho = W_\rho W_\rho^\dagger$, such that $\PSI(W_\rho) = \Psi_\rho$.
\begin{eqnarray}
& & \ihbar \dot \rho = \mathcal L(\rho) \\
& \Rightarrow & \ihbar \dot W_\rho W^\dagger_\rho + \ihbar W_\rho \dot W_\rho^\dagger = [H_{\mathcal S},\rho] - \frac{\imath}{2} \gamma^k \{\Gamma^\dagger_k \Gamma_k,\rho\} +2 \times \frac{\imath}{2} \gamma^k \Gamma_k \rho \Gamma_k^\dagger \\
  & \Rightarrow & \ihbar \dot W_\rho W_\rho^\dagger - (\ihbar \dot W_\rho W_\rho^\dagger)^\dagger = H_{\mathcal S} \rho -\frac{\imath}{2} \gamma^k \Gamma_k^\dagger \Gamma_k \rho + \frac{\imath}{2} \gamma^k \Gamma_k \rho \Gamma_k^\dagger \nonumber \\
  & & \qquad - \left(H_{\mathcal S} \rho -\frac{\imath}{2} \gamma^k \Gamma_k^\dagger \Gamma_k \rho + \frac{\imath}{2} \gamma^k \Gamma_k \rho \Gamma_k^\dagger\right)^\dagger \\
& \Rightarrow & \ihbar \dot W_\rho W_\rho^\dagger = H_{\mathcal S} \rho -\frac{\imath}{2} \gamma^k \Gamma_k^\dagger \Gamma_k \rho + \frac{\imath}{2} \gamma^k \Gamma_k \rho \Gamma_k^\dagger + K
\end{eqnarray}
where $K=K^\dagger$ is an arbitrary self-adjoint operator. We can set $K=0$ without loss of generality.
\begin{eqnarray}
& & \ihbar \dot W_\rho W_\rho^\dagger = H_{\mathcal S} W_\rho W_\rho^\dagger -\frac{\imath}{2} \gamma^k \Gamma_k^\dagger \Gamma_k W_\rho W_\rho^\dagger + \frac{\imath}{2} \gamma^k \Gamma_k W_\rho W_\rho^\dagger \Gamma_k^\dagger \\
  & \Rightarrow & \ihbar \dot W_\rho P_{\coRan W_\rho} = H_{\mathcal S} W_\rho P_{\coRan W_\rho}  -\frac{\imath}{2} \gamma^k \Gamma_k^\dagger \Gamma_k W_\rho  P_{\coRan W_\rho} \nonumber \\
  & & \qquad + \frac{\imath}{2} \gamma^k \Gamma_k W_\rho W_\rho^\dagger \Gamma_k^\dagger (W_\rho^\dagger)^\pinv
\end{eqnarray}
By application of $\PSI$ on this last equation we find
\begin{equation}
1_\HS \otimes P_{\Ran \rho} \ihbar \dot \Psi_\rho = H_{\mathcal S}\otimes P_{\Ran \rho} \Psi_\rho - \frac{\imath}{2} \gamma^k \Gamma^\dagger_k \Gamma_k \otimes P_{\Ran \rho} \Psi_\rho + \frac{\imath}{2} \gamma^k \Gamma_k \otimes (W_\rho^\dagger \Gamma_k^\dagger (W^\dagger_\rho)^\pinv)^\transp \Psi_\rho
\end{equation}
But $(W_\rho^\dagger \Gamma_k^\dagger (W^\dagger_\rho)^\pinv)^\transp = (W_\rho^\pinv \Gamma_k W_\rho)^{\dagger \transp} = \Gamma^\ddagger_k(\Psi_\rho)$.
\end{proof}

On the regular strata, no projection occurs and the purification state is solution of the nonlinear Schr\"odinger equation:
\begin{equation}
\ihbar \dot \Psi_\rho = \left(H_{\mathcal S} - \frac{\imath}{2} \gamma^k \Gamma^\dagger_k \Gamma_k\right) \otimes 1_\HA \Psi_\rho + \frac{\imath}{2} \gamma^k \Gamma_k \otimes \Gamma_k^\ddagger(\Psi_\rho) \Psi_\rho
\end{equation}

A comparison of this equation with other Sch\"odinger equations associated with open quantum systems can be found in \ref{othereq}.

\subsection{From the nonlinear Schr\"odinger equation to the Lindblad equation}
\begin{theo}
Let $\Psi \in \HS \otimes \HA$ be a solution of the nonlinear Schr\"odinger equation:
\begin{equation}
\label{NLSE}
\ihbar \dot \Psi = \left(H_{\mathcal S} - \frac{\imath}{2} \gamma^k \Gamma^\dagger_k \Gamma_k\right) \otimes 1_\HA \Psi + \frac{\imath}{2} \gamma^k \Gamma_k \otimes \Gamma_k^\ddagger(\Psi) \Psi
\end{equation}
Then $\rho_\Psi = \pi_{\mathcal A}(\Psi) = \|\Psi\|^2_* \in \mathcal D(\HS)$ is solution of the following master equation:
\begin{equation}
\label{PLE}
\ihbar \dot \rho_\Psi = [H_{\mathcal S},\rho_\Psi] - \frac{\imath}{2} \gamma^k \{\Gamma_k^\dagger \Gamma_k, \rho_\Psi\} + \frac{\imath}{2} \gamma^k \{\Gamma_k \rho_\Psi \Gamma_k^\dagger,P_{\Ran \rho_\Psi}\}
\end{equation}
\end{theo}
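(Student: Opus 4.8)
The plan is to reverse the computation of Theorem~1: lift the nonlinear Schr\"odinger equation \eqref{NLSE} through the fixed linear isomorphism $\PSI$ to an ordinary (nonlinear) differential equation for the standard purification $W_\Psi \in \mathcal B(\HS)$ defined by $\PSI(W_\Psi)=\Psi$, and then differentiate $\rho_\Psi = W_\Psi W_\Psi^\dagger$ directly. Since $\PSI$ does not depend on time, $\dot\Psi = \PSI(\dot W_\Psi)$, so the whole problem reduces to rewriting each term of \eqref{NLSE} as a left multiplication on $W_\Psi$.

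First I would transport the nonlinear jump term. Using the intertwining relations $\PSI(AW)=A\,\PSI(W)$ and $\PSI(WA)=1_\HS\otimes A^\transp\,\PSI(W)$ together with the defining formula $\Gamma_k^\ddagger(\Psi)=(W_\Psi^\pinv\Gamma_k W_\Psi)^{\dagger\transp}$ and the pseudo-inverse rule $(W_\Psi^\pinv)^\dagger=(W_\Psi^\dagger)^\pinv$, one gets
\begin{equation}
\Gamma_k\otimes\Gamma_k^\ddagger(\Psi)\,\Psi = \PSI\!\left(\Gamma_k\, W_\Psi W_\Psi^\dagger\,\Gamma_k^\dagger\,(W_\Psi^\dagger)^\pinv\right) = \PSI\!\left(\Gamma_k\,\rho_\Psi\,\Gamma_k^\dagger\,(W_\Psi^\dagger)^\pinv\right).
\end{equation}
Applying $\PSI^{-1}$ to \eqref{NLSE} then produces the closed evolution equation
\begin{equation}
\ihbar\dot W_\Psi = \left(H_{\mathcal S}-\tfrac{\imath}{2}\gamma^k\Gamma_k^\dagger\Gamma_k\right)W_\Psi + \tfrac{\imath}{2}\gamma^k\,\Gamma_k\,\rho_\Psi\,\Gamma_k^\dagger\,(W_\Psi^\dagger)^\pinv .
\end{equation}

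Next I would differentiate $\rho_\Psi=W_\Psi W_\Psi^\dagger$ and symmetrize. Setting $A=\ihbar\dot W_\Psi W_\Psi^\dagger$, one has $\ihbar W_\Psi\dot W_\Psi^\dagger=-A^\dagger$, hence $\ihbar\dot\rho_\Psi = A-A^\dagger$. The Hamiltonian and damping parts of $A$ contribute $H_{\mathcal S}\rho_\Psi$ and $-\tfrac{\imath}{2}\gamma^k\Gamma_k^\dagger\Gamma_k\rho_\Psi$, whose anti-Hermitian combinations give $[H_{\mathcal S},\rho_\Psi]$ and $-\tfrac{\imath}{2}\gamma^k\{\Gamma_k^\dagger\Gamma_k,\rho_\Psi\}$. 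The decisive manipulation is the pseudo-inverse identity $(W_\Psi^\dagger)^\pinv W_\Psi^\dagger = (W_\Psi W_\Psi^\pinv)^\dagger = P_{\Ran W_\Psi} = P_{\Ran\rho_\Psi}$, the last equality being Property~2, which turns the jump contribution of $A$ into $\tfrac{\imath}{2}\gamma^k\Gamma_k\rho_\Psi\Gamma_k^\dagger P_{\Ran\rho_\Psi}$; the $A-A^\dagger$ symmetrization then yields precisely $\tfrac{\imath}{2}\gamma^k\{\Gamma_k\rho_\Psi\Gamma_k^\dagger,P_{\Ran\rho_\Psi}\}$, which is \eqref{PLE}.

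The only delicate point I anticipate is the bookkeeping of transposition, Hermitian conjugation and pseudo-inversion as they are carried through $\PSI$ and the Hilbert--Schmidt identification; in particular one must recognize that $(W_\Psi^\dagger)^\pinv W_\Psi^\dagger$ is the range projector $P_{\Ran\rho_\Psi}$ rather than $1_\HS$. This is exactly what inserts the projector into the master equation, and it also explains consistency with Theorem~1: on a regular stratum $P_{\Ran\rho_\Psi}=1_\HS$, the anticommutator collapses to $2\Gamma_k\rho_\Psi\Gamma_k^\dagger$, and \eqref{PLE} reduces to the genuine Lindblad jump term $\imath\gamma^k\Gamma_k\rho_\Psi\Gamma_k^\dagger$.
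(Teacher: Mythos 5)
Your proof is correct, but it takes a different (and in fact converse-symmetric) route from the paper's. The paper never leaves the enlarged Hilbert space: it differentiates the rank-one projector $P_\Psi=|\Psi\rrangle\llangle\Psi|$, applies $\tr_\HA$, and then uses the sesquilinearity of $\langle\cdot|\cdot\rangle_*$ together with the Corollary $1_\HS\otimes\Gamma_k^\rdagger(\Psi)\Psi=P_{\Ran W_\Psi}\Gamma_k\otimes 1_\HA\Psi$ to convert $\langle\Psi|\Gamma_k\otimes\Gamma_k^\ddagger(\Psi)\Psi\rangle_*$ and its conjugate into $\Gamma_k\rho_\Psi\Gamma_k^\dagger P_{\Ran\rho_\Psi}$ and $P_{\Ran\rho_\Psi}\Gamma_k\rho_\Psi\Gamma_k^\dagger$; the projector thus enters through the $C^*$-adjointness property. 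You instead pull equation \eqref{NLSE} back through $\PSI$ to the closed ODE $\ihbar\dot W_\Psi=H^{eff}W_\Psi+\frac{\imath}{2}\gamma^k\Gamma_k\rho_\Psi\Gamma_k^\dagger(W_\Psi^\dagger)^\pinv$ and differentiate $\rho_\Psi=W_\Psi W_\Psi^\dagger$, with the projector entering through $(W_\Psi^\dagger)^\pinv W_\Psi^\dagger=P_{\Ran W_\Psi}=P_{\Ran\rho_\Psi}$ (Property 2). Your transported jump term $\Gamma_k\otimes\Gamma_k^\ddagger(\Psi)\Psi=\PSI\bigl(\Gamma_k\rho_\Psi\Gamma_k^\dagger(W_\Psi^\dagger)^\pinv\bigr)$ is exactly the relation used in the proof of Theorem 1 read backwards, so your argument makes the two theorems visibly inverse to one another (up to the gauge choice $K=0$ there) and is arguably more elementary, at the price of the transpose/dagger/pseudo-inverse bookkeeping you rightly flag; the paper's version keeps the arbitrary-ancilla picture and exhibits the projector as an intrinsic feature of the $C^*$-adjoint rather than of the matrix representative $W_\Psi$. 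Both the decisive identity and the final symmetrization $A-A^\dagger$ are handled correctly in your write-up.
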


\begin{proof}
Let $H_{\mathcal U}(\bullet) = (H_{\mathcal S} - \frac{\imath}{2} \gamma^k \Gamma^\dagger_k \Gamma_k) \otimes 1_\HA + \frac{\imath}{2} \gamma^k \Gamma_k \otimes \Gamma_k^\ddagger(\bullet)$ be the nonlinear operator of the purified dynamics and $P_\Psi = |\Psi \rrangle \llangle \Psi|$.
\begin{eqnarray}
\ihbar \dot P_\Psi & = & |\ihbar \dot \Psi \rrangle \llangle \Psi| - |\Psi \rrangle \llangle \ihbar \dot \Psi| \\
& = & |H_{\mathcal U}(\Psi)\Psi \rrangle \llangle \Psi| - |\Psi \rrangle \llangle H_{\mathcal U}(\Psi)\Psi| \\
& = & [H_{\mathcal S}\otimes 1_\HA,P_\Psi] -\frac{\imath}{2}\gamma^k\{\Gamma_k^\dagger\Gamma_k \otimes 1_\HA,P_\Psi\} + \frac{\imath}{2}\gamma^k \Gamma_k \otimes \Gamma_k^\ddagger(\Psi) P_\Psi \nonumber \\
& & \qquad + \frac{\imath}{2}\gamma^k P_\Psi \Gamma_k \otimes \Gamma_k^\ddagger(\Psi)  
\end{eqnarray}
By applying the partial trace $\tr_\HA$ on this last equation, we find
\begin{equation}
\ihbar \dot \rho_\Psi = [H_{\mathcal S},\rho_\Psi] -\frac{\imath}{2}\gamma^k\{\Gamma_k^\dagger\Gamma_k,\rho_\Psi\} + \frac{\imath}{2}\gamma^k \langle \Psi|\Gamma_k \otimes \Gamma_k^\ddagger(\Psi) \Psi \rangle_* + \frac{\imath}{2} \gamma^k \langle \Gamma_k \otimes \Gamma_k^\ddagger(\Psi) \Psi|\Psi\rangle_*
\end{equation}
Since $\Gamma_k \otimes \Gamma_k^\ddagger(\Psi) = (1_\HS \otimes \Gamma_k^\ddagger(\Psi)) (\Gamma_k \otimes 1_\HA)$ we have
\begin{eqnarray}
  \ihbar \dot \rho_\Psi & = & [H_{\mathcal S},\rho_\Psi] -\frac{\imath}{2}\gamma^k\{\Gamma_k^\dagger\Gamma_k,\rho_\Psi\} + \frac{\imath}{2}\gamma^k \langle 1_\HS \otimes \Gamma_k^\rdagger(\Psi) \Psi|\Gamma_k \otimes 1_\HA \Psi \rangle_* \nonumber \\
  & & \qquad + \frac{\imath}{2} \gamma^k \langle \Gamma_k \otimes 1_\HA \Psi|  1_\HS \otimes \Gamma_k^\rdagger(\Psi)\Psi\rangle_* \\
& = & [H_{\mathcal S},\rho_\Psi] -\frac{\imath}{2}\gamma^k\{\Gamma_k^\dagger\Gamma_k,\rho_\Psi\} + \frac{\imath}{2}\gamma^k \langle P_{\Ran \rho_\Psi} \Gamma_k \otimes 1_\HA \Psi|\Gamma_k \otimes 1_\HA \Psi \rangle_* \nonumber \\
& & \qquad + \frac{\imath}{2} \gamma^k \langle \Gamma_k \otimes 1_\HA \Psi|  P_{\Ran \rho_\Psi} \Gamma_k \otimes 1_\HA \Psi\rangle_* \\
& = & [H_{\mathcal S},\rho_\Psi] -\frac{\imath}{2}\gamma^k\{\Gamma_k^\dagger\Gamma_k,\rho_\Psi\} +  \frac{\imath}{2}\gamma^k \Gamma_k \rho_\Psi \Gamma_k^\dagger P_{\Ran \rho_\Psi}  +  \frac{\imath}{2}\gamma^k P_{\Ran \rho_\Psi} \Gamma_k \rho_\Psi \Gamma_k^\dagger
\end{eqnarray}
\end{proof}
On the regular strata, $P_{\Ran \rho_\Psi} = 1_\HS$ and the master equation reduces to the Lindblad equation:
\begin{equation}
\ihbar \dot \rho_\Psi = [H_{\mathcal S},\rho_\Psi] - \frac{\imath}{2} \gamma^k \{\Gamma_k^\dagger \Gamma_k, \rho_\Psi\} + \imath \gamma^k \Gamma_k \rho_\Psi \Gamma_k^\dagger
\end{equation}

\section{Geometric phases of open quantum systems}
\subsection{The different notions of operator valued phases}
For the open quantum systems, the notion of phase cannot be the same that for the closed systems. In accordance with the $C^*$-module structure of the purification space $\HS \otimes \HA$, a phase for an open quantum system is certainly an operator in the $C^*$-algebra $\mathcal B(\HS)$ (we recall that a $C^*$-module mimics the structure of vector space by replacing the ring $\mathbb C$ by a $C^*$-algebra).
\begin{defi}
 We define two different notions of phase in the $C^*$-module for a state $\Psi \in \HS \otimes \HA$:
\begin{itemize}
\item We call phase by invariance, an operator $g\in GL(\HS)$ leaving invariant the $C^*$-norm:
\begin{equation}
\|g\Psi\|^2_* = \|\Psi\|^2_*
\end{equation}
\item We call phase by (unitary) equivariance, an operator $g\in GL(\HS)$ leaving equivariant the $C^*$-norm:
\begin{equation}
\|g\Psi\|^2_* = g\|\Psi\|^2_* g^{-1}
\end{equation}
\end{itemize}
\end{defi}
For an abelian $C^*$-algebra (as $\mathbb C$) the two notions are equivalent. We can note that $\forall g \in GL(\HS)$, $\forall \Psi \in \HS \otimes \HA$ we have
\begin{equation}
\|g\Psi\|^2_* = g\|\Psi\|^2_* g^\dagger
\end{equation}
and then each (SLOCC or isospectral) tranformation of $\mathcal S$ appears as a phase by non-unitary equivariance (a concept similar with the case of the non-hermitian quantum systems: due to the non-conservation of the norm during the dynamics, we consider ``non-unitary phases'' $g \in \mathbb C^*$ corresponding to a change of norm:$\|g\psi\|^2 = |g|^2 \|\psi\|^2$ with $\psi \in \HS$).\\
Let $\rho \in \mathcal D(\HS)$. The group of phases by invariance of $\pi^{-1}_{\mathcal A}(\rho)$ is the stabilizer (the isotropy subgroup) of $\rho$ for the left-right action: $GL(\HS)_\rho = \{g \in GL(\HS), g\rho g^\dagger = \rho \}$. The group of phases by equivariance is $U(\HS) \times GL(\HS)_\rho$. But since we will use only phases by invariance for isospectral transformations (inner to $\pi_D^{-1}(\sigma)$ with $\sigma = \pi_D(\rho)$), the group is reduced to $U(\HS)$. Because $\exists g_\rho \in GL(\HS)$ such that $\rho = g_\rho \sigma g_\rho^\dagger$ (with $\sigma = \pi_D(\rho)$, $g_\rho = s_\rho u_\rho$ with $u_\rho \in U(\HS)$ an unitary transformation and $s_\rho \in GL(\HS)/U(\HS)$ a SLOCC transformation), we have $GL(\HS)_\rho = GL(\HS)_{g_\rho \sigma g_\rho^\dagger} = g_\rho GL(\HS)_\sigma g_\rho^{-1}$. The isotropy subgroups are then all isomorphic to the same model group for all $\rho$ chosen in a single stratum (but it is different between two strata).\\
To simplify the notations, we denote by $K = U(\HS) \simeq U(n)$ (or $U(\HA)$ if it acts on the right) the group of the phases by equivariance (or the group of ancilla transformations), and by $G = GL(\HS) \simeq GL(n,\mathbb C)$ the group of the non-unitary phases. For a stratum $\Sigma^i(\HS)$ we denote by $H^i = GL(\HS)_\sigma \simeq GL(q_1)^{k_1} \times ... \times GL(q_l)^{k_l}$ the stabilizer of the diagonal density operators with sorted eigenvalues ($k_i$ is the number of eigenvalues with degeneracy equal to $q_i$). The group of all phases by invariance of the stratum $\Sigma^i(\HS)$ is the normal closure of $H^i$: $\overline H^i = \bigcup_{g \in G} gH^ig^{-1} $ ($\overline H^i$ is the smallest normal subgroup of $G$ including $H^i$).\\
We can also say that $K$ is the group of the inobservable reconfigurations of the ancilla, $G$ is the group of the unitary and SLOCC transformations of the system performed by the physicist, and $\overline{H^i}$ is the subgroup of inefficient transformations of the system (a transformation is inefficient with respect to a particular density operator).\\

The non-unitary operator valued phases have been defined only with respect to the norm of the $C^*$-module (the projection of the purified states onto the density operators). But the non-commutativity of the $C^*$-algebra $\mathcal B(\HS)$ and the nonlinearity of the Hamiltonian of the purified dynamics ($H_{\mathcal U}(\bullet) = (H_{\mathcal S} - \frac{\imath}{2} \gamma^k \Gamma^\dagger_k \Gamma_k) \otimes 1_\HA + \frac{\imath}{2} \gamma^k \Gamma_k \otimes \Gamma_k^\ddagger(\bullet)$) induce a difficulty, because the phases do not commute with the generator of the dynamics. 
\begin{defi}
We call phase with respect to the generator of the dynamics, an operator $g\in G$ such that
\begin{equation}
H_{\mathcal U}(g\Psi) g\Psi = gH_{\mathcal U}(\Psi) \Psi
\end{equation}
\end{defi}
For closed quantum systems where the Hamiltonian is linear and the phases are scalars, this condition is trivial.
\begin{propo}
The group of the phases with respect to the generator of the dynamics is $G_{\mathcal L} = G_{H^{eff}} \cap \bigcap_k G_{\Gamma_k}$ ($H^{eff} = H_{\mathcal S} - \frac{\imath}{2} \gamma^k \Gamma_k^\dagger \Gamma_k$), where the isotropy subgroups are defined for the adjoint action ($G_{H^{eff}} = \{g \in G, g^{-1} H^{eff} g = H^{eff}$).
\end{propo}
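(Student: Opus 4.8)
The plan is to transport the condition into the standard purification, where $\HS \otimes \HA \simeq \mathcal B(\HS)$ via $\PSI$ and the left action of $G = GL(\HS)$ becomes $\PSI(gW) = g\PSI(W)$. Restricting first to a regular stratum (so that $W_\Psi$ is invertible and no projector $P_{\Ran \rho}$ occurs), the computation in the proof of Theorem~1 shows that the purified generator reads, in the $W$-representation, $\mathcal F(W) = H^{eff} W + \frac{\imath}{2}\gamma^k \Gamma_k W W^\dagger \Gamma_k^\dagger (W^\dagger)^{-1}$, i.e. $\PSI(\mathcal F(W)) = H_{\mathcal U}(\PSI(W))\PSI(W)$. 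Hence $g$ is a phase with respect to the generator of the dynamics if and only if $\mathcal F(gW) = g\mathcal F(W)$ for every invertible $W \in \mathcal B(\HS)$, and the statement reduces to identifying the invertible $g$ realising this intertwining.

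First I would prove the inclusion $G_{H^{eff}} \cap \bigcap_k G_{\Gamma_k} \subseteq G_{\mathcal L}$. If $g^{-1} H^{eff} g = H^{eff}$ and $g^{-1}\Gamma_k g = \Gamma_k$ for all $k$, then taking adjoints shows $g^\dagger$ commutes with $\Gamma_k^\dagger$, so that $g^\dagger \Gamma_k^\dagger (g^\dagger)^{-1} = \Gamma_k^\dagger$ while $\Gamma_k g = g \Gamma_k$; substituting $gW$ into $\mathcal F$ and using $(gW)^\dagger = W^\dagger g^\dagger$ gives $\mathcal F(gW) = g\mathcal F(W)$ term by term. Equivalently, at the level of the nonlinear phase one checks directly that commutation forces $\Gamma_k^\ddagger(g\Psi) = \Gamma_k^\ddagger(\Psi)$, since $(gW_\Psi)^{-1}\Gamma_k (gW_\Psi) = W_\Psi^{-1}(g^{-1}\Gamma_k g)W_\Psi = W_\Psi^{-1}\Gamma_k W_\Psi$; the two linear terms then match because $[H^{eff},g]=0$ and the two nonlinear terms match because $[\Gamma_k,g]=0$.

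For the converse inclusion I would exploit the freedom in $W$. Right-multiplying $\mathcal F(gW) = g\mathcal F(W)$ by the invertible $W^\dagger$ cancels the inverse and yields, with $\rho = WW^\dagger$ and $A = [H^{eff},g]$, \[ A\rho + \frac{\imath}{2}\gamma^k\big(\Gamma_k g \rho g^\dagger \Gamma_k^\dagger (g^\dagger)^{-1} - g\Gamma_k \rho \Gamma_k^\dagger\big) = 0. \] Each term is linear in $\rho$, the identity holds for every positive-definite $\rho = WW^\dagger$, and such operators span $\mathcal B(\HS)$; hence it holds for all $\rho \in \mathcal B(\HS)$ and is an identity between superoperators. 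Under the canonical linear isomorphism sending the left–right multiplication $\mathbf L_X \mathbf R_Y : \rho \mapsto X\rho Y$ to $X \otimes Y \in \mathcal B(\HS) \otimes \mathcal B(\HS)$, this becomes the tensor identity \[ A \otimes 1 + \frac{\imath}{2}\gamma^k\big(\Gamma_k g \otimes g^\dagger \Gamma_k^\dagger (g^\dagger)^{-1} - g\Gamma_k \otimes \Gamma_k^\dagger\big) = 0. \] Reading off the coefficients attached to the distinct operators in the second tensor slot, and using that the jump operators $\{\Gamma_k\}$ (together with $1_\HS$) are linearly independent, I would conclude that the conjugated operators must reorganise as $g^\dagger \Gamma_k^\dagger (g^\dagger)^{-1} = \Gamma_k^\dagger$, that the two $\Gamma_k$-contributions then cancel so that $\Gamma_k g = g\Gamma_k$, and finally that $A = 0$, i.e. $g \in G_{H^{eff}} \cap \bigcap_k G_{\Gamma_k}$.

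The hard part is precisely this last separation. A priori the three families of second-slot operators $\{1_\HS\}$, $\{\Gamma_k^\dagger\}$ and $\{g^\dagger \Gamma_k^\dagger (g^\dagger)^{-1}\}$ need not be jointly linearly independent, and the conjugation by $g^\dagger$ could engineer accidental cancellations satisfying the tensor identity without $g$ centralising the $\Gamma_k$; excluding this is exactly what forces the standing hypothesis that the Lindblad jump operators are linearly independent and independent of the identity (the same hypothesis underlying the canonical form of the dissipator). Finally, the passage off the regular strata requires reinstating the projector $P_{\Ran \rho}$ and the pseudo-inverse $W^\pinv$; there the intertwining need only hold on $\Ran \rho$, so one restricts to the stratum-preserving part of $G$ and recovers the same centraliser conditions, either by repeating the computation with $W^\pinv$ in place of $W^{-1}$ or by a continuity argument from the regular strata.
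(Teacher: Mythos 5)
Your sufficiency argument coincides with the paper's proof: the author likewise substitutes $g\Psi$ into $H_{\mathcal U}$, rewrites $\Gamma_k^\ddagger(g\Psi)$ through $(gW_\Psi)^\pinv\Gamma_k(gW_\Psi)=W_\Psi^\pinv(g^{-1}\Gamma_k g)W_\Psi$, and observes that $g\in G_{H^{eff}}\cap\bigcap_k G_{\Gamma_k}$ makes the two sides match term by term. That single inclusion is in fact \emph{all} the paper proves; no converse is attempted there.

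The converse you add contains the step you yourself flag as ``the hard part,'' and that step does not merely need an extra hypothesis --- it fails, so the reverse inclusion is false in general and this half of your proof cannot be completed. Your tensor identity is correct, but the second-slot operators can collapse onto each other: if $g^{-1}\Gamma_k g=\lambda_k\Gamma_k$ with $|\lambda_k|=1$, then $g^\dagger\Gamma_k^\dagger(g^\dagger)^{-1}=\bar\lambda_k\Gamma_k^\dagger$ and the two nonlinear contributions cancel exactly, $\Gamma_k g\rho g^\dagger\Gamma_k^\dagger(g^\dagger)^{-1}=|\lambda_k|^2\, g\Gamma_k\rho\Gamma_k^\dagger=g\Gamma_k\rho\Gamma_k^\dagger$, without $g$ centralising $\Gamma_k$. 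Concretely, take $\HS=\mathbb C^2$, a single jump operator $\Gamma=\sigma_-$, $H_{\mathcal S}$ diagonal, and $g=\mathrm{diag}(e^{\imath\alpha},e^{\imath\beta})$ with $\alpha\neq\beta$: then $g^{-1}\sigma_-g=e^{\imath(\alpha-\beta)}\sigma_-$, while $[g,\Gamma^\dagger\Gamma]=[g,H^{eff}]=0$, so $g$ satisfies $H_{\mathcal U}(g\Psi)g\Psi=gH_{\mathcal U}(\Psi)\Psi$ for every $\Psi$ on the regular stratum although $g\notin G_{\Gamma}$. Linear independence of the $\Gamma_k$ (and of $1_\HS$) does not rescue the separation, because the obstruction is that $\mathrm{Ad}_g$ may act projectively, or by a permutation-plus-phase, on the span of the jump operators rather than trivially. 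So the proposition is safe only as the inclusion $G_{H^{eff}}\cap\bigcap_k G_{\Gamma_k}\subseteq G_{\mathcal L}$ --- which is what the paper actually establishes and uses --- unless the right-hand side is enlarged to the stabiliser of the dissipator under the adjoint action. Your closing remark on the singular strata (replacing $W^{-1}$ by $W^\pinv$ and working on $\Ran\rho$) is reasonable but likewise goes beyond anything the paper does.
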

\begin{proof}
Let $W_\Psi \in \mathcal B(\HS)$ be such that $\PSI(W_\Psi) = \Psi$.
\begin{eqnarray}
H_{\mathcal U}(g\Psi)g\Psi & = & H^{eff} \otimes 1_\HA g\Psi + \frac{\imath}{2} \gamma^k \Gamma_k \otimes \Gamma^{\ddagger}_k(g\Psi) g\Psi \\
& = & H^{eff}g \otimes 1_\HA \Psi + \frac{\imath}{2} \gamma^k \Gamma_kg \otimes \left((gW_\Psi)^\pinv\Gamma_k gW_\Psi\right)^{\transp\dagger} \Psi \\
& = & H^{eff}g \otimes 1_\HA \Psi + \frac{\imath}{2} \gamma^k \Gamma_kg \otimes \left((W_\Psi^\pinv g^{-1} \Gamma_k gW_\Psi\right)^{\transp\dagger} \Psi
\end{eqnarray}
$H^{eff}g = gH^{eff}$ if $g \in G_{H^{eff}}$, $\Gamma_k g = g \Gamma_k$ if $g \in G_{\Gamma_k}$ and $g^{-1} \Gamma_k g = \Gamma_k$ if $g\in G_{\Gamma_k}$. $H_{\mathcal U}(g\Psi)g\Psi = gH_{\mathcal U}(\Psi)\Psi$ if $g \in G_{H^{eff}} \cap \bigcap_k G_{\Gamma_k}$.
\end{proof}
We remark that $\forall k \in K$, $H_{\mathcal U}(1_\HS \otimes k^\transp \Psi) 1_\HS \otimes k^\transp \Psi = 1_\HS \otimes k^\transp H_{\mathcal U}(\Psi) \Psi$ (because $\Gamma^{\ddagger}_k(1_\HS \otimes k^\transp \Psi) =  (k^{-1} W_\Psi^\pinv \Gamma_k W_\Psi k)^{\transp\dagger} = k^\transp \Gamma^{\ddagger}_k(\Psi) (k^\transp)^{-1}$).\\
We remark moreover that if the system Hamiltonian $H_{\mathcal S}$ and/or the jump operators $\Gamma_k$ are time-dependent, $G_{\mathcal L}$ is time-dependent. In that case, a possibility to avoid difficulties is to consider $\mathring G_{\mathcal L} = \bigcap_t G_{\mathcal L}(t)$ but this group can be reduced to the unitary center of $\mathcal B(\HS)$ ($U(1)$ in finite dimension) except if $\forall t$, $H_{\mathcal S}(t)$ and $\Gamma_k(t)$ belong to a same (small) subalgebra of $\mathcal B(\HS)$.

\subsection{Operator valued geometric phases}
For closed quantum systems, the geometric phase concept is related to the cyclicity of the projected dynamics. Let $[0,T] \ni t \mapsto \rho(t) \in \mathcal D_0(\HS)$ be a density operator solution of the Lindblad equation $\ihbar \dot \rho = \mathcal L(\rho)$. We say that the projected dynamics is cyclic if $\pi_D(\rho(T)) = \pi_D(\rho(0))$ (the spectrum of the statistical probabilities is the same at the start and at the end of the dynamics). We search a density operator $[0,T] \ni t \mapsto \tilde \rho(t) \in \mathcal D(\HS)$ such that
\begin{eqnarray}
\tilde \rho(T) & = & \tilde \rho(0) \\
\forall t, \exists g(t) \in G, \quad \rho(t) & = & g(t)\tilde \rho(t) g(t)^\dagger
\end{eqnarray}
$\tilde \rho$ is the cyclic density operator associated with the density operator of cyclic projection. We note that $g(t) \in G$ and not $K$ because there is no reason for which the transformation of $\rho$ into a cyclic density operator implies no SLOCC operations.
\begin{theo} \label{thgeoph}
A density operator $\rho(t)$ with cyclic projection and its cyclic density operator $\tilde \rho(t)$ are related by $\rho(t) = g(t)\tilde \rho(t) g(t)^\dagger$ with the non-unitary operator valued phase $g(t)$ defined by
\begin{equation}
g(t) = \Teg^{-\ihbar^{-1} \int_0^t E(g(t') \tilde \rho(t') g(t')^\dagger)dt'} \Ted^{- \int_0^t (i_{\dot W_{\tilde \rho}(t')} \mathfrak A+\eta(t'))dt'}
\end{equation}
where the dynamical phase generator is defined by
\begin{equation}
\forall \rho \in \mathcal D(\HS), \quad E(\rho) = H_{\mathcal S} - \frac{\imath}{2} \gamma^k \Gamma_k^\dagger \Gamma_k + \frac{\imath}{2} \gamma^k \Gamma_k \rho \Gamma_k^\dagger \rho^\pinv \in \mathcal B(\HS)
\end{equation}
 and the geometric phase generator of the first kind is defined by
\begin{equation}
\mathfrak A = dW W^\pinv \in \Omega^1(\mathcal B(\HS),\mathcal B(\HS))
\end{equation}
$i_{\dot W_{\tilde \rho}}$ is the inner product with $\dot W_{\tilde \rho}$ viewed as a tangent vector, $W_{\tilde \rho}(t)$ being a standard purification of $\tilde \rho(t)$: $i_{\dot W_{\tilde \rho}} \mathfrak A = \dot W_{\tilde \rho}(t) W_{\tilde \rho}(t)^\pinv$. The geometric phase generator of the second kind is defined as
\begin{equation}
\eta(t) = W_{\tilde \rho}(t) \dot k(t) k(t)^{-1} W_{\tilde \rho}(t)^\pinv + h(t) \in \overline{\mathfrak h^i}
\end{equation}
with $k(t) \in K$ and $h(t) \in \mathcal B(\ker \tilde \rho(t))$ ($\overline{\mathfrak h^i}$ is the Lie algebra of $\overline {H^i}$ with $i$ the index of the stratum of $\tilde \rho$).
\end{theo}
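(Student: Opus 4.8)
The plan is to reduce the claim about the operator-valued phase $g(t)$ to a single first-order differential equation for $g$, integrated as a product of a left (time-ordered) and a right (time-anti-ordered) exponential. First I would transport the purified dynamics of Theorem 1 down to the standard purification $W_\rho$ alone. Starting from the identity $\ihbar\dot W_\rho W_\rho^\dagger = H_{\mathcal S}\rho - \frac{\imath}{2}\gamma^k\Gamma_k^\dagger\Gamma_k\rho + \frac{\imath}{2}\gamma^k\Gamma_k\rho\Gamma_k^\dagger$ obtained in the proof of Theorem 1 (with the free self-adjoint term set to zero), I would right-multiply by $(W_\rho^\dagger)^\pinv$, using $\rho(W_\rho^\dagger)^\pinv = W_\rho P_{\coRan W_\rho}$ together with $(W_\rho W_\rho^\dagger)^\pinv = (W_\rho^\dagger)^\pinv W_\rho^\pinv$ (from the singular value decomposition) to rewrite the jump term as $\frac{\imath}{2}\gamma^k\Gamma_k\rho\Gamma_k^\dagger\rho^\pinv W_\rho$. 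On the range this collapses to $\ihbar\dot W_\rho = E(\rho)W_\rho$, with $E(\rho)$ exactly the dynamical phase generator of the statement; this isolates the left (dynamical) generator of the evolution.

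Next I would use cyclicity. Because $\rho = g\tilde\rho g^\dagger$ and both $W_\rho$ and $W_{\tilde\rho}$ are standard purifications of their operators, the identity $\rho = (gW_{\tilde\rho})(gW_{\tilde\rho})^\dagger = W_\rho W_\rho^\dagger$ forces $W_\rho = gW_{\tilde\rho}k$ for some ancilla reconfiguration $k\in K$ (the right-unitary purification freedom), with an additional freedom supported on $\ker\tilde\rho$. Substituting this ansatz into $\ihbar\dot W_\rho = E(\rho)W_\rho$, then right-multiplying by $k^{-1}$ and by $W_{\tilde\rho}^\pinv$, and recognising $\dot W_{\tilde\rho}W_{\tilde\rho}^\pinv = i_{\dot W_{\tilde\rho}}\mathfrak A$ and $W_{\tilde\rho}\dot k k^{-1}W_{\tilde\rho}^\pinv$ as the first- and second-kind geometric generators (the $\ker\tilde\rho$-freedom supplying the summand $h$ of $\eta$), I arrive at the mixed left-right equation $\ihbar\dot g = E(\rho)g - \ihbar\,g\,(i_{\dot W_{\tilde\rho}}\mathfrak A + \eta)$. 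An equation of the form $\dot g = -Ag - gB$ integrates to $g = \Teg^{-\int A}\,g(0)\,\Ted^{-\int B}$; with $g(0) = \id$ (since $\rho(0) = \tilde\rho(0)$) and $\rho(t') = g(t')\tilde\rho(t')g(t')^\dagger$ reinserted into $E$, this is exactly the claimed factorisation, the time-ordered exponential carrying the dynamical phase and the time-anti-ordered exponential the two geometric phases.

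The delicate points are two. The first is the bookkeeping of pseudo-inverses off the regular strata, where $W_{\tilde\rho}W_{\tilde\rho}^\pinv = P_{\Ran\tilde\rho}\neq\id$: every right-multiplication must be checked to preserve $\Ran\tilde\rho$ (using the range identity $\Ran W_{\tilde\rho} = \Ran\tilde\rho$ established above), and the projected form of Theorem 1 must be carried through rather than its regular-strata specialisation. The second, and the genuine obstacle, is to prove that $\eta = W_{\tilde\rho}\dot k k^{-1}W_{\tilde\rho}^\pinv + h$ actually lies in $\overline{\mathfrak h^i}$, the Lie algebra of the phases by invariance of the stratum. Here one must verify that the ancilla part generates transformations fixing $\rho = W_{\tilde\rho}W_{\tilde\rho}^\dagger$ (immediate from $W_{\tilde\rho}k(W_{\tilde\rho}k)^\dagger = W_{\tilde\rho}W_{\tilde\rho}^\dagger$) while the $\ker\tilde\rho$-part $h$ acts trivially on the range, so that their sum exponentiates inside $\overline{H^i}$; this is precisely where the stratified structure of $\mathcal D_0(\HS)$ feeds into the geometric phase, and where the separation of $\eta$ from the first-kind generator $i_{\dot W_{\tilde\rho}}\mathfrak A$ must be justified.
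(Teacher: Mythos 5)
Your proposal is correct and follows essentially the same route as the paper: substitute the purification ansatz $W_\rho = gW_{\tilde\rho}k$ into the nonlinear Schr\"odinger equation for the standard purification, right-multiply by $k^{-1}$ and $W_{\tilde\rho}^\pinv$ (picking up the $\ker\tilde\rho$ ambiguity $h$), obtain $g^{-1}\dot g = -\ihbar^{-1}g^{-1}E(g\tilde\rho g^\dagger)g - i_{\dot W_{\tilde\rho}}\mathfrak A - \eta$, integrate it as the product of a time-ordered and a time-anti-ordered exponential, and check $\eta\tilde\rho + \tilde\rho\,\eta^\dagger = 0$ so that $\eta\in\mathfrak{gl}(\HS)_{\tilde\rho}\subset\overline{\mathfrak h^i}$. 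The only (inessential) difference is that you first isolate $\ihbar\dot W_\rho = E(\rho)W_\rho$ before inserting the ansatz, whereas the paper recognises $E$ only after the substitution.
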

We can note that
\begin{equation}
\mathfrak A \|\PSI(W)\|^2_* = \langle \PSI(W) | d\PSI(W) \rangle_* = \tr_\HA |d\PSI(W) \rrangle \llangle \PSI(W)|
\end{equation}
This definition is similar to the definition of the non-unitary geometric phase generator of non-hermitian closed quantum systems (with the $C^*$-module in place of the Hilbert space of the system). It is moreover the non-adiabatic generalisation of the $C^*$-geometric phase introduced in ref.\cite{Viennot1,Viennot2,Viennot3,Viennot4}. We can also show that
\begin{equation}
d\rho = \mathfrak A \rho + \rho \mathfrak A^\dagger
\end{equation}
The definition of the dynamical phase $\Teg^{-\ihbar \int_0^t E(g\tilde \rho g^\dagger)dt}$ induces that the definition of total phase $g(t)$ is implicit (the expression of $g$ depends on itself). This is a consequence of the nonlinearity of the Schr\"odinger equation of the purified dynamics (we can say that the dynamical phase is a ``nonlinear dynamical phase'').\\
We can remark that $h(t)=0$ if the cyclic dynamics ($t \mapsto \tilde \rho(t)$) takes place in regular strata.\\
\begin{proof}
Since $\rho = g \tilde \rho g^\dagger$, $\exists k \in K$ such that $W_\rho = gW_{\tilde \rho}k$. The nonlinear Schr\"odinger equation (for a standard purification state) is
\begin{eqnarray}
\ihbar \dot W_\rho & = & H^{eff} W_\rho + \frac{\imath}{2} \gamma^k \Gamma_k W_\rho \Gamma_k^{\ddagger}(\PSI(W_\rho))^\transp \\
& = & H^{eff} W_\rho + \frac{\imath}{2} \gamma^k \Gamma_k W_\rho (W_\rho^\pinv \Gamma_k W_{\rho})^\dagger \\
& = & H^{eff} W_\rho + \frac{\imath}{2} \gamma^k \Gamma_k W_\rho W_\rho^\dagger \Gamma_k^\dagger (W_{\rho}^\pinv)^\dagger
\end{eqnarray}
with $H^{eff} = H_{\mathcal S} - \frac{\imath}{2} \gamma^k \Gamma_k^\dagger \Gamma_k$. It follows that
\begin{equation}
\dot g W_{\tilde \rho} k + g \dot W_{\tilde \rho} k + g W_{\tilde \rho} \dot k = -\ihbar^{-1} \left(H^{eff} g W_{\tilde \rho} k +  \frac{\imath}{2} \gamma^l \Gamma_l  g W_{\tilde \rho} W_{\tilde \rho}^\dagger g^\dagger \Gamma_l^\dagger (g^\dagger)^{-1} (W_{\tilde \rho}^\pinv)^\dagger k \right)
\end{equation}
and then
\begin{equation}
g^{-1}\dot g W_{\tilde \rho} + W_{\tilde \rho} \dot k k^{-1} = - \dot W_{\tilde \rho} -\ihbar^{-1} \left(g^{-1} H^{eff} g W_{\tilde \rho} + \frac{\imath}{2} \gamma^l g^{-1} \Gamma_l g W_{\tilde \rho} W_{\tilde \rho}^\dagger g^\dagger \Gamma_l^\dagger (g^\dagger)^{-1} (W_{\tilde \rho}^\pinv)^\dagger \right)
\end{equation}
By multiplying on the right this expression by $W_{\tilde \rho}^\pinv$ we find
\begin{equation}
g^{-1} \dot g + W_{\tilde \rho} \dot k k^{-1} W_{\tilde \rho}^\pinv = - \dot W_{\tilde \rho} W_{\tilde \rho}^\pinv - \ihbar^{-1} g^{-1} E(g\tilde \rho g^\dagger)g - h
\end{equation}
with $h \in \mathcal B(\ker \tilde \rho)$. We set $\eta = W_{\tilde \rho} \dot k k^{-1} W_{\tilde \rho}^\pinv + h$. Let $g_{\mathfrak A}$ be such that $g = \Teg^{-\ihbar^{-1} \int_0^t E(g \tilde \rho g^\dagger)dt} g_{\mathfrak A}$.
\begin{eqnarray}
\dot g & = & - \ihbar^{-1} E(g \tilde \rho g^\dagger) g + \Teg^{-\ihbar^{-1} \int_0^t E(g \tilde \rho g^\dagger)dt} \dot g_{\mathfrak A} \\
g^{-1} \dot g & = & - \ihbar^{-1} g^{-1} E(g \tilde \rho g^\dagger) g + g_{\mathfrak A}^{-1} \dot g_{\mathfrak A}
\end{eqnarray}
It follows that
\begin{equation}
g_{\mathfrak A}^{-1} \dot g_{\mathfrak A} = - \dot W_{\tilde \rho} W_{\tilde \rho} - \eta \Rightarrow g_{\mathfrak A} = \Ted^{-\int_0^t (\dot W_{\tilde \rho} W_{\tilde \rho} +\eta)dt}
\end{equation}
One needs now only to show that $\eta \in \overline{\mathfrak h^i}$:
\begin{eqnarray}
\eta \tilde \rho + \tilde \rho \eta^\dagger & = & W_{\tilde \rho} \dot k k^{-1} W_{\tilde \rho}^\dagger + W_{\tilde \rho} k \frac{dk^{-1}}{dt} W_{\tilde \rho}^\dagger \\
& = & W_{\tilde \rho} \dot k k^{-1} W_{\tilde \rho}^\dagger - W_{\tilde \rho} \dot k k^{-1} W_{\tilde \rho}^\dagger\\
& = & 0
\end{eqnarray}
since $\tilde \rho = W_{\tilde \rho} W_{\tilde \rho}^\dagger$, $\frac{dk^{-1}}{dt} = -k^{-1}\dot k k^{-1}$ and $h\tilde \rho=0$. Because of $g_{\tilde \rho}^{-1} H^i g_{\tilde \rho} = GL(\HS)_{\tilde \rho} = \{g \in GL(\HS), g\tilde \rho g^\dagger = \tilde \rho\}$ ($\tilde \rho = g_{\tilde \rho} \pi_D(\tilde \rho) g_{\tilde \rho}^\dagger$), the Lie algebra of $GL(\HS)_{\tilde \rho}$ is $\mathfrak{gl}(\HS)_{\tilde \rho} =\{X \in \mathcal B(\HS), X \tilde \rho + \tilde \rho X^\dagger = 0\}$. We have then $\eta \in \mathfrak{gl}(\HS)_{\tilde \rho} \subset \overline{\mathfrak h^i}$.
\end{proof}

From the point of view of the purified dynamics, we have
\begin{equation}
\Psi_\rho(t) = \Teg^{-\ihbar^{-1} \int_0^t E(g\tilde \rho g^\dagger)dt} \Ted^{-\int_0^t (i_{\dot W_{\tilde \rho}} \mathfrak A + \eta) dt} \otimes k^\transp(t) \Psi_{\tilde \rho}(t)
\end{equation}
with $\Psi_\rho = \PSI(W_\rho)$, and $i_{\dot W_{\tilde \rho}} \mathfrak A \|\Psi_{\tilde \rho}\|^2_*= \langle \Psi_{\tilde \rho} | \dot \Psi_{\tilde \rho} \rangle_*$.\\

The right geometric phase $k(t) \in K$ is arbitrary in the sense that $\pi_{HS}(g\tilde Wk) = \underbrace{g \tilde \rho g^\dagger}_{\rho} = \pi_{HS}(g \tilde W)$, but it induces the geometric phase generator of the second kind $\eta_k = W_{\tilde \rho} \dot k k^{-1} W_{\tilde \rho}^\pinv$ in the left geometric phase. Let $A^R \in \Omega^1(\mathcal B(\HS),\mathcal B(\HS))$ be the generator of the right geometric phase:
\begin{equation}
k(t) = \Teg^{-\int_0^t i_{\dot W_{\tilde \rho}(t')} A^R dt'}
\end{equation}
A possible choice consists to use the definition of the Uhlmann geometric phase \cite{Uhlmann1,Uhlmann2}:
\begin{equation}
dW = {^{Uhl}}\mathfrak A W + W A^R
\end{equation}
where ${^{Uhl} \mathfrak A}$ is solution of the following equation:
\begin{equation}
d\rho = {^{Uhl}}\mathfrak A \rho + \rho {^{Uhl}}\mathfrak A \qquad {^{Uhl}} \mathfrak A^\dagger = {^{Uhl}} \mathfrak A
\end{equation}
This choice is set in order to $W_{\parallel} = W_{\tilde \rho} k$ satisfies $W_{\parallel}^\dagger \dot W_{\parallel} = \dot W_{\parallel}^\dagger W_{\parallel} \iff \dot W_{\parallel} W_{\parallel}^\pinv = W_{\parallel}^{\pinv \dagger} \dot W_{\parallel}^\dagger$, which is the Uhlmann's definition of the parallel transport for the density matrices \cite{Uhlmann1,Uhlmann2}. With this choice of right geometric phase, we have:
\begin{eqnarray}
i_{\dot W_{\tilde \rho}} \mathfrak A + W_{\tilde \rho} \dot k k^{-1} W_{\tilde \rho}^\pinv & = & i_{\dot W_{\tilde \rho}}\left(\mathfrak A + W A^R W^\pinv \right) \\
& = & i_{\dot W_{\tilde \rho}}\left( dWW^\pinv + W A^R W^\pinv \right) \\
& = & i_{\dot W_{\tilde \rho}}\left({^{Uhl}}\mathfrak A \right)
\end{eqnarray}
The left generator appearing in the definition of the Uhlmann (right) geometric phase, generates then the total left geometric phase:
\begin{eqnarray}
\Ted^{-\int_0^T (i_{\dot W_{\tilde \rho}(t)} \mathfrak A + \eta_k(t))dt} & = & \Ted^{-\int_0^T i_{\dot W_{\tilde \rho}(t)} {^{Uhl}}\mathfrak A dt} \\
& = & \Ped^{- \oint_{\mathcal C} {^{Uhl}}\mathfrak A}
\end{eqnarray}
where $\mathcal C$ is the closed curve parametrized by $[0,T] \ni t \mapsto W_{\tilde \rho}(t)$ in $\mathcal B(\HS) \simeq \mathfrak M_{n \times n}(\mathbb C)$ viewed as a manifold ($\Ped$ denoting the path-anti-ordered exponential).\\
Since the total left geometric phase is automatically adaptated to the choice of the right geometric phase (by the presence of the left geometric phase generator of the second kind $\eta_k = W_{\tilde \rho} \dot k k^{-1} W_{\tilde \rho}^\pinv$), no natural choice of right geometric phase is imposed by the nonlinear Schr\"odinger equation and its associated Lindblad equation. Other definitions of the right geometric phase corresponding to other Uhlmann like connections can be choosen. This is the reason for which Dittman, Uhlmann and Rudolph \cite{Dittmann,Dittmann2} have find a large class of connections defining right geometric phases and compatible with the density matrix theory.

\subsection{Special cases of left geometric phases}
\paragraph{Geometric phase with respect to the generator of the dynamics:} We suppose that the dynamics takes place in regular strata. Let $\mathfrak g_{\mathcal L}$ be the Lie algebra of $G_{\mathcal L}$ and let $\forall \Psi \in \HS \otimes \HA$, $S_{\mathcal L}(\Psi) = \mathfrak g_{\mathcal L} \Psi$ be the Hilbert subspace which is the orbit of $\Psi$ by $\mathfrak g_{\mathcal L}$. Let $\{X_i\}_i$ be a set of generators of the Lie algebra $\mathfrak g_{\mathcal L}$ such that $\tr(X^{i\dagger} X_j) = \delta^i_j$. By construction $\{\PSI(X_iW)\}_i$ constitutes a basis of $S_{\mathcal L}(\PSI(W))$ (with $W \in \mathcal B(\HS)$). In spite of the orthonormalisation of the generators of $\mathfrak g_{\mathcal L}$, the basis $\{\PSI(X_iW)\}_i$ is not orthonormal. Let $\{\PSI(X_i (W^\dagger)^{-1})\}_i$ be the associated biorthonormal basis:
\begin{eqnarray}
\llangle \PSI(X^i (W^{\dagger})^{-1})|\PSI(X_j W) \rrangle & = & \langle X^i (W^{\dagger})^{-1}| X_j W \rangle_{HS} \\
& = & \tr\left(W^{-1} X^{i\dagger} X_j W \right) \\
& = & \tr\left(WW^{-1} X^{i\dagger} X_j \right) \\
& = & \tr(X^{i\dagger} X_j) \\
& = & \delta^i_j
\end{eqnarray}
(where we have used the cyclicity of the trace). Let $\PSI(P_{\mathcal L}(W))$ be the (non-orthogonal) projection onto $S_{\mathcal L}(\PSI(W))$ in $\HS \otimes \HA$ defined by
\begin{equation}
P_{\mathcal L}(W) = |X_i W \rangle_{HS}\langle X^i (W^{\dagger})^{-1}| = X_i W \tr\left(W^{-1} X^{i\dagger} \bullet \right)
\end{equation}
Let $A^L \in \Omega^1(\mathcal B(\HS),\mathfrak g_{\mathcal L})$ be defined as
\begin{eqnarray}
A^L \|\PSI(W)\|^2_* & = & \langle \PSI(W)| \PSI(P_{\mathcal L}(W)) [d\PSI(W)] \rangle_* \\
A^L WW^\dagger & = & P_{\mathcal L}(W)[dW] W^\dagger \\
A^L WW^\dagger & = & \tr\left(W^{-1} X^{i\dagger} dW \right) X_i WW^\dagger \\
A^L & = & \tr\left(X^{i\dagger} dW W^{-1}\right) X_i \\
A^L & = & \tr\left(X^{i\dagger} \mathfrak A \right) X_i
\end{eqnarray}
$A^L$ is the part of $\mathfrak A$ which induces a phase with respect to the generator of the dynamics.
\begin{propo}
$i_{\dot W_{\tilde \rho}} \mathfrak A = i_{\dot W_{\tilde \rho}} A^L$ (the geometric phase is a phase with respect to the generator of the dynamics) if and only if $E(\rho) \in \mathfrak g_{\mathcal L}$:
\begin{eqnarray}
\left[E(\rho),H^{eff}\right] & = & 0 \\
\left[E(\rho),\Gamma_k\right] & = & 0 \\
\left[E(\rho),\Gamma_k^\dagger\right] & = & 0
\end{eqnarray}
\end{propo}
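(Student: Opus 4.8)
\emph{Proof proposal.} The plan is to read the asserted identity as a membership statement in $\mathfrak g_{\mathcal L}$, then to identify the contracted connection with the dynamical generator $E(\rho)$, and finally to unfold that membership into the three displayed commutators. First I would exploit that $A^L$ is, by its very definition $A^L = \tr(X^{i\dagger}\mathfrak A)X_i$, the orthogonal projection of the $\mathcal B(\HS)$-valued one-form $\mathfrak A$ onto $\mathfrak g_{\mathcal L}$ with respect to the Hilbert--Schmidt inner product for which $\{X_i\}$ is orthonormal. Contraction with the tangent vector $\dot W_{\tilde\rho}$ is linear and commutes with this constant-coefficient projection, so $i_{\dot W_{\tilde\rho}}A^L = \tr(X^{i\dagger}\,\dot W_{\tilde\rho}W_{\tilde\rho}^\pinv)\,X_i$ is exactly the $\mathfrak g_{\mathcal L}$-component of $i_{\dot W_{\tilde\rho}}\mathfrak A = \dot W_{\tilde\rho}W_{\tilde\rho}^\pinv$. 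Since a vector equals its orthogonal projection onto a subspace iff it already lies in that subspace, the equality $i_{\dot W_{\tilde\rho}}\mathfrak A = i_{\dot W_{\tilde\rho}}A^L$ is equivalent to $i_{\dot W_{\tilde\rho}}\mathfrak A \in \mathfrak g_{\mathcal L}$.

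Second I would connect the contracted connection to $E$. Rewriting the purified nonlinear Schr\"odinger equation for a standard purification, exactly as in the proof of Theorem~\ref{thgeoph}, yields on a regular stratum $\ihbar\,\dot W_\rho = E(\rho)W_\rho$, hence $\ihbar\,\dot W_\rho W_\rho^\pinv = E(\rho)P_{\Ran\rho}=E(\rho)$; that is, the value of $\mathfrak A$ along the genuine dynamical velocity is $\ihbar^{-1}E(\rho)$. Carrying this through the gauge relation $W_\rho = gW_{\tilde\rho}k$ reproduces the decomposition of Theorem~\ref{thgeoph}, $i_{\dot W_{\tilde\rho}}\mathfrak A = -\ihbar^{-1}g^{-1}E(\rho)g - g^{-1}\dot g - \eta$, in which $g^{-1}\dot g$ is the generator of the total phase and $\eta = W_{\tilde\rho}\dot k k^{-1}W_{\tilde\rho}^\pinv + h \in \overline{\mathfrak h^i}$ is the second-kind generator. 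The point to establish is that these gauge contributions do not affect the membership test, so that $i_{\dot W_{\tilde\rho}}\mathfrak A \in \mathfrak g_{\mathcal L}$ holds iff $E(\rho) \in \mathfrak g_{\mathcal L}$; the cleanest route is to run the test in a gauge in which the phase operators are instantaneously trivial, reducing it to $\ihbar^{-1}E(\rho)\in\mathfrak g_{\mathcal L}$.

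Finally I would unfold $E(\rho)\in\mathfrak g_{\mathcal L}$. By the preceding Proposition $\mathfrak g_{\mathcal L}$ is the joint commutant cut out by $[X,H^{eff}]=0$ and $[X,\Gamma_k]=0$, so membership gives at once the first two relations $[E(\rho),H^{eff}]=0$ and $[E(\rho),\Gamma_k]=0$. The third relation $[E(\rho),\Gamma_k^\dagger]=0$ does not follow formally from these two: it is forced by the explicit presence of the adjoint jump operators in $E(\rho)=H^{eff}+\frac{\imath}{2}\gamma^k\Gamma_k\rho\Gamma_k^\dagger\rho^\pinv$ together with the equivariant $C^*$-nature of the phase, which puts $g$ and $g^\dagger$, hence $\Gamma_k$ and $\Gamma_k^\dagger$, on the same footing. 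Requiring $E(\rho)$ to be compatible with the whole dissipative generator $\gamma^k\Gamma_k\rho\Gamma_k^\dagger$ and not merely with its drift part therefore demands commutation with $\Gamma_k^\dagger$ as well; conversely the three commutators make $E(\rho)$ commute with every building block of the generator and so place it in $\mathfrak g_{\mathcal L}$, closing the equivalence.

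I anticipate that the main obstacle is the middle step: rigorously showing that the gauge-dependent pieces $g^{-1}\dot g$ and $\eta$ drop out of the $\mathfrak g_{\mathcal L}$-membership test, so that the gauge-dependent velocity $\dot W_{\tilde\rho}W_{\tilde\rho}^\pinv$ and the gauge-invariant operator $E(\rho)$ are detected simultaneously, and pinning down precisely why the adjoint condition $[E(\rho),\Gamma_k^\dagger]=0$ must be appended to the two commutators that already characterise $\mathfrak g_{\mathcal L}$.
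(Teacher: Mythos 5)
Your proposal follows essentially the same route as the paper: the paper's proof is exactly the decomposition $i_{\dot W_{\tilde\rho}}\mathfrak A = -g^{-1}\dot g - \ihbar^{-1}g^{-1}E(\rho)g - \eta_k$ (inherited from the proof of Theorem~\ref{thgeoph}) followed by the observation that this lies in $\mathfrak g_{\mathcal L}$ when $g\in G_{\mathcal L}$ and $E(\rho)\in\mathfrak g_{\mathcal L}$, and your opening projection argument merely makes explicit what the paper leaves implicit about $A^L$ being the $\mathfrak g_{\mathcal L}$-component of $\mathfrak A$. The gauge-term obstacle you flag is handled in the paper simply by additionally requiring $g\in G_{\mathcal L}$, which is self-consistent once $E(\rho)\in\mathfrak g_{\mathcal L}$ because the phase formula of Theorem~\ref{thgeoph} then generates $g$ inside $G_{\mathcal L}$.
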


\begin{proof}
$i_{\dot W_{\tilde \rho}} \mathfrak A = \dot W_{\tilde \rho} W_{\tilde \rho}$ and $\dot W_{\tilde \rho} = -g^{-1}g W_{\tilde \rho} + g^{-1} \dot W_\rho k^{-1} - W_{\tilde \rho} \dot kk^{-1}$. We have then $i_{\dot W_{\tilde \rho}} \mathfrak A = -g^{-1} g - \ihbar^{-1} g^{-1} E(\rho) g - \eta_k$. $i_{\dot W_{\tilde \rho}} \mathfrak A \in \mathfrak g_{\mathcal L}$ if $g \in G_{\mathcal L}$ and $E(\rho) \in \mathfrak g_{\mathcal L}$.
\end{proof}
Note that $g \in G_{\mathcal L} \Rightarrow E(g\tilde \rho g^\dagger) = gE(\tilde \rho) g^{-1}$. It follows that
\begin{eqnarray}
g(t) & = & \Teg^{-\ihbar^{-1} \int_0^t g(t') E(\tilde \rho(t')) g(t')^{-1} dt'} \Ted^{-\int_0^t (i_{\dot W_{\tilde \rho(t')}} A^L+\eta_k) dt'} \\
& = & \Ted^{-\ihbar^{-1} \int_0^t E(\tilde \rho(t')) dt' - \int_0^t i_{\dot W_{\tilde \rho(t')}} (A^L+\eta_k) dt'}
\end{eqnarray}
In that case, the nonlinearity of the dynamical phase does not then occur but the dynamical and the geometric phases are not seperated.\\

\paragraph{Geometric phase by invariance:} We suppose that the dynamics is inner to a regular stratum $\Sigma^i(\HS)$. Let $H^i \simeq GL(q_1) \times ... \times GL(q_l)$ be the stabilizer of the stratum $\Sigma^i(\HS)$ (we do not necessarily consider that $q_j \not= q_k$ for $j\not=k$). Let $\mathfrak h^i \simeq \mathfrak u(q_1) \oplus ... \oplus \mathfrak u(q_l)$ be its Lie algebra. $\forall \sigma \in \Sigma^i(\HS)$, let $\mathcal P^\sigma : \mathcal B(\HS) \to \mathfrak h^i$ be defined by
\begin{equation}
\mathcal P^\sigma(X) = P^{j\sigma} X P_j^\sigma
\end{equation}
where $P_j^\sigma$ is the orthogonal projection onto the $q_j$-dimensional eigenspace of $\sigma$ associated with the eigenvalue $p_j$ (which has a degeneracy equal to $q_j$). Let $\rho \in \mathcal D(\HS)$ be such that $\pi_D(\rho) = \sigma$. We know that $\exists g_\rho \in G$ such that $\rho = g_\rho \sigma g_\rho^\dagger$ and $GL(\HS)_\rho = g_\rho H^i g_\rho^{-1}$. Let $\mathcal P^\rho: \mathcal B(\HS) \to \mathfrak{gl}(\HS)_\rho$ be defined by $\mathcal P^\rho(X) = P^{j\rho} X P_j^\rho$ with $P_j^\rho = g_\rho P_j^\sigma g_\rho^{-1}$.\\
Let $A^{BL} \in \Omega^1(\mathcal B(\HS),\overline{\mathfrak h^i})$ be defined as
\begin{equation}
A^{BL} = \mathcal P^{WW^\dagger}(\mathfrak A) = P^{j WW^\dagger} dWW^{-1} P_j^{WW^\dagger}
\end{equation}
$A^{BL}$ is the part of $\mathfrak A$ which induces a phase by invariance.

\begin{propo}
$i_{\dot W_{\tilde \rho}} \mathfrak A = i_{\dot W_{\tilde \rho}} A^{BL}$ (the geometric phase is a phase by invariance) if and only if $E_{g_\rho}(\sigma) \in \mathfrak h^i$ with $E_{g_\rho}(\sigma) = g_\rho^{-1} E(g_\rho \sigma g_\rho^{-1}) g_\rho$ ($g_\rho \in U(\HS)$, $\rho(t)=g_\rho(t)\sigma(t)g_\rho(t)^{-1} \in \mathcal D_0(\HS)$).
\end{propo}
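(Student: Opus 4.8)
The plan is to run the argument of the preceding proposition verbatim, simply replacing $\mathfrak g_{\mathcal L}$ by the invariance algebra. First I would contract the first-kind generator, $i_{\dot W_{\tilde\rho}}\mathfrak A = \dot W_{\tilde\rho}W_{\tilde\rho}^\pinv$, and, since the dynamics is inner to a single regular stratum, take the cyclic reference $\tilde\rho$ isospectral to $\rho$ so that the operator relating them in $W_\rho = gW_{\tilde\rho}k$ is unitary, $g\in U(\HS)$, and $W_{\tilde\rho}^\pinv = W_{\tilde\rho}^{-1}$. Inserting the nonlinear Schr\"odinger equation for the standard purification, which on a regular stratum reads $\ihbar\dot W_\rho = E(\rho)W_\rho$, and using $k^{-1}W_{\tilde\rho}^{-1} = W_\rho^{-1}g$, reproduces the master identity already obtained in the previous proof,
\begin{equation}
i_{\dot W_{\tilde\rho}}\mathfrak A = -g^{-1}\dot g - \ihbar^{-1}g^{-1}E(\rho)g - \eta_k .
\end{equation}

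Next I would recast the assertion as a membership statement. Along the curve $W = W_{\tilde\rho}$ the one-form $A^{BL} = \mathcal P^{WW^\dagger}(\mathfrak A)$ is the component of $\mathfrak A$ selected by the pointwise projection $\mathcal P^{\tilde\rho}(X) = P^{j\tilde\rho}XP_j^{\tilde\rho}$, and since contraction commutes with this projection one has $i_{\dot W_{\tilde\rho}}A^{BL} = \mathcal P^{\tilde\rho}(i_{\dot W_{\tilde\rho}}\mathfrak A)$. Hence the claimed equality $i_{\dot W_{\tilde\rho}}\mathfrak A = i_{\dot W_{\tilde\rho}}A^{BL}$ is equivalent to $i_{\dot W_{\tilde\rho}}\mathfrak A$ lying in the image of $\mathcal P^{\tilde\rho}$, i.e.\ to the vanishing of the part of $i_{\dot W_{\tilde\rho}}\mathfrak A$ transverse to the eigenspaces of $\tilde\rho$.

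Then I would feed the master identity into this test. The second-kind term $\eta_k = W_{\tilde\rho}\dot k k^{-1}W_{\tilde\rho}^\pinv$ was shown in Theorem \ref{thgeoph} to satisfy $\eta_k\tilde\rho + \tilde\rho\eta_k^\dagger = 0$ (with $h=0$ on a regular stratum), so it lies in the invariance subalgebra, and I would likewise absorb the Maurer--Cartan term $-g^{-1}\dot g$ into it, leaving $g^{-1}E(\rho)g$ as the only contribution that can be transverse. The test then reduces to $g^{-1}E(\rho)g\in\mathfrak{gl}(\HS)_{\tilde\rho}$. From $\rho = g\tilde\rho g^\dagger$ one has the Lie-algebra identity $\mathfrak{gl}(\HS)_\rho = g\,\mathfrak{gl}(\HS)_{\tilde\rho}\,g^{-1}$ (proved exactly as the group relation $GL(\HS)_\rho = g_\rho H^i g_\rho^{-1}$), so the condition becomes $E(\rho)\in\mathfrak{gl}(\HS)_\rho$. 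Finally the spectral decomposition $\rho = g_\rho\sigma g_\rho^{-1}$ with $g_\rho\in U(\HS)$ gives $\mathfrak{gl}(\HS)_\rho = g_\rho\mathfrak h^i g_\rho^{-1}$, whence $E(\rho)\in\mathfrak{gl}(\HS)_\rho$ if and only if $g_\rho^{-1}E(\rho)g_\rho = E_{g_\rho}(\sigma)\in\mathfrak h^i$, which yields both implications of the proposition.

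The step I expect to be the real work is the third one, and specifically the identifications hidden in it. One must check that the image of the block projection $X\mapsto P^{j\tilde\rho}XP_j^{\tilde\rho}$, which manifestly lands in the commutant of $\tilde\rho$, genuinely coincides with the invariance algebra $\mathfrak{gl}(\HS)_{\tilde\rho}$ entering the definition of a phase by invariance; in the unitary (isospectral) restriction these two algebras meet in the anti-Hermitian operators commuting with $\tilde\rho$, and the sharp condition is then $[E(\rho),\rho]=0$ rather than the weaker $E(\rho)\in\overline{\mathfrak h^i}$. One must also justify that $-g^{-1}\dot g$ and $\eta_k$ really lie in this algebra — the precise analogue of the hypothesis $g\in G_{\mathcal L}$ that was used tacitly in the preceding proof, and which here amounts to $[g^{-1}\dot g,\tilde\rho]=0$. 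Once these reconciliations are in place, the surviving manipulations are the routine conjugations carried out above.
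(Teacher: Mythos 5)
Your proposal follows essentially the same route as the paper: the paper's proof consists precisely of the master identity $i_{\dot W_{\tilde\rho}}\mathfrak A = -g^{-1}\dot g - \ihbar^{-1}g^{-1}E(\rho)g - \eta_k$ followed by the observation that this lies in $\mathfrak{gl}(\HS)_\rho$ when $g\in GL(\HS)_\rho$ and $E(\rho)\in\mathfrak{gl}(\HS)_\rho\iff g_\rho^{-1}E(\rho)g_\rho\in\mathfrak h^i$, which is exactly your second and third steps. The reconciliations you flag at the end (image of the block projection versus the invariance algebra $\mathfrak{gl}(\HS)_{\tilde\rho}$, and the tacit hypothesis on $-g^{-1}\dot g$ and $\eta_k$) are genuine loose ends, but they are equally present in, and left unaddressed by, the paper's own one-line proof.
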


\begin{proof}
 $i_{\dot W_{\tilde \rho}} \mathfrak A = -g^{-1} \dot g - \ihbar^{-1} g^{-1} E(\rho) g - \eta_k$ and then $i_{\dot W_{\tilde \rho}} \mathfrak A \in \mathfrak{gl}(\HS)_\rho$ if $g \in GL(\HS)_\rho$ and $E(\rho) \in \mathfrak{gl}(\HS)_\rho \iff g_\rho^{-1} E(\rho) g_\rho \in \mathfrak h^i$.
\end{proof}

The geometric phase and the dynamical phase are phases by invariance, i.e. $\rho(T) = \rho(0)$ (the density operator is cyclic), if
\begin{equation}
g_{\rho}^{-1} H_{\mathcal S} g_\rho - \frac{\imath}{2}\gamma^k g_\rho^{-1} \Gamma_k^\dagger \Gamma_k g_\rho + \frac{\imath}{2} \gamma^k g_\rho^{-1} \Gamma_k g_\rho \sigma g_\rho^{-1} \Gamma_k^\dagger g_\rho \sigma^{-1} g_\rho^{-1} \in \mathfrak h^i
\end{equation}

The elements of $\mathfrak{gl}(\HS)_\rho$ acting on the left can be converted as elements acting on the right:
\begin{eqnarray}
X \in \mathfrak{gl}(\HS)_\rho & \Rightarrow & X\rho+\rho X^\dagger = 0 \\
& \Rightarrow & XWW^\dagger + WW^\dagger X^\dagger = 0 \\
& \Rightarrow & XW + WW^\dagger X^\dagger (W^\dagger)^{-1} = 0 \\
& \Rightarrow & XW = - W(W^{-1} XW)^\dagger
\end{eqnarray}
The left action of $X \in \mathfrak{gl}(\HS)_\rho$ is then equivalent to the right action of $-(W^{-1} XW)^\dagger$. We have then
\begin{equation}
\label{defgeosjoqvist}
\mathfrak A W = (\mathfrak A-A^{BL})W + W A^{BR}
\end{equation}
with
\begin{eqnarray}
A^{BR} & = & - (W^{-1} A^{BL} W)^\dagger \\
& = & - W^\dagger \left(P^{jWW^\dagger} \right)^\dagger (W^{-1})^\dagger dW^\dagger \left(P_j^{WW^\dagger} \right)^\dagger (W^{-1})^\dagger
\end{eqnarray}
$P_j^{WW^\dagger} = g_{WW^\dagger} P_j^\sigma g_{WW^\dagger}^{-1}$ and $W = g_{WW^\dagger} \sqrt{\sigma}$, we have then $W^{-1} P_j^{WW^\dagger} W = \sqrt{\sigma}^{-1} P_j^\sigma \sqrt{\sigma} = P_j^\sigma$ since $P_j^\sigma \sqrt{\sigma} = \sqrt{\sigma} P_j^\sigma$ ($P_j^\sigma$ is an eigenprojection of $\sigma$). We have then
\begin{equation}
A^{BR} = - P^{j\sigma} dW^\dagger (W^\dagger)^{-1} P^\sigma_j
\end{equation}
Let $W \in S_{HS}\mathcal B(\HS)$. $dW^\dagger (W^\dagger)^{-1} = d\sqrt{\sigma} \sqrt{\sigma}^{-1} + \sqrt{\sigma} dg_{WW^\dagger}^{-1} g_{WW^\dagger} \sqrt{\sigma}^{-1}$. Since $d\sigma = d\sqrt \sigma \sqrt \sigma + \sqrt \sigma d\sqrt \sigma \Rightarrow d\sqrt \sigma \sqrt \sigma^{-1} = d\sigma \sigma^{-1} - \sqrt \sigma d\sqrt \sigma \sigma^{-1}$ we have
\begin{equation}
dW^\dagger (W^\dagger)^{-1} = d\sigma \sigma^{-1} - \sqrt \sigma d\sqrt \sigma \sigma^{-1} - \sqrt{\sigma} g_{WW^\dagger}^{-1} d g_{WW^\dagger} \sqrt{\sigma}^{-1} \qquad (W \in S_{HS}\mathcal B(\HS))
\end{equation}
But $W^\dagger dW \sigma^{-1} = \sqrt \sigma g_{WW^\dagger}^{-1} dg_{WW^\dagger} \sqrt{\sigma}^{-1} + \sqrt \sigma d\sqrt \sigma \sigma^{-1}$, and then
\begin{equation}
dW^\dagger (W^\dagger)^{-1} = d\sigma \sigma^{-1} - W^\dagger dW \sigma^{-1} \qquad (W \in S_{HS}\mathcal B(\HS))
\end{equation}
Finally, the restriction of $A^{BR}$ onto $S_{HS}\mathcal B(\HS)$ is
\begin{equation}
A^{BR}_{|S_{HS}\mathcal B(\HS)} = P^{j\sigma} W^\dagger dW P^\sigma_j \sigma^{-1} - P^{j\sigma} d\sigma \sigma^{-1} P_j^\sigma
\end{equation}
This is the expression of the generator of the Sj\"oqvist-Andersson geometric phase\cite{Sjoqvist,Tong,Andersson} (note that in the works of Sj\"oqvist and Andersson, $\dot\sigma \sigma^{-1} = 0$ since they consider only isospectral dynamics). $A^{BR} =  - P^{j\sigma} dW^\dagger (W^\dagger)^{-1} P^\sigma_j$ is then the non-unitary generalization of the generator of the Sj\"oqvist-Andersson geometric phase. We can note that equation (\ref{defgeosjoqvist}) can rewritten as
\begin{equation}
dW = (\mathfrak A - A^{BL}) W + WA^{BR}
\end{equation}
which is an equation defining an Uhlmann like connection. In other words, we can choose the right geometric phase as being the Sj\"oqvist geometric phase in place of the Uhlmann geometric phase ($\dot k k^{-1} = - A^{BR}$). In that case, the total left geometric phase is such that
\begin{equation}
\Ted^{-\int_0^T (i_{\dot W_{\tilde \rho}(t)} \mathfrak A + \eta_k(t)) dt} \in G/G_\rho
\end{equation}

We will say that this Sj\"oqvist geometric phase is two-sided in the sense that it can be considered on the left ($A^{BL}$) as a inefficient transformation of the system, or it can be considered on the right ($A^{BR}$) as a reconfiguration of the ancilla. Its invariance nature implies that it can be revealed only by interferometry in accordance with its discovery by Sj\"oqvist etal \cite{Sjoqvist,Tong}.\\

\paragraph{Adiabatic geometric phase :} Suppose that $t \mapsto H_{\mathcal S}(t)$ and $t \mapsto \Gamma_k(t)$ are continuous and cyclic operators with respect to the time: $H_{\mathcal S}(T) = H_{\mathcal S}(0)$ and $\Gamma_k(T) = \Gamma_k(0)$. Moreover suppose that $|\gamma| = \max_k |\gamma^k| \ll 1$. The nonlinear Schr\"odinger equation for the purified dynamics can be written as
\begin{equation}
\ihbar \dot \Psi = H^{eff} \otimes 1_\HA \Psi + \frac{\imath}{2} \gamma^k \Gamma_k \otimes (W^{(0) \dagger} \Gamma_k^\dagger W^{(0)\dagger \pinv})^\transp \Psi + \mathcal O(|\gamma|^2)
\end{equation}
with $\Psi^{(0)} = \PSI(W^{(0)})$ be such that:
\begin{equation}
\ihbar \dot \Psi^{(0)} = H_{\mathcal S} \otimes 1_{\HA} \Psi^{(0)}
\end{equation}
At the first order of perturbation, the nonlinearity is replaced by the pre-integration of the zero-order solution. Suppose that $\Psi^{(0)}(t=0) = \psi(t=0) \otimes \xi_\alpha$, then $\Psi^{(0)}(t) = \psi^{(0)}(t) \otimes \xi_\alpha$ with $\ihbar \dot \psi^{(0)} = H_{\mathcal S} \psi^{(0)}$ and
\begin{equation}
W^{(0)} = |\psi^{(0)}\rangle \langle \xi_\alpha| \quad \text{and} \quad  W^{(0)\pinv} = |\xi_\alpha \rangle \langle \psi^{(0)}|
\end{equation}
\begin{equation}
W^{(0)\pinv} \Gamma_k W^{(0)} = \langle \psi^{(0)} |\Gamma_k|\psi^{(0)}\rangle |\xi_\alpha \rangle \langle \xi_\alpha|
\end{equation}
Let $\lambda_b^{(0)}(t)$ be the instantaneous eigenvalues of $H_{\mathcal S}(t)$ (supposed non-degenerate) and $\underline \zeta_b^{(0)}(t)$ be the associated normalized eigenvectors:
\begin{equation}
H_{\mathcal S} \underline \zeta_b^{(0)} = \lambda_b^{(0)} \underline \zeta_b^{(0)}
\end{equation}
At the first order of the perturbation, the eigenvalues and the eigenvectors of $H^{eff}(t)$ are
\begin{eqnarray}
\mu_b & = & \mu_b^{(0)} - \frac{\imath}{2} \gamma^k \langle \underline \zeta_b^{(0)} | \Gamma_k^\dagger \Gamma_k |\underline \zeta_b^{(0)}\rangle + \mathcal O(|\gamma|^2) \\
\underline \zeta_b & = & \underline \zeta_b^{(0)} - \frac{\imath}{2} \sum_{c \not= b} \gamma^k \frac{\langle \underline \zeta_c^{(0)}|\Gamma_k^\dagger \Gamma_k|\underline \zeta_b^{(0)} \rangle}{\mu_b^{(0)} - \mu_c^{(0)} - \frac{\imath}{2} \gamma^l (\langle \underline \zeta_b^{(0)} | \Gamma_l^\dagger \Gamma_l |\underline \zeta_b^{(0)}\rangle - \langle \underline \zeta_c^{(0)} | \Gamma_l^\dagger \Gamma_l |\underline \zeta_c^{(0)}\rangle)} \underline \zeta_c^{(0)} \nonumber \\
& & \qquad \qquad + \mathcal O(|\gamma|^2)
\end{eqnarray}
The eigenvalues and the eigenvectors of $H_{\mathcal U}$ are then
\begin{eqnarray}
\lambda_{b\beta}  & = & \mu_b + \frac{\imath}{2} \delta_{\alpha \beta} \gamma^k \langle \underline \zeta_b^{(0)}|\Gamma_k|\underline \zeta_b^{(0)}\rangle \langle \psi^{(0)}|\Gamma_k^\dagger | \psi^{(0)} \rangle + \mathcal O(|\gamma|^2) \\
\underline \Phi_{b\beta} & = & \underline \zeta_b \otimes \xi_\beta \nonumber \\
& & + \frac{\imath}{2} \delta_{\alpha \beta} \sum_{d\not=b} \gamma^k \frac{\langle \underline \zeta_d^{(0)} |\Gamma_k | \underline \zeta_b^{(0)} \rangle \langle \psi^{(0)}|\Gamma_k^\dagger |\psi^{(0)} \rangle}{\mu_b-\mu_d+ \frac{\imath}{2} \gamma^l (\langle \underline \zeta_b^{(0)}|\Gamma_l|\underline \zeta_b^{(0)}\rangle- \langle \underline \zeta_d^{(0)}|\Gamma_l|\underline \zeta_d^{(0)}\rangle) \langle \psi^{(0)}|\Gamma_l^\dagger | \psi^{(0)} \rangle } \underline \zeta_d^{(0)} \otimes \xi_\alpha \nonumber \\
\label{phibbeta} & & + \mathcal O(|\gamma|^2)
\end{eqnarray}
where $(\xi_\beta)_\beta$ is an arbitrary (time independent) basis of $\HA$. The eigenvectors of $H_{\mathcal U}^\dagger$ are
\begin{eqnarray}
  \underline \Phi_{b\beta}^*  & = &  \underline \zeta_b^* \otimes \xi_\beta \nonumber \\
  & & - \frac{\imath}{2} \delta_{\alpha \beta} \sum_{d\not=b} \gamma^k \frac{\langle \underline \zeta_d^{(0)} |\Gamma_k^\dagger | \underline \zeta_b^{(0)} \rangle \langle \psi^{(0)}|\Gamma_k |\psi^{(0)} \rangle}{\mu_b-\mu_d- \frac{\imath}{2} \gamma^l (\langle \underline \zeta_b^{(0)}|\Gamma_l^\dagger|\underline \zeta_b^{(0)}\rangle- \langle \underline \zeta_d^{(0)}|\Gamma_l^\dagger|\underline \zeta_d^{(0)}\rangle) \langle \psi^{(0)}|\Gamma_l | \psi^{(0)} \rangle } \underline \zeta_d^{(0)} \otimes \xi_\alpha \nonumber \\
  & & + \mathcal O(|\gamma|^2)
\end{eqnarray}
with $\underline \zeta_b^*$ the eigenvectors of $H^{eff\dagger}$ which are biorthonormal to the eigenvectors of $H^{eff}$ ($\langle \underline \zeta_c^*|\underline \zeta_b \rangle = \delta_{cb}$).\\

We suppose that $\Psi(t=0) = \underline \Phi_{a \alpha}(t=0)$, $\lambda_{a\alpha}$ being non-degenerate. If the evolution is slow and under some technical assumptions, we can prove the following adiabatic transport formula (see ref.\cite{Viennot4}) :
\begin{equation}
\rho(t) = g(t) \rho_{a\alpha}(t) g(t)^\dagger + \mathcal O\left(\max\left(\frac{1}{T},|\gamma|^2\right)\right)
\end{equation}
\begin{equation}
g(t) = \Teg^{-\ihbar^{-1} \int_0^T E_\alpha^{(1)}(t)dt} \Ted^{- \int_0^T A_\alpha^{(1)}(t) dt}
\end{equation}
with
\begin{equation}
\rho_{a\alpha} = \tr_{\HA} |\underline \Phi_{a\alpha} \rrangle \llangle \underline \Phi_{a\alpha}| = |\underline \zeta_{a\alpha}^{(1)} \rangle \langle \underline \zeta_{a\alpha}^{(1)} | + \mathcal O(|\gamma|^2)
\end{equation}
\begin{equation}
E_\alpha^{(1)} = \sum_b \lambda_{b\alpha} |\underline \zeta_{b\alpha}^{(1)} \rangle \langle \underline \zeta_{b\alpha}^{(1)*}|
\end{equation}
\begin{equation}
A_\alpha^{(1)} = \sum_{bc} \langle \underline \zeta_{b\alpha}^{(1)*} | \frac{d}{dt}| \underline \zeta_{c\alpha}^{(1)} \rangle |\underline \zeta_{b\alpha}^{(1)} \rangle \langle \underline \zeta_{c\alpha}^{(1)*} |
\end{equation}
\begin{equation}
\underline \zeta_{b\alpha}^{(1)} = \underline \zeta_b + \frac{\imath}{2} \sum_{d \not=b} \gamma^k \frac{\langle \underline \zeta_d^{(0)}|\Gamma_k|\underline \zeta_b^{(0)} \rangle \langle \psi^{(0)}|\Gamma_k^\dagger|\psi^{(0)} \rangle}{\mu_b - \mu_d + \frac{\imath}{2} \gamma^l (\langle \underline \zeta_b^{(0)}|\Gamma_l|\underline \zeta_b^{(0)} \rangle - \langle \underline \zeta_d^{(0)}|\Gamma_l|\underline \zeta_d^{(0)} \rangle)\langle \psi^{(0)}|\Gamma_l^\dagger|\psi^{(0)} \rangle} \underline \zeta_d^{(0)}
\end{equation}
\begin{equation}
\underline \zeta_{b\alpha}^{(1)*} = \underline \zeta_b - \frac{\imath}{2} \sum_{d \not=b} \gamma^k \frac{\langle \underline \zeta_d^{(0)}|\Gamma_k^\dagger|\underline \zeta_b^{(0)} \rangle \langle \psi^{(0)}|\Gamma_k|\psi^{(0)} \rangle}{\mu_b - \mu_d - \frac{\imath}{2} \gamma^l (\langle \underline \zeta_b^{(0)}|\Gamma_l^\dagger|\underline \zeta_b^{(0)} \rangle - \langle \underline \zeta_d^{(0)}|\Gamma_l^\dagger|\underline \zeta_d^{(0)} \rangle)\langle \psi^{(0)}|\Gamma_l|\psi^{(0)} \rangle} \underline \zeta_d^{(0)}
\end{equation}

The adiabatic geometric phase issued from the adiabatic theorem\cite{Viennot4} $\Ted^{- \int_0^T A_\alpha^{(1)}(t) dt}$ is related to the non-adiabatic geometric phase as follows. The role of $\tilde \rho(t)$ in theorem \ref{thgeoph} is played by $\rho_{a\alpha}(t)$. We have $W_{\rho_{a\alpha}} = |\underline \zeta_{a\alpha}^{(1)}\rangle \langle \xi_\alpha|$ and $W_{\rho_{a\alpha}}^\pinv = |\xi_\alpha \rangle \langle \underline \zeta_{a\alpha}^{(1)*}|$. It follows that
\begin{equation}
i_{\dot W_{\rho_{a\alpha}}} \mathfrak A = \dot W_{\rho_{a\alpha}} W_{\rho_{a\alpha}}^\pinv = |\frac{d}{dt} \underline \zeta_{a\alpha}^{(1)} \rangle \langle \underline \zeta_{a\alpha}^{(1)*}|
\end{equation}
and
\begin{equation}
A^{(1)}_\alpha = i_{\dot W_{\rho_{a\alpha}}} \mathfrak A + \sum_{c\not=a} |\frac{d}{dt} \underline \zeta_{c\alpha}^{(1)} \rangle \langle \underline \zeta_{c\alpha}^{(1)*}| + \mathcal O(|\gamma|^2)
\end{equation}
where we have use the fact that $(\underline \zeta_{b\alpha}^{(1)};\underline \zeta_{b\alpha}^{(1)*})_b$ constitutes (at the order $|\gamma|^2$) a biorthonormalized basis of $\HS$ ($\sum_b |\underline \zeta_{b\alpha}^{(1)} \rangle \langle\underline \zeta_{b\alpha}^{(1)*}| =1$ ). Moreover
\begin{equation}
  \sum_{c\not=a} |\frac{d}{dt} \underline \zeta_{c\alpha}^{(1)} \rangle \langle \underline \zeta_{c\alpha}^{(1)*}| \rho_{a\alpha} + \rho_{a\alpha} \left(\sum_{c\not=a} |\frac{d}{dt} \underline \zeta_{c\alpha}^{(1)} \rangle \langle \underline \zeta_{c\alpha}^{(1)*}| \right)^\dagger = \mathcal O(|\gamma|^2)
\end{equation}
It follows that $\eta = \sum_{c\not=a} |\frac{d}{dt} \underline \zeta_{c\alpha}^{(1)} \rangle \langle \underline \zeta_{c\alpha}^{(1)*}| \in \mathfrak{gl}(\HS)_{\rho_{a\alpha}}$ and then
\begin{equation}
A^{(1)}_\alpha = i_{\dot W_{\rho_{a\alpha}}} \mathfrak A + \eta
\end{equation}
The adiabatic geometric phase generator $A^{(1)}_\alpha$ found in ref.\cite{Viennot4} is the sum of the two geometric phase generators found theorem \ref{thgeoph} with the ``density eigenmatrix'' $\rho_{a\alpha}$ playing the role of $\tilde \rho(t)$.\\
Let $P_{\bullet \alpha} = \sum_b |\underline \Phi_{b\alpha} \rrangle \llangle \underline \Phi_{b\alpha}^*|$ be the eigenprojection in $\HS \otimes \HA$ associated with the eigenvectors of $H_{\mathcal U}$ related to $\xi_\alpha$ by perturbation. We consider the following reduced generator of geometric phases:
\begin{eqnarray}
\tr_{\HA} \left(|P_{\bullet \alpha} \frac{d}{dt} \underline \Phi_{a\alpha} \rrangle \llangle \underline \Phi_{a\alpha}^*| \right) \rho_{a\alpha}^\pinv & = & \sum_b \langle \underline \zeta_{b\alpha}^{(1)*}|\frac{d}{dt}|\underline \zeta_{a\alpha}^{(1)} \rangle | \underline \zeta_{b\alpha}^{(1)} \rangle \langle  \underline \zeta_{a\alpha}^{(1)*}| + \mathcal O(|\gamma|^2) \\
& = & |\frac{d}{dt} \underline \zeta_{a\alpha}^{(1)} \rangle \langle \underline \zeta_{a\alpha}^{(1)*}| + \mathcal O(|\gamma|^2)
\end{eqnarray}
\begin{eqnarray}
\PSI^{-1}(P_{\bullet \alpha})[\dot W_{\rho_{a\alpha}}] W_{\rho_{a\alpha}}^\pinv & = & A_\alpha^{(1)} + \mathcal O(|\gamma|^2) \\
& = & i_{\dot W_{\rho_{a\alpha}}} \mathfrak A + \eta + \mathcal O(|\gamma|^2) 
\end{eqnarray}
Moreover
\begin{eqnarray}
E_\alpha^{(1)} \otimes |\xi_\alpha \rangle \langle \xi_\alpha| & = & P_{\bullet \alpha} H_{\mathcal U}(\underline \Phi_{a\alpha}) P_{\bullet \alpha} \\
& = & E(\rho_{a\alpha}) \otimes |\xi_\alpha \rangle \langle \xi_\alpha| + \mathcal O(|\gamma|^2)
\end{eqnarray}
The geometric and dynamical phase generators appearing in the adiabatic theorem ref.\cite{Viennot4} are clearly the adiabatic limit of the generators introduced in theorem \ref{thgeoph}.\\
Let $G_{\{\underline \Phi_{b\beta}\}_{b\beta}}(t) \subset U(\HS)$ be the group of operators of $\HS$ acting on the eigenvectors of $H_{\mathcal U}$ as phase changes. By definition we have
\begin{eqnarray}
\forall g \in G_{\{\underline \Phi_{b\beta}\}_{b\beta}}, \quad H_{\mathcal U}(g\underline \Phi_{a\alpha}) g\underline \Phi_{a\alpha} & = & \lambda_{a\alpha} g \underline \Phi_{a\alpha} + \mathcal O(|\gamma|^2) \\
& = & g H_{\mathcal U}(\underline \Phi_{a\alpha}) \underline \Phi_{a\alpha} + \mathcal O(|\gamma|^2)
\end{eqnarray}
Clearly $E(g \rho_{a\alpha} g^{-1}) = E(\rho_{a\alpha})$ and $G_{\{\underline \Phi_{b\beta}\}_{b\beta}} \subset G_{\mathcal L}$. The adiabatic geometric phase is a phase with respect to the generator of the dynamics and $A_\alpha^{(1)} = i_{\dot W_{\rho_{a\alpha}}} A^L$. Finally the adiabatic assumption can be rewritten as $i_{\dot W_{\rho_{a\alpha}}} \mathfrak A \simeq  i_{\dot W_{\rho_{a\alpha}}} A^L$ in accordance with the discussion found in ref.\cite{Viennot1}. The geometric structure involved by the adiabatic geometric phases has been extensively studied in ref.\cite{Viennot1} and the interpretation of the adiabatic operator valued geometric phase has been studied in ref. \cite{Viennot3}.\\
This approach for the adiabatic geometric phases, which is usefull for bipartite quantum systems, is strongly limited for the open quantum systems described by a Lindblad equation. Indeed both the perturbative assumption and the adiabatic assumption are too drastic because they involve that the dynamics remains in the neighbourhood of the singular stratum of the pure states ($\rho_{a\alpha}$ is a pure state at the first order of perturbation). This excludes the more interesting cases where the relaxation rates are sufficiently large to deviate the dynamics or where the interaction duration is sufficiently large to the dynamics goes to the steady state (with the open quantum systems there is a competition between the adiabatic and the relaxation processes). In second order of perturbation the adiabatic approximation deals with mixed state \cite{Viennot4} but the nonlinearity could induces strong difficulties to apply this approach. A more general approach of the adiabatic geometric phases could consists to use the notion of noncommutative eigenvalues as introduced in ref. \cite{Viennot1} for bipartite quantum systems, but the nonlinearity could also induces strong difficulties.\\

Table \ref{geomph} summarizes the different geometric phases of the open quantum systems.

\begin{table}
\caption{\label{geomph} The different geometric phases of the open quantum systems.}
\rotatebox{90}{\begin{tabular}{|r||c|c|c|c|l|}
\hline
\textit{geometric phase} & \textit{generator} & \textit{values} & \textit{side} & \textit{type} & \textit{interpretation}\\
\hline \hline
$C^*$-geometric phase & $\mathfrak A$ & $G$ & left & by equivariance & transformations to cyclic density operators \\
\hline
Uhlmann geometric phase & $A^R$ & $K$ & right & by invariance & transition probability parallelism \\
\hline
Sj\"oqvist geometric phase & $A^{BL}/A^{BR}$ & $\overline{H^i}$ & two-sided & by invariance  & interferometric phase \\
\hline
reduced $C^*$-geometric phase & $A^L$ & $G_{\mathcal L}$ & left & w.r.t. the dynamics & closest to the adiabatic phase \\
\hline
\end{tabular}}
\end{table}

\section{Geometry of mixed states}
\subsection{The geometry as a stratified principal composite bibundle}
\subsubsection{Regular strata}
Let $\Sigma^i(\HS)$ be a regular stratum, $\mathcal D^i(\HS) = \pi^{-1}_D(\Sigma^i(\HS))$ be the stratum of the density operators over $\Sigma^i(\HS)$, and $\mathcal B^i(\HS) = \pi^{-1}_{HS}(\HS)$ be the stratum of the standard purified states over $\mathcal D^i(\HS)$.\\
Let $L_{HS}: G \to \Aut \mathcal B^i(\HS)$ be the left action of $G$ on $\mathcal B^i(\HS)$ defined by
\begin{equation}
\forall g\in G, \forall W \in \mathcal B^i(\HS) \qquad L_{HS}(g) W = gW
\end{equation}
and $R_{HS}: K \to \Aut \mathcal B^i(\HS)$ be the right action of $K$ on $\mathcal B^i(\HS)$ defined by
\begin{equation}
\forall k\in K, \forall W \in \mathcal B^i(\HS) \qquad R_{HS}(k) W = Wk
\end{equation}
These actions of $G$ and $K$ are free. The right action of $K$ can be locally transformed as a left action of $\overline{H^i}$, indeed
\begin{equation}
R_{HS}(k) W = L_{HS}(WkW^{-1}) W
\end{equation}
Because $\pi_{HS}(R_{HS}(k)W) = \pi_{HS} (L_{HS}(WkW^{-1})W) = \pi_{HS}(W)$ ($(WkW^{-1})WW^\dagger (WkW^{-1})^\dagger = WW^\dagger$), we have $WkW^{-1} \in GL(\HS)_{WW^\dagger} \subset \overline{H^i}$.\\
$\mathcal B^i(\HS)$ cannot be viewed as a left-bundle on $\mathcal D^i(\HS)$ since $\pi_{HS}(gW) = g\pi_{HS}(W)g^\dagger$, but since $\pi_D \circ \pi_{HS} (gW) = \pi_D \circ \pi_{HS} (W)$ we can see $\mathcal B^i(\HS)$ as a left principal $G$-bundle $\mathsf P^i$ over $\Sigma^i(\HS)$. Let $\phi_{\mathsf P}^i$ be the trivialization of $\mathsf P^i$:
\begin{equation}
\begin{array}{rcl} \phi_{\mathsf P}^i: G \times \Sigma^i(\HS) & \to & \mathcal B^i(\HS) \\ (g,\sigma) & \mapsto & g \sqrt \sigma \end{array}
\end{equation}
$\mathsf P^i$ is a trivial bundle because of the presence of the global section $\sigma \mapsto \sqrt \sigma$.\\
Since $\pi_{HS}(Wk) = \pi_{HS}(W)$, we can see $\mathcal B^i(\HS)$ as a righ principal $K$-bundle $\mathsf Q^i$ over $\mathcal D^i(\HS)$. Let $\phi^i_{\mathsf Q}$ be the trivialization of $\mathsf Q^i$:
\begin{equation}
\begin{array}{rcl} \phi_{\mathsf Q}^i: \mathcal D^i(\HS) \times K & \to & \mathcal B^i(\HS) \\ (\rho,k) & \mapsto & \sqrt \rho k \end{array}
\end{equation}
$\mathsf Q^i$ is a trivial bundle because of the presence of the global section $\rho \mapsto \sqrt \rho$.\\
Let $\pi_G^i : G \to G/H^i$ be the canonical projection ($\pi_G^i(g) = gH^i$). Let $\phi^i_D$ be the trivialization of $\mathcal D^i(\HS)$ viewed as a bundle of manifolds:
\begin{equation}
\begin{array}{rcl} \phi_{D}^i: G/H^i \times \Sigma^i(\HS) & \to & \mathcal D^i(\HS) \\ (gH^i,\sigma) & \mapsto & g\sigma g^\dagger \end{array}
\end{equation}
Let $\pi_{\bar H}^i : \overline{H^i} \to G/H^i$ be the projection defined by $\pi^i_{\bar H}(h) = \pi^i_G(h)$ ($\overline {H^i} \subset G$) which is such that $\pi_{\bar H}^i(g H^i g^{-1}) = gH^i$.\\
The complete geometric structure is then defined by the following commutative diagram:
$$ \begin{CD}
\overline{H^i} \times \Sigma^i(\HS) @>{\hookrightarrow}>> G \times \Sigma^i(\HS) @>{\simeq}>{\phi^i_{\mathsf P}}> \mathcal B^i(\HS) @<{\simeq}<{\phi^i_{\mathsf Q}}< \mathcal D^i(\HS) \times K \\
@V{\pi_{\bar H}^i \times \id}VV  @V{\pi^i_G \times \id}VV @V{\pi_{HS}}VV @V{\Pr_1}VV \\
G/H^i \times \Sigma^i(\HS) @= G/H^i \times \Sigma^i(\HS) @>{\simeq}>{\phi^i_D}> \mathcal D^i(\HS) @= \mathcal D^i(\HS) \\
& & @V{\Pr_2}VV @V{\pi_D}VV \\
& & \Sigma^i(\HS) @= \Sigma^i(\HS)
\end{CD} $$
Due to the presence of three floors ($\mathcal B^i(\HS)$, $\mathcal D^i(\HS)$ and $\Sigma^i(\HS)$) the bundle of purification is a composite bundle \cite{Sardanashvily, Viennot5, Viennot6}. Due to the both left and right structures, the bundle of purification is a bibundle \cite{Aschieri}. We can then say that $\mathcal B^i(\HS)$ is a principal composite bibundle.\\

The bundle can be rewritten with an arbitrary purification space $\HS \otimes \HA$ with the left and right actions defined by
\begin{eqnarray}
\forall g \in G, \forall \Psi \in \HS \otimes \HA, \quad L_{\Psi}(g) \Psi & = & g \otimes 1_\HA \Psi \\
\forall k \in K, \forall \Psi \in \HS \otimes \HA, \quad R_{\Psi}(k) \Psi & = & 1_\HS \otimes k^t \Psi
\end{eqnarray}
with $L_{HS}(g) \PSI(W) = \PSI(L_{\Psi}(g) W)$ and $R_{HS}(k) \PSI(W) = \PSI(R_{\Psi}(k) \PSI(W))$.

\subsubsection{Singular strata}
Let $\Sigma^i(\HS)$ be a singular stratum charaterized by $p_1=0$ with a degeneracy equal to $n-m$. A density operator $\rho \in \mathcal D^i(\HS)$ can be defined by two kinds of data: $P_{\Ran \rho}$ the projection onto its support space and an order $m$ invertible density matrix $\varrho \in \mathfrak M_{m\times m}(\mathbb C)$ representing $\rho$ in this space. Let $(\chi_j)_{j=1,...,m}$ be an orthonormal basis of $\Ran \rho$. We have:
\begin{equation}
\rho = {\varrho^j}_k |\chi_j \rangle \langle \chi^k| \qquad P_{\Ran \rho} = |\chi_j \rangle \langle \chi^j|
\end{equation}
with ${\varrho^j}_k = \langle \chi^j|\rho \chi_k \rangle$. Reciprocally we can introduce $\RHO : \mathcal D^i(\mathbb C^m) \times V_m(\HS) \to \mathcal D^i(\HS)$ defined by
\begin{eqnarray}
\forall \varrho \in \mathcal D^i(\mathbb C^m), \forall (\chi_j)_{j=1,...,m} \in V_m(\HS), \quad \RHO(\varrho,(\chi_j)_j) = {\varrho^j}_k |\chi_j\rangle \langle \chi^k|
\end{eqnarray}
where $\mathcal D^i(\mathbb C^m)$ is the regular strata for the model Hilbert space $\mathbb C^m$ such that $\Sigma^i(\mathbb C^m)=\pi_D(\mathcal D^i(\mathbb C^m))$ is similar to $\Sigma^i(\HS)$ deprived of the zeros. $V_m(\HS) = \{(\chi_j)_{j=1,...,m} \in (\HS)^m, \langle \chi^j|\chi_k \rangle = \delta^j_k \} \simeq V_m(\mathbb C^n) = U(n)/U(n-m)$ ($V_m(\mathbb C^n)$ is a Stiefel manifold).\\
Let $I^a \subset \{1,...,n\}$ be a set of indices such that $\dist_{FS}(P^a,P_{\Ran \rho}) < \frac{\pi}{2}$ where $P^a = \sum_{j \in I^a} |\zeta_j \rangle \langle \zeta_j|$ and $\dist_{FS}(P,Q) = \arccos|\det(Z_P^\dagger Z_Q)|^2$ is the Fubini-Study distance on $G_m(\HS) = \{P \in \mathcal B(\HS), P^\dagger = P, P^\dagger=P, \dim \Ran P = m\} \simeq G_m(\mathbb C^n)$ ($Z_P$ is the matrix representing an arbitrary orthonormal basis of $\Ran P$ in the basis $(\zeta_j)_j$). There exists a basis $(\chi_j)_j$ of $\Ran \rho$ such that\cite{Rohlin}
\begin{equation}
\exists c_{jk} \in \mathbb C, \quad \chi_j = \zeta_j + \sum_{k \not\in I^a} c_{jk} \zeta_k
\end{equation}
$U^a = \{P \in G_m(\HS), \dist_{FS}(P,P^a) < \frac{\pi}{2}\}$ is an open chart of $G_m(\HS)$ with the coordinates map $\begin{array}{rcl} \xi^a : U^a & \to & \mathbb C^{m(n-m)} \\ P & \mapsto & (c_{jk})_{j\in I^a, k\not\in I^a} \end{array}$. $V_m(\HS)$ is a non-trivial principal $U(m)$-bundle over $G_m(\HS)$ with local trivializations
\begin{equation}
\begin{array}{rcl} \phi^a_S: U^a \times U(m) & \to & V_m(\HS)_{|U^a} \\ (P,u) & \mapsto & {(Z^a_Pu)^j}_k |\zeta_j \rangle \langle \chi^k| \end{array}
\end{equation}
with $Z^a_P \in \mathfrak M_{m \times m}(\mathbb C)$ be such that ${(Z^a_P)^j}_k = \langle \zeta^j|\chi_k \rangle$ ($(\chi_j)_{j \in I^a}$ is an orthonormal basis of $\Ran P$).\\
Let $H^{i0}$ and $H^{i1}$ be such that $H^i = H^{i0} \times H^{i1}$ with $H^{i0} \simeq GL(m,\mathbb C)_\sigma$ ($\sigma \in\Sigma^i(\mathbb C^m)$) and $H^{1i} \simeq GL(n-m,\mathbb C)$. $G/H^{i1} \simeq \tilde V_m(\mathbb C^n) = GL(n,\mathbb C)/GL(n-m,\mathbb C) = V_m(\mathbb C^n) \times_{U(m)} GL(m,\mathbb C) = \{[(\chi u^{-1}, ug), u \in U(m)],\chi \in V_m(\mathbb C^n), g\in GL(m,\mathbb C)\}$ is a non-compact Stiefel manifold. Let $\hat \phi^{ia}_S$ be the local trivialization of $G/H^{i1}$ induced by\\
$\begin{array}{rcl} \tilde \phi^{ia}_S : U^a \times GL(m,\mathbb C) & \to & \tilde V_m(\mathbb C^n) \\ (P,g) & \mapsto & [(Z^a_P u^{-1},ug), u\in U(m)] \end{array}$.\\
The complete geometric structure is defined by the following commutative diagram:\\
\rotatebox{90}{\begin{minipage}{19cm} \footnotesize
$$ \begin{CD}
\overline{H^i} \times \Sigma^i(\HS) @>{\hookrightarrow}>> G \times \Sigma^i(\HS) \\
& &  @V{\pi^{i1}_G \times \id}VV \\
G/H^i \times H^{i0} \times \Sigma^i(\HS) @>{\approx}>{\{\hat \phi^{ia}_S\}_a \times \id}> G/H^{i1} \times \Sigma^i(\HS) @>{\simeq}>{\phi^i_{\mathsf P}}> \mathcal B^i(\HS) @<{\simeq}<{\RHO^*}< \mathcal D^i(\mathbb C^m) \times V_m(\HS) @<{\approx}<{\id \times \{\phi^a_S\}_a}< \mathcal D^i(\mathbb C^m) \times G_m(\HS) \times U(m) \\
@V{\Pr_1 \times \Pr_3}VV @V{\pi^{i0}_G \times \id}VV @V{\pi_{HS}}VV @V{\id \times \pi_S}VV @V{\Pr_1 \times \Pr_2}VV \\
G/H^i \times \Sigma^i(\HS) @= G/H^i \times \Sigma^i(\HS) @>{\simeq}>{\phi^i_D}> \mathcal D^i(\HS) @<{\simeq}<{\RHO_*}< \mathcal D^i(\mathbb C^m) \times G_m(\HS) @= \mathcal D^i(\mathbb C^m) \times G_m(\HS) \\
& & @V{\Pr_2}VV @V{\pi_D}VV @V{\pi_D \circ \Pr_1}VV \\
& & \Sigma^i(\HS) @= \Sigma^i(\HS) @<{\simeq}<< \Sigma^i(\mathbb C^m)
  \end{CD} $$
\end{minipage}}\\
where $\pi^{i1}_G : G \to G/H^{i1}$ and $\pi^{i0}_G : G/H^{i1} \to G/(H^{i1} \times H^{i0})$ are the canonical projections; $\RHO_*(\varrho,P) = \RHO(\varrho,(\chi_i)_i)$ for an arbitrary orthonormal basis $(\chi_i)_i$ of $\Ran P$ and 
\begin{equation}
\begin{array}{rcl} \RHO^* : \mathcal D^i(\HS) \times V_m(\HS) & \to & \mathcal B^i(\HS) \\ (\varrho ,\chi) & \mapsto & {\sqrt \varrho^i}_j |\chi_i \rangle \langle \chi^j | \end{array}
\end{equation}
$\mathcal B^i(\HS)$ can be viewed as a (non-trivial) right principal $U(m)$-bundle $\mathsf Q^i$ over $\mathcal D^i(\HS)$ with local trivializations $\RHO^* \circ \phi^a_S \circ \RHO_*^{-1}$. Note that $U(m)$ acts on $\mathcal B^i(\HS)$ by the right action:
\begin{equation}
\forall u \in U(m), \forall W=\RHO^*(\varrho_W,\phi^a(g_W,u_W)) \in \mathcal B^i(\HS), \quad R_{loc}(u)W = \RHO^*(\varrho_W,\phi^a_S(g_W,u_W u))
\end{equation}
By construction $\forall u \in U(m)$, $\exists k_{u,W} \in K$ such that $R_{loc}(u) = R_{HS}(k_{u,W})$. We denote by $K_W^i$ the subgroup of $K$ such that $R_{loc}(U(m))_{|W} = R_{HS}(K_W^i)$. On the left, $\mathcal B^i(\HS)$ is a trivial bundle $\mathsf P^i$ over $\Sigma^i(\HS)$ with typical fiber $G/H^{i1}$ (which is not a group since $H^{i1}$ is not normal).

\subsection{The geometry as categorical principal bundles}
We want to endow the geometric structure with a connective structure describing the geometric phases. On the regular strata, the geometric structure is a principal composite bibundle. A connective structure on a principal bibundle \cite{Aschieri} and a connective structure on a principal composite bundle \cite{Viennot6} present a higher degree of complication than for an usual principal bundle, the compatibility between the two approaches can be also a difficulty. In order to avoid this, and to unify the geometric description between the left and the right, we propose to use a categorical geometric generalization of the principal bundle\cite{Baez1,Baez2,Baez3,Wockel,Nikolaus,Viennot7}.
\subsubsection{Left principal categorical bundle}
We consider a stratum $\Sigma^i(\HS)$. Let $\mathscr S^i$ be the (trivial) category with the set of objects $\Obj \mathscr S^i = \Sigma^i(\HS)$ and the set of morphisms $\Morph \mathscr S^i = \{\id_\sigma, \sigma \in \Sigma^i(\HS) \}$. Let $\mathscr P^i$ be the category with $\Obj \mathscr P^i = \mathcal B^i(\HS)$ and $\Morph \mathscr P^i = \{(W,k), W \in \mathcal B^i(\HS), k \in K^i_W\}$ ($K^i_W = K$ on the regular strata) with the source, the target and the identity maps defined by
\begin{eqnarray}
\forall (W,k) \in \Morph \mathscr P^i, \quad s(W,k) & = & W \\
t(W,k) & = & R_{HS}(k)W = Wk\\
\id_W & = & (W,1_\HS)
\end{eqnarray}
the arrow composition being
\begin{equation}
\forall W \in \mathcal B^i(\HS), \forall k \in K^i_W, \forall k'\in K^i_{Wk}, \quad (Wk,k') \circ (W,k) = (W,kk')
\end{equation}
By construction, arrows of different strata are not composable. Let $\varpi_D \in \Funct(\mathscr P^i,\mathscr S^i)$ be the full functor defined by
\begin{eqnarray}
\forall W \in \Obj \mathscr P^i, \quad \varpi_D(W) & = & \pi_D \circ \pi_{HS}(W) \\
\forall (W,k) \in \Morph \mathscr P^i, \quad \varpi_D(W,k) & = & \id_{\pi_D \circ \pi_{HS}(W)}
\end{eqnarray}
Note that $\varpi_D(t(W,k)) = \varpi_D(Wk) = \pi_D(Wkk^{-1} W^\dagger) = \pi_D(WW^\dagger) = t(\varpi_D(W,k))$.\\
We can remark that the arrows could also be defined by an action on the left: $\Morph \mathscr P^i = \{(h,W), W\in \mathcal B^i(\HS), h \in WK^i_WW^\pinv \subset \overline{H^i} \}$ with $t(h,W) = L_{HS}(h)W$, since $\exists k \in K$ such that $h=WkW^\pinv$ and $L_{HS}(WkW^\pinv)W = WkWW^\pinv = Wk(1-P_{\ker W}) = Wk$ ($kP_{\ker W}=0$, $\forall k \in K_W^i$).\\
Let $\mathscr G^i$ be the groupoid defined by $\Obj \mathscr G^i = G$ and $\Morph \mathscr G^i = G \rtimes \overline{H^i}$ with the source, the target and the identity maps defined by
\begin{eqnarray}
\forall (g,h) \in \Morph \mathscr G^i, \quad s(g,h) & = & g \\
t(g,h) & = & gh \\
\id_g & = & (g,1_\HS)
\end{eqnarray}
the usual arrow composition (called the vertical composition of arrows) being
\begin{equation}
\forall g\in G, \forall h,h'\in \overline{H^i}, \quad (gh,h') \circ (g,h) = (g,hh')
\end{equation}
and the law of the semi-direct product of groups (called the horizontal composition of arrows) being
\begin{equation}
\forall (g,h),(g',h')\in G\rtimes \overline{H^i}, \quad (g',h')(g,h) = (g'g,g^{-1}h'gh)
\end{equation}
Let $\mathscr L : \Morph \mathscr G^i \to \EndFunct(\mathscr P^i)$ be the left action of $\mathscr G^i$ onto $\mathscr P^i$ defined by
\begin{eqnarray}
\forall (g,h) \in \Morph \mathscr G^i, \forall W \in \Obj \mathscr P^i, \quad \mathscr L_{g,h}(W) & = & L_{HS}(gh)W \\
\forall (W,k) \in \Morph \mathscr P^i, \quad \mathscr L_{g,h}(W,k) & = & (ghW,k)
\end{eqnarray}
or similarly by
\begin{equation}
\begin{CD} W \\ @V{(W,k)}VV \\ Wk \end{CD} \overset{\mathscr L_{g,h}}{\Verylongrightarrow} \begin{CD} ghW \\ @VV{(ghW,k)}V \\ ghWk \end{CD}
\end{equation}
The composition of left actions is covariant with the horizontal composition of the groupoid arrows:
\begin{equation}
\mathscr L_{g',h'} \circ \mathscr L_{g,h} = \mathscr L_{g'g,g^{-1}h'gh} = \mathscr L_{(g',h')(g,h)}
\end{equation}
The compatibility between the projection functor $\varpi_D$ and the left action endofunctor $\mathscr L$ is shown by the following commutative diagram (where $\sigma_W = \pi_D \circ \pi_{HS}(W)$):
\begin{equation}
\begin{array}{ccccc}
 & \rotatebox[origin=r]{45}{$\overset{\mathscr L_{g,h}}{\Verylongrightarrow}$} & \begin{CD} ghW \\ @VV{(ghW,k)}V \\ ghWk \end{CD} & \rotatebox{-45}{$\overset{\varpi_D}{\Verylongrightarrow}$} & \\
\begin{CD} W \\ @V{(W,k)}VV \\ Wk \end{CD} & & & & \begin{CD} \sigma_W \\ @VV{\id_{\sigma_W}}V \\ \sigma_W \end{CD} \\
& \rotatebox[origin=r]{-45}{$\overset{\varpi_D}{\Verylongrightarrow}$} & \begin{CD} \sigma_W \\ @VV{\id_{\sigma_W}}V \\ \sigma_W \end{CD} & \rotatebox{45}{$\Verylongeq$}
\end{array}
\end{equation}
$\mathscr P^i$ is naturally equivalent to $\mathscr G^i \times \mathscr S^i$ by the trivialization functors $\phi_{\mathscr P}^i \in \Funct(\mathscr G^i \times \mathscr S^i, \mathscr P^i)$ and $\bar \phi_{\mathscr P}^i \in \Funct(\mathscr P^i, \mathscr G^i \times \mathscr S^i)$ defined by
\begin{equation}
\begin{array}{ccc}
\begin{CD} (g,\sigma) \\ @V{(g,h,\sigma,\id_\sigma)}VV \\ (gh,\sigma) \end{CD} & \overset{\phi_{\mathscr P}^i}{\Verylongrightarrow} & \begin{CD} g\sqrt{\sigma} \\ @VV{(g\sqrt \sigma,k)}V \\ g\sqrt{\sigma}k \end{CD} \\
\\
\begin{CD} (W \sqrt{\sigma_W}^\pinv,\sigma_W) \\ @V{(W\sqrt{\sigma_W}^\pinv,h,\sigma_W,\id_{\sigma_W})}VV \\ Wk\sqrt{\sigma_W}^\pinv \end{CD} & \overset{\bar \phi_{\mathscr P}^i}{\Verylongleftarrow} & \begin{CD} W \\ @VV{(W,k)}V \\ Wk \end{CD}
\end{array}
\end{equation}
with $h = \sqrt \sigma k \sqrt \sigma^\pinv$ and $\sigma_W = \pi_D \circ \pi_{HS}(W)$. We can note that $\phi^i_{\mathscr P}(g,\sigma)=\phi^i_{\mathsf P}(g,\sigma)$.\\

Viewed as a left principal categorial bundle, the presence of the both left and right actions takes a natural meaning. The right action of $K$ (inobservable reconfiguration of the ancilla) defines the arrows of the purification category whereas the left action of $G$ (unitary and SLOCC transformations of the system) defines endofunctors of the purification category playing the role of gauge changes.

\subsubsection{Right principal categorical bundle}
It is possible to reverse the description, by considering that the right action of $K$ defines endofunctors of gauge changes (in accordance with their inobservable character) and that the left action of $G$ defines the arrows.\\

Let $\mathscr M^i$ be the category defined by $\Obj \mathscr M^i = \mathcal D^i(\HS)$ and $\Morph \mathscr M^i = G \times \mathcal D^i(\HS)$, with the source, the target, and the identity maps defined by
\begin{eqnarray}
\forall (g,\rho) \in \Morph \mathscr M^i, \quad s(g,\rho) & = & \rho \\
t(g,\rho) & = & g\rho g^\dagger \\
\id_\rho & = & (1_\HS, \rho)
\end{eqnarray}
the arrow composition being defined by
\begin{equation}
\forall \rho \in \mathcal D^i(\HS), \forall g,g'\in G, \quad (g',g\rho g^\dagger) \circ (g,\rho) = (g'g,\rho)
\end{equation}
Let $\mathscr Q^i$ be the category defined by $\Obj \mathscr Q^i = \mathcal B^i(\HS)$ and $\Morph \mathscr Q^i = G \times \mathcal B^i(\HS)$, with the source, the target and the identity maps defined by
\begin{eqnarray}
\forall (g,W) \in \Morph \mathscr Q^i, \quad s(g,W) & = & W \\
t(g,W) & = & gW \\
\id_W & = & (1_\HS,W)
\end{eqnarray}
the arrow composition being defined by
\begin{equation}
\forall W \in \mathcal B^i(\HS), \forall g,g'\in G, \quad (g',gW) \circ (g,W) = (g'g,W)
\end{equation}
Let $\varpi_{HS} \in \Funct(\mathscr Q^i,\mathscr M^i)$ be the full functor defined by
\begin{eqnarray}
\forall W \in \Obj \mathscr Q^i, \quad \varpi_{HS}(W) & = & \pi_{HS}(W)= WW^\dagger \\
\forall (g,W) \in \Morph \mathscr Q^i, \quad \varpi_{HS}(g,W) & = & (g,WW^\dagger)
\end{eqnarray}

Let $\mathscr K^i$ be the groupoid with $\Obj \mathscr K^i = K^i$ and $\Morph \mathscr K^i = K^i \ltimes K^i$ (where $K^i_W \simeq K^i$, $\forall W \in \mathcal B^i(\HS)$; $K^i = K$ on the regular strata), with the source, the target and the identity maps defined by
\begin{eqnarray}
\forall (q,k) \in \Morph \mathscr K^i, \quad s(q,k) & = & k \\
t(q,k) & = & qk \\
\id_k & = & (1_\HS,k)
\end{eqnarray}
the usual arrow composition (called the vertical composition of arrows) being
\begin{equation}
\forall k\in K^i, \forall q,q'\in K^i, \quad (q',qk) \circ (q,k) = (q'q,k)
\end{equation}
and the law of the semi-direct product of groups (called the horizontal composition of arrows) being
\begin{equation}
\forall (q,k),(q',k')\in K^i \ltimes K^i, \quad (q',k')(q,k) = (q'k'q(k')^{-1},k'k)
\end{equation}
Let $\mathscr R:\Morph \mathscr K^i \to \EndFunct(\mathscr Q^i)$ be the right action of $\mathscr K^i$ onto $\mathscr Q^i$ defined by
\begin{equation}
\begin{CD} W \\ @V{(g,W)}VV \\ gW \end{CD} \overset{\mathscr R_{q,k}}{\Verylongrightarrow} \begin{CD} Wqk \\ @VV{(g,Wqk)}V \\ gWqk \end{CD}
\end{equation}
where to simplify the notations, for the singular strata, we write $Wk$ in place of $R_{loc}(k)W$. The composition of right actions is contravariant with the horizontal composition of the groupoid arrows:
\begin{equation}
\mathscr R_{q,k} \circ \mathscr R_{q',k'} = \mathscr R_{q'k'q(k')^{-1},k'k} = \mathscr R_{(q',k')(q,k)}
\end{equation}
The compatibility between the projection functor $\varpi_{HS}$ and the right action endofunctor $\mathscr R$ is shown by the following commutative diagram (where $\rho_W = WW^\dagger$):
\begin{equation}
\begin{array}{ccccc}
 & \rotatebox[origin=r]{45}{$\overset{\mathscr R_{q,k}}{\Verylongrightarrow}$} & \begin{CD} Wqk \\ @VV{(g,Wqk)}V \\ gWqk \end{CD} & \rotatebox{-45}{$\overset{\varpi_{HS}}{\Verylongrightarrow}$} & \\
\begin{CD} W \\ @V{(g,W)}VV \\ gW \end{CD} & & & & \begin{CD} \rho_W \\ @VV{(g,\rho_W)}V \\ g\rho_W g^\dagger \end{CD} \\
& \rotatebox[origin=r]{-45}{$\overset{\varpi_{HS}}{\Verylongrightarrow}$} & \begin{CD} \rho_W \\ @VV{(g,\rho_W)}V \\ g\rho_W g^\dagger \end{CD} & \rotatebox{45}{$\Verylongeq$}
\end{array}
\end{equation}
$\mathscr Q^i$ is naturally equivalent to $\mathscr M^i \times \mathscr K^i$ by the trivialization functors $\phi_{\mathscr Q}^i \in \Funct(\mathscr M^i \times \mathscr K^i, \mathscr Q^i)$ and $\bar \phi_{\mathscr Q}^i \in \Funct(\mathscr Q^i,\mathscr M^i \times \mathscr K^i)$ defined by
\begin{equation}
\begin{array}{ccc}
\begin{CD} (\rho,k) \\ @V{(g,\rho,q,k)}VV \\ (g\rho g^\dagger,qk) \end{CD} & \overset{\phi_{\mathscr Q}^i}{\Verylongrightarrow} & \begin{CD} \sqrt \rho k \\ @VV{(\sqrt{g\rho g^\dagger}q \sqrt{\rho}^\pinv,\sqrt \rho k)}V \\ \sqrt{g\rho g^\dagger} qk \end{CD} \\
\\
\begin{CD} (WW^\dagger,\sqrt{WW^\dagger}^\pinv W) \\ @V{(g,WW^\dagger,\sqrt{gWW^\dagger g^\dagger}^\pinv g \sqrt{WW^\dagger},\sqrt{WW^\dagger}^\pinv W)}VV \\ (gWW^\dagger g^\dagger,\sqrt{gWW^\dagger g^\dagger}^\pinv gW) \end{CD} & \overset{\bar \phi_{\mathscr Q}^i}{\Verylongleftarrow} & \begin{CD} W \\ @VV{(g,W)}V \\ gW \end{CD}
\end{array}
\end{equation}
We can note that $\phi^i_{\mathscr Q}(\rho,k) = \phi^i_{\mathsf Q}(\rho,k)$. The arrows of the groupoid are $K^i \ltimes K^i$ because $(\sqrt{g\rho g^\dagger}^\pinv g \sqrt{\rho},k)$ models the transition from $K^i_{\sqrt \rho k}$ to $K^i_{g \sqrt \rho k}$ over the arrow $(g,\rho) \in \Morph \mathscr M^i$.\\

We remark that we can unify the categories: $\mathscr P = \bigsqcup_i \mathscr P^i$ and $\mathscr Q = \bigsqcup_i Q^i$, on which act the groupoids $\mathscr G = \bigsqcup_i \mathscr G^i$ and $\mathscr K =\bigsqcup \mathscr K^i$ (except that the arrows of different strata cannot be composed neither horizontally nor vertically). $\mathscr P$ and $\mathscr Q$ can be viewed as kinds of categorical principal bundles, but not strictly as principal 2-bundles\cite{Baez1,Baez2,Baez3,Wockel,Nikolaus,Viennot7} (the direct categorical generalization of the principal bundles) since in contrast with $\mathscr G^i$ and $\mathscr K^i$, $\mathscr G$ and $\mathscr K$ have not the structure of Lie crossed modules\cite{Baez1,Baez2,Baez3,Wockel,Nikolaus,Viennot7}.\\

Table \ref{categ} summarizes the roles of the different entities in the category formalism, and table \ref{categ2} summarizes the category structures.
\begin{table}
\caption{\label{categ} The different entities in the category formalism.}
\rotatebox{90}{\begin{tabular}{|r|c||c|c|l|}
\hline
\textit{entities} & \textit{symb.} & \textit{left role} & \textit{right role} & \textit{interpretation} \\
\hline \hline
left transformations & $g$ & object gauge changes & arrows of the categories & unitary/SLOCC transformations of $\mathcal S$ \\
\hline
right transformations & $k$ & arrows of the total category & object gauge changes & inobservable reconfigurations of $\mathcal A$ \\
\hline
stabilizers & $h$ & arrow gauge changes & inner arrows & inefficient transformations of $\mathcal S$ \\
\hline
diagonal matrices & $\sigma$ & objects of the base category & & statistical probabilities \\
\hline
density operators & $\rho$ & & objects of the base category & quantum mixed states \\
\hline
bounded operators & $W$ & objects of the total category & objects of the total category & quantum purified states \\
\hline
\end{tabular}}
\end{table}

\begin{table}
\caption{\label{categ2} The different category structures.}
\rotatebox{90}{\begin{tabular}{|r|c||c|c|l|}
\hline
\textit{Category} & \textit{symb.} & \textit{objects} & \textit{arrows} & \textit{interpretation} \\
\hline \hline
left base category & $\mathscr S^i$ & $\Sigma^i(\HS)$ & $\id_{\Sigma^i(\HS)}$ & space of the statistical probabilities \\
\hline
right base category & $\mathscr M^i$ & $\mathcal D^i(\HS)$ & $G \times \mathcal D^i(\HS)$ & quantum mixed state space with system transformations \\
\hline
left total category & $\mathscr P^i$ & $\mathcal B^i(\HS)$ & $\mathcal B^i(\HS) \times_{\mathcal B^i(\HS)} K^i$ & quantum purified state space with ancilla transformations \\
\hline
right total category & $\mathscr Q^i$ & $\mathcal B^i(\HS)$ & $G \times \mathcal B^i(\HS)$ & quantum purified state space with system transformations \\
\hline
left groupoid & $\mathscr G^i$ & $G$ & $G \rtimes \overline{H^i}$ & unitary and SLOCC transformations of the system \\
\hline
right groupoid & $\mathscr K^i$ & $K^i$ & $K^i \ltimes K^i$ &  inobservable reconfigurations of the ancilla\\
\hline
\end{tabular}}
\end{table}

\subsection{Connections}
\subsubsection{The connective structure of the left principal categorical bundle}
A connection on a principal 2-bundle with a base space being a trivial category is \cite{Baez1,Baez2,Baez3,Wockel} the data of a familly of connections on the objects related by a kind of gauge changes associated with the arrows.\\
More precisely, let $\mathcal B^i(\HS) = \Obj \mathscr P^i$ be viewed as a principal bundle over $\Sigma^i(\HS)$. $\forall W \in \mathcal B^i(\HS)$, the (left) vertical tangent space at $W$ is defined by
\begin{equation}
V_W^L \mathcal B^i(\HS) = \{XW, X \in \mathfrak g \setminus \mathfrak h^{i1} \}
\end{equation}
where $\mathfrak g$ is the Lie algebra associated with $G$ and $\mathfrak h^{i1}$ is the Lie algebra associated with $H^{i1}$ ($\mathfrak h^{i1} = \{0\}$ on the regular strata). We define the (left) horizontal tangent space at $W$ by
\begin{equation}
H^L_W \mathcal B^i(\HS) = \{\mathfrak X \in T_W \mathcal B^i(\HS), \mathfrak X W^\pinv = 0 \}
\end{equation}
On the regular strata, $H^L_W \mathcal B^i(\HS)$ is a connection for the $G$-principal bundle $\mathcal B^i(\HS)$. This connection is characterized by the connection 1-form $\omega_W^L \in \Omega^1(\mathcal B^i(\HS),\mathfrak g)$ defined by $\omega_W^L(\mathfrak X) = \mathfrak XW^{-1}$. On the singular strata, we consider the trivial $G$-bundle $G \times \Sigma^i(\HS)$ and we set an arbitrary connection $H^L_{(g,\sigma)} (G \times \Sigma^i(\HS))$ such that $\phi^i_{P*} \pi^{i1}_{G*} H^L_{(g,\sigma)} (G \times \Sigma^i(\HS)) = H^L_{\phi^i_{P}(\pi^{i1}_G(g),\sigma)} \mathcal B^i(\HS)$. We introduce $\omega^L_W \in \Omega^1(\mathcal B^i(\HS),\mathfrak g)$ such that $\pi^{i1*}_G \phi^{i*}_P \omega^L_W$ is the connection 1-form of $G \times \Sigma^i(\HS)$. Two different choices of arbitrary connections in $H^L_{(g,\sigma)} (G \times \Sigma^i(\HS))$ are related by $\tilde \omega^L_W - \omega^L_W = \eta_W \in \Omega^1(\mathcal B^i(\HS),\mathfrak h^{i1})$ which is a $\mathfrak h^{i1}$-gauge change of the second kind.\\
Let $\Phi : \Obj \mathcal P^i \to G$ be a left equivariant map from $\mathcal B^i(\HS)$ to $G$ ($\forall W \in \mathcal B^i(\HS)$, $\forall g \in G$, $\Phi(gW) = g\Phi(W)$). $\Phi$ defines a gauge transformation $a_\Phi : \Obj \mathscr P^i \to \Obj \mathscr P^i$ by $a_\Phi(W) = \Phi(W) W$. Under this gauge transformation, the connection 1-form becomes:
\begin{eqnarray}
a_\Phi^* \omega^L_W(\mathfrak X) & = & \omega^L_{\Phi(W)W} (a_{\Phi *} \mathfrak X) \\
& = & \omega^L_{\Phi(W)W} (\Phi(W)\mathfrak X) + \omega^L_{\Phi(W)W}(d\Phi(W) \Phi(W)^{-1} \Phi(W) W)
\end{eqnarray}
since $\forall \gamma(t)$ a curve on $\mathcal B^i(\HS)$ having $\mathfrak X$ as tangent vector at $t=0$ ($\gamma(0)=W$), we have $\frac{d}{dt} a_\Phi(\gamma(t)) = \frac{d}{dt} (\Phi(\gamma(t)) \gamma(t)) = \frac{d\Phi(\gamma(t))}{dt} \gamma(t) + \Phi(\gamma(t)) \frac{d\gamma(t)}{dt}$, and then $a_{\Phi *} \mathfrak X = \left. \frac{d}{dt} a_{\Phi}(\gamma(t)) \right|_{t=0} = d\Phi \Phi^{-1} \Phi W  + \Phi \mathfrak X$ ($d\Phi \Phi^{-1}: \mathcal B^i(\HS) \to \mathfrak g$). It follows that
\begin{equation}
a_\Phi^* \omega^L_W(\mathfrak X) = \Phi(W) \omega_W^L(\mathfrak X) \Phi(W)^{-1} + d\Phi(W)\Phi(W)^{-1}
\end{equation}
Let $\Upsilon : \Obj \mathscr P^i \to \bigcup_{W \in \mathcal B^i(\HS)} K^i_W$ be a right equivariant map ($\forall W \in \mathcal B^i(\HS)$, $\forall k \in K^i_W$, $\Upsilon(Wk) = \Upsilon(W)k$). $\Upsilon$ defines another kind of gauge transformation $a_\Upsilon : \Obj \mathcal P^i \to \Obj \mathcal P^i$ by $a_\Upsilon(W) = W\Upsilon(W)$. Under this gauge transformation of the second kind, the connection 1-form becomes
\begin{eqnarray}
a_\Upsilon^* \omega^L_W(\mathfrak X) & = & \omega^L_{W\Upsilon(W)} (a_{\Upsilon *} \mathfrak X) \\
& = & \omega^L_{W \Upsilon(W)}(\mathfrak X \Upsilon(W)) + \omega^L_{W \Upsilon(W)} (W \Upsilon(W) \Upsilon(W)^{-1} d\Upsilon(W))
\end{eqnarray}
since $\frac{d}{dt} a_\Upsilon(\gamma(t)) = \frac{d\gamma(t)}{dt} \Upsilon(\gamma(t)) + \gamma(t) \frac{d\Upsilon(\gamma(t))}{dt}$ and then $a_{\Upsilon *}\mathfrak X = \mathfrak X \Upsilon + W \Upsilon^{-1} d\Upsilon$ ($\Upsilon^{-1}d\Upsilon : \mathcal B^i(\HS) \to \mathfrak k^i_W$; $\mathfrak k^i_W$ being the Lie algebra of $K^i_W$). It follows that
\begin{equation}
a_\Upsilon^* \omega^L_W(\mathfrak X) = \omega^L_W(\mathfrak X) + W d\Upsilon(W) \Upsilon(W)^{-1} W^\pinv
\end{equation}
$(Wd\Upsilon \Upsilon^{-1} W^\pinv) WW^\dagger + WW^\dagger(Wd\Upsilon \Upsilon^{-1} W^\pinv)^\dagger = Wd\Upsilon \Upsilon^{-1} W + W \Upsilon d\Upsilon^{-1} W = 0$, we have then $Wd\Upsilon(W) \Upsilon(W)^{-1} W^\pinv \in \mathfrak{gl}(\HS)_{WW^\dagger} \subset \overline{\mathfrak h^i} = \bigoplus_{g\in G} g\mathfrak h^i g^{-1}$. The gauge transformation of the second kind is then a $\overline{\mathfrak h^i}$ gauge transformation. The connection on $\mathscr P^i$ is then not constituted by a single usual connection of $\mathcal B^i(\HS)$, but by a familly of usual connections of $\mathcal B^i(\HS)$ related by additions of $\overline{\mathfrak h^i}$-valued 1-forms.\\
We are now able to provide the local expressions of the connection. Let $s \in \Gamma(\Sigma^i(\HS),\mathcal B^i(\HS))$ be a local section ($s:\sigma \mapsto W_s(\sigma)$). We have
\begin{equation}
s^* \omega^L = dW_sW^\pinv_s = {^s}\mathfrak A \in \Omega^1(\Sigma^i(\HS),\mathfrak g)
\end{equation}
The $G$-gauge potential ${^s}\mathfrak A$ of the connection is the generator of the operator valued geometric phase (the connection having being defined for that purpose). Under a gauge change of the first kind $W_s(\sigma) \to g(\sigma) W_s(\sigma)$ (with $g \in \Omega^0(\Sigma^i(\HS),G)$), the gauge potential becomes:
\begin{equation}
{^{gs}}\mathfrak A = g{^s}\mathfrak Ag^{-1} + dg g^{-1}
\end{equation}
Under a gauge change of the second kind $W_s(\sigma) \to W_s(\sigma) k(\sigma)$ (with $k \in \Omega^i(\Sigma^i(\HS),K^i_{W(\sigma)})$), the gauge potential becomes:
\begin{equation}
{^{sk}} \mathfrak A = {^s}\mathfrak A + {^s}\eta_k
\end{equation}
with ${^s}\eta_k = W_s dk k^{-1} W_s^\pinv \in \Omega^1(\Sigma^i(\HS),\overline{\mathfrak h^i})$ (the $\overline{\mathfrak h^i}$-potential-transformation becomes under a gauge change of the first kind ${^{gs}}\eta_k = g {^s}\eta_k g^{-1}$). The arbitrary nature of the $\overline{\mathfrak h^i}$-potential-transformation is clear in the formalism (it is gauge change of the second kind).\\

The composite nature of the bundle $\mathcal B^i(\HS)$ permits also to consider an intermediate entity between the local and the global expression of the connection. Let $s^{\upcircarrow} \in \Gamma(\mathcal D^i(\HS), \mathcal B^i(\HS))$ be the local section ($s^{\upcircarrow}(\rho) = W_{s\upcircarrow}(\rho)$) such that $s^{\upcircarrow}(\rho) = \pi_{HS}(s(\pi_D(\rho)))$. The 1-form:
\begin{equation}
s^{\upcircarrow *} \omega^L = dW_{s\upcircarrow} W^\pinv_{s\upcircarrow} = {^{s \upcircarrow}}\mathfrak A \in \Omega^1(\mathcal D^i(\HS),\mathfrak g)
\end{equation}
satisfies for all $h \in \Omega^0(\mathcal D^i(\HS),\overline{H^i})$ such that $h(\rho) \in GL(\HS)_\rho = g_\rho H^i g_\rho^{-1}$ ($\phi^i_D(g_\rho,\pi_D(\rho)) = \rho$), the following gauge transformation rule:
\begin{equation}
{^{hs\upcircarrow}}\mathfrak A = h {^{s \upcircarrow}}\mathfrak A h^{-1}+ dh h^{-1}
\end{equation}

\begin{prop}
\label{generalgaugechange}
$\forall g \in \Omega^0(\mathcal D^i(\HS),G)$ we have
\begin{equation}
{^{s\upcircarrow}} \mathfrak A(g\rho g^\dagger) = g {^{s\upcircarrow}}\mathfrak A(\rho) g^{-1} + dgg^{-1} + g{^s}\eta(g,\rho) g^{-1}
\end{equation}
where ${^s}\eta \in \Omega^1(G \times \mathcal D^i(\HS),\overline{\mathfrak h^i})$ is a gauge change of the second kind. 
\end{prop}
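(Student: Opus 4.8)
The plan is to bypass any direct transformation of the section or of its pseudo-inverse, and instead exploit the characterisation $d\rho = \mathfrak A \rho + \rho \mathfrak A^\dagger$ recorded just after Theorem \ref{thgeoph}. For any purification $W$ with $WW^\dagger = \rho$, the one-form $dWW^\pinv$ solves this equation, and two solutions of $d\rho = \mathfrak A\rho+\rho\mathfrak A^\dagger$ can differ only by a term $X$ with $X\rho+\rho X^\dagger = 0$, i.e. by an element of the stabiliser Lie algebra $\mathfrak{gl}(\HS)_\rho \subset \overline{\mathfrak h^i}$. The strategy is therefore to show that the genuine potential ${^{s\upcircarrow}}\mathfrak A(g\rho g^\dagger)$ and the naive first-kind transform $g\,{^{s\upcircarrow}}\mathfrak A(\rho)\,g^{-1} + dg\,g^{-1}$ both solve the defining equation at the moved point $\rho' = g\rho g^\dagger$; their difference is then \emph{forced} to be $\overline{\mathfrak h^i}$-valued, and renaming it $g\,{^s}\eta(g,\rho)\,g^{-1}$ yields the claim.

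First I would run the core computation. Applying $d$ to $\rho' = g\rho g^\dagger$ by Leibniz and inserting $d\rho = {^{s\upcircarrow}}\mathfrak A\,\rho + \rho\,{^{s\upcircarrow}}\mathfrak A^\dagger$ gives
\begin{equation}
d\rho' = (dg)\rho g^\dagger + g\big({^{s\upcircarrow}}\mathfrak A\,\rho + \rho\,{^{s\upcircarrow}}\mathfrak A^\dagger\big)g^\dagger + g\rho(dg)^\dagger,
\end{equation}
which regroups, upon inserting $g^{-1}g$ and $(g^\dagger)^{-1}g^\dagger$, into $\mathfrak A'_0\rho' + \rho'\mathfrak A'^{\dagger}_0$ with $\mathfrak A'_0 = dg\,g^{-1} + g\,{^{s\upcircarrow}}\mathfrak A\,g^{-1}$. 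Crucially, $s^{\upcircarrow}$ is a section of $\mathcal B^i(\HS)\to\mathcal D^i(\HS)$, so $W_{s\upcircarrow}(\rho')W_{s\upcircarrow}(\rho')^\dagger = \rho'$ automatically, and hence ${^{s\upcircarrow}}\mathfrak A(\rho') = dW_{s\upcircarrow}(\rho')\,W_{s\upcircarrow}(\rho')^\pinv$ solves the same equation $d\rho' = {^{s\upcircarrow}}\mathfrak A(\rho')\rho' + \rho'{^{s\upcircarrow}}\mathfrak A(\rho')^\dagger$ — with no pseudo-inverse ever computed. Subtracting, $X := {^{s\upcircarrow}}\mathfrak A(g\rho g^\dagger) - \mathfrak A'_0$ satisfies $X\rho' + \rho'X^\dagger = 0$, so $X \in \mathfrak{gl}(\HS)_{\rho'} \subset \overline{\mathfrak h^i}$; setting ${^s}\eta(g,\rho) := g^{-1}Xg$, which is again in $\overline{\mathfrak h^i}$ by the conjugation-invariance $\overline{\mathfrak h^i} = \bigoplus_{g\in G} g\mathfrak h^i g^{-1}$, produces exactly the stated formula.

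What remains is to confirm that ${^s}\eta$ is genuinely a gauge change of the second kind, and this is where I expect the only real friction. Using the extension $\pi_D(g\rho g^\dagger) = \pi_D(\rho) = \sigma$ together with the conversion of a left stabiliser into a right $K$-factor through $W_s$ (if $h\sigma h^\dagger = \sigma$ then $hW_s(\sigma) = W_s(\sigma)k$ with $k\in K$, since both sides have the same $\,\cdot\,\cdot^\dagger$), one obtains the decomposition $W_{s\upcircarrow}(g\rho g^\dagger) = g\,W_{s\upcircarrow}(\rho)\,k(g,\rho)$ with $k(g,\rho)\in K$; tracking this $k(g,\rho)$ through the computation identifies $X$ with a term of the form $W_{s\upcircarrow}\,dk\,k^{-1}W_{s\upcircarrow}^\pinv$ conjugated by $g$, matching the second-kind gauge changes $W\,dk\,k^{-1}W^\pinv \in \mathfrak{gl}(\HS)_{WW^\dagger}$ established earlier. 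The anticipated obstacle is purely bookkeeping: for non-unitary $g\in GL(\HS)$ one may \emph{not} write $(gVk)^\pinv = k^{-1}V^\pinv g^{-1}$, because $g$ spoils the orthogonality of the range projection demanded by $\pinv$. The $d\rho = \mathfrak A\rho+\rho\mathfrak A^\dagger$ route is chosen precisely to sidestep inverting $gVk$, so this difficulty never enters the proof of the transformation law and intervenes only in the a posteriori identification of ${^s}\eta$ as a second-kind term on $G\times\mathcal D^i(\HS)$.
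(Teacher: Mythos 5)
Your proof is correct, but it takes a genuinely different route from the paper's. The paper proceeds by explicit computation: it writes $W_{s\upcircarrow}(\rho)=\sqrt{\rho}\,k_\rho$, first treats the global section $\sqrt{\cdot}$ by extracting $d\sqrt{\rho}\sqrt{\rho}^\pinv=\frac{1}{2}d\rho\,\rho^\pinv+X_\rho\rho^\pinv$ from $d\sqrt\rho\sqrt\rho+\sqrt\rho d\sqrt\rho=d\rho$, transforms each term under $\rho\mapsto g\rho g^\dagger$, and then adds the correction $\sqrt{\rho}\,dk_\rho k_\rho^{-1}\sqrt{\rho}^\pinv$ for a general section, verifying at each stage that the leftover pieces $\eta_1,\eta_2$ satisfy $\eta\rho+\rho\eta^\dagger=0$. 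Your argument reaches the same conclusion abstractly: since $s^{\upcircarrow}$ is a section of $\pi_{HS}$, both ${^{s\upcircarrow}}\mathfrak A(g\rho g^\dagger)$ and the naive transform $g\,{^{s\upcircarrow}}\mathfrak A(\rho)g^{-1}+dg\,g^{-1}$ solve $d\rho'=Y\rho'+\rho'Y^\dagger$ at $\rho'=g\rho g^\dagger$ (your Leibniz computation checks out, and the identity $d\rho=\mathfrak A\rho+\rho\mathfrak A^\dagger$ is indeed available from the discussion following Theorem \ref{thgeoph}), so their difference lies in $\mathfrak{gl}(\HS)_{\rho'}$ and conjugating by $g^{-1}$ lands it in $\mathfrak{gl}(\HS)_{\rho}\subset\overline{\mathfrak h^i}$ — which is precisely the criterion the paper itself uses to certify a gauge change of the second kind. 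What your approach buys is brevity and a complete avoidance of pseudo-inverse manipulations (in particular the trap $(gVk)^\pinv\neq k^{-1}V^\pinv g^{-1}$ for non-unitary $g$, which the paper must navigate); what it gives up is the explicit formula for ${^s}\eta=\eta_1+\eta_2$ that the paper's computation produces as a byproduct. Your final paragraph on identifying ${^s}\eta$ with a term of the form $W\,dk\,k^{-1}W^\pinv$ is not actually required — the paper's own proof stops at $\eta\rho+\rho\eta^\dagger=0$ and does not establish that representation either — so its sketchier character is not a gap in the proof of the proposition as stated.
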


\begin{proof}
$\forall \rho \in \mathcal D^i(\HS)$, $\exists k_\rho \in K$ such that $s^{\upcircarrow}(\rho) = W_{s\upcircarrow} = \sqrt{\rho} k_\rho$. In a first time, we consider the case of the global section $s^{\upcircarrow}(\rho) = \sqrt{\rho}$.
\begin{equation}
d\sqrt \rho \sqrt \rho + \sqrt \rho d\sqrt \rho = d\rho \Rightarrow d\sqrt \rho \sqrt \rho + (d\sqrt \rho \sqrt \rho)^\dagger = \frac{1}{2} (d\rho + d\rho^\dagger)
\end{equation}
\begin{eqnarray}
\Rightarrow \exists X_\rho, \text{ s.t. } X_\rho^\dagger = - X_\rho, & \quad & d\sqrt \rho \sqrt \rho = \frac{1}{2} d\rho + X_\rho \\
& & d\sqrt \rho \sqrt \rho^\pinv = \frac{1}{2} d\rho \rho^\pinv + X_\rho \rho^\pinv
\end{eqnarray}
We denote by ${^{\sqrt{.}\upcircarrow}}\mathfrak A(\rho) = d\sqrt{\rho} \sqrt{\rho}^\pinv$ the gauge potential for the global section, and by ${^{\id \upcircarrow}} \mathfrak A(\rho) = \frac{1}{2} d\rho \rho^\pinv$. By noting that $X_\rho \rho^\pinv \rho + \rho (X_\rho \rho^\pinv)^\dagger = 0$, we see that ${^{\sqrt{.}\upcircarrow}}\mathfrak A$ and ${^{\id \upcircarrow}} \mathfrak A$ are related by a $\overline{\mathfrak h^i}$-gauge change.
\begin{eqnarray}
{^{\sqrt{.}\upcircarrow}}\mathfrak A(g\rho g^\dagger) & = & {^{\id \upcircarrow}} \mathfrak A(g\rho g^\dagger) + X_{g\rho g^\dagger} g^{\dagger -1} \rho^\pinv g^{-1} \\
& = & \frac{1}{2} d(g\rho g^\dagger) (g\rho g^\dagger)^\pinv + X_{g\rho g^\dagger} g^{\dagger -1} \rho^\pinv g^{-1} \\
& = & \frac{1}{2} g d\rho \rho^\pinv g^{-1} + dg g^{-1} \nonumber \\
& & + g\left(\frac{1}{2} \rho dg^\dagger g^{\dagger -1} \rho^\pinv - \frac{1}{2} g^{-1} dg + g^{-1} X_{g\rho g^\dagger} g^{\dagger -1} \rho^\pinv \right) g^{-1} \\
& = & g {^{\sqrt{.}\upcircarrow}}\mathfrak A(\rho)g^{-1} + dg g^{-1} \nonumber \\
& & + g \underbrace{\left(\frac{1}{2} \rho dg^\dagger g^{\dagger -1} \rho^\pinv - \frac{1}{2} g^{-1}dg + g^{-1} X_{g\rho g^\dagger} g^{\dagger -1} \rho^\pinv - X_\rho \rho^\pinv \right)}_{\eta_1(g,\rho)} g^{-1}
\end{eqnarray}
$\eta_1 \rho + \rho \eta_1^\dagger = 0$, $\eta_1$ is a $\overline{\mathfrak h^i}$-gauge change.\\
We return to the general case:
\begin{equation}
{^{s\upcircarrow}} \mathfrak A(\rho) = dW_{s \upcircarrow} W_{s \upcircarrow}^\pinv = {^{\sqrt{.}\upcircarrow}}\mathfrak A(\rho) + \sqrt \rho dk_\rho k_\rho^{-1} \sqrt \rho^\pinv
\end{equation}
\begin{eqnarray}
{^{s\upcircarrow}} \mathfrak A(g \rho g^\dagger) & = & g {^{\sqrt{.}\upcircarrow}}\mathfrak A(\rho) g^{-1}+ dg g^{-1} + g\eta_1(g,\rho) g^{-1} + \sqrt{g \rho g^\dagger} dk_{g \rho g^\dagger} k^{-1}_{g \rho g^\dagger} \sqrt{g \rho g^\dagger}^\pinv \\
& = & g {^{s\upcircarrow}} \mathfrak A(\rho) g^{-1} + dgg^{-1} + g \eta_1(g,\rho) g^{-1} \nonumber \\
& & + \underbrace{\sqrt{g \rho g^\dagger} dk_{g \rho g^\dagger} k^{-1}_{g \rho g^\dagger} \sqrt{g \rho g^\dagger}^\pinv - g\sqrt{\rho} dk_{\rho} k^{-1}_{\rho} \sqrt{\rho}^\pinv g^{-1}}_{g \eta_2(g,\rho) g^{-1}}
\end{eqnarray}
\begin{eqnarray}
  \eta_2 \rho + \rho \eta_2^\dagger & =  & g^{-1} \sqrt{g\rho g^\dagger} dk_{g\rho g^\dagger} k_{g\rho g^\dagger} \sqrt{g\rho g^\dagger}^\pinv g \rho \nonumber \\
  & & - \sqrt{\rho} dk_\rho k_\rho^{-1} \sqrt{\rho} + \rho g^{\dagger} \sqrt{g \rho g^\dagger}^\pinv k_{g\rho g^\dagger} dk_{g\rho g^\dagger}^{-1} \sqrt{g\rho g^\dagger} g^{\dagger -1} \nonumber \\
  & & - \sqrt \rho k_\rho d k_\rho^{-1} \sqrt{\rho}
\end{eqnarray}
Since $kdk^{-1} = - dk k^{-1}$ and $\rho g^\dagger = g^{-1} \sqrt{g \rho g^\dagger} \sqrt{g \rho g^\dagger} \Rightarrow \rho g^\dagger \sqrt{g \rho g^\dagger}^\pinv = g^{-1} \sqrt{g \rho g^\dagger}$, we have $\eta_2 \rho + \rho \eta_2^\dagger = 0$, $\eta_2$ is a $\overline{\mathfrak h^i}$-gauge change.
\end{proof}
${^{s\upcircarrow}}\mathfrak A$ plays the roles of a $\overline{H^i}$-gauge potential and of a $G/H^i$-connection 1-form. The local and the intermediate expressions of the connection are related by
\begin{equation}
{^{s \upcircarrow}} \mathfrak A(\rho) = g_\rho {^s}\mathfrak A(\pi_D(\rho))g_\rho^{-1} + dg_\rho g_\rho^{-1} + g_\rho {^{\sqrt{.}}}\eta(g_\rho,\sigma) g_\rho^{-1}
\end{equation}
Conversely we have
\begin{equation}
{^s} \mathfrak A(\sigma) = (\pi_D^* {^{s \upcircarrow}} \mathfrak A)(\sigma)
\end{equation}

${^s}\mathfrak A$ defines the curving\cite{Baez1,Baez2,Baez3,Viennot1} of $\mathscr P^i$ as
\begin{equation}
{^s} B^L = d{^s}\mathfrak A - {^s}\mathfrak A \wedge {^s}\mathfrak A \in \Omega^2(\Sigma^i(\HS),\overline{\mathfrak h^i})
\end{equation}

\begin{prop}
${^s} B^L(\sigma) \in \overline{\mathfrak h^i}$
\end{prop}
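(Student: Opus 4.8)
The plan is to reduce the statement to the algebraic characterization of $\overline{\mathfrak h^i}$ already used in the proof of Theorem \ref{thgeoph}: a matrix lies in $\mathfrak{gl}(\HS)_\rho=\{X\in\mathcal B(\HS),\,X\rho+\rho X^\dagger=0\}$ precisely when $X\rho+\rho X^\dagger=0$, and $\mathfrak{gl}(\HS)_\rho\subset\overline{\mathfrak h^i}$ as soon as $\rho$ lies in the stratum $\mathcal D^i(\HS)$. Writing $\rho_s=W_sW_s^\dagger=\pi_{HS}(W_s)$ for the section $s:\sigma\mapsto W_s(\sigma)$, and recalling that the section condition $\varpi_D\circ s=\id$ forces $\pi_D(\rho_s)=\sigma$, hence $\rho_s\in\mathcal D^i(\HS)$, it therefore suffices to establish the single relation ${}^sB^L\,\rho_s+\rho_s\,{}^sB^{L\dagger}=0$. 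The natural starting point is the identity $d\rho=\mathfrak A\rho+\rho\mathfrak A^\dagger$ recorded just after the statement of Theorem \ref{thgeoph}; pulling it back along $s$ gives, on $\Sigma^i(\HS)$, the relation $d\rho_s={}^s\mathfrak A\,\rho_s+\rho_s\,{}^s\mathfrak A^\dagger$ (this uses only $W^\pinv WW^\dagger=W^\dagger$, valid because $\Ran W^\dagger\perp\ker W$, so it survives on the singular strata where $W^\pinv$ replaces $W^{-1}$).

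Next I would differentiate this relation and exploit $d^2\rho_s=0$. Applying the graded Leibniz rule to the matrix-valued forms ${}^s\mathfrak A$ (degree $1$) and $\rho_s$ (degree $0$) yields
\[
0=(d\,{}^s\mathfrak A)\rho_s-{}^s\mathfrak A\wedge d\rho_s+(d\rho_s)\wedge{}^s\mathfrak A^\dagger+\rho_s\,d\,{}^s\mathfrak A^\dagger .
\]
Substituting $d\rho_s={}^s\mathfrak A\,\rho_s+\rho_s\,{}^s\mathfrak A^\dagger$ into the two middle terms, the cross contributions $\mp\,{}^s\mathfrak A\,\rho_s\wedge{}^s\mathfrak A^\dagger$ cancel and the remainder regroups as
\[
0=\big(d\,{}^s\mathfrak A-{}^s\mathfrak A\wedge{}^s\mathfrak A\big)\rho_s+\rho_s\big(d\,{}^s\mathfrak A^\dagger+{}^s\mathfrak A^\dagger\wedge{}^s\mathfrak A^\dagger\big).
\]
I would then identify the two brackets with ${}^sB^L$ and ${}^sB^{L\dagger}$: the first is ${}^sB^L$ by definition, and for the second one uses that for matrix-valued $1$-forms $(\alpha\wedge\beta)^\dagger=-\beta^\dagger\wedge\alpha^\dagger$, whence $({}^s\mathfrak A\wedge{}^s\mathfrak A)^\dagger=-{}^s\mathfrak A^\dagger\wedge{}^s\mathfrak A^\dagger$ and therefore ${}^sB^{L\dagger}=d\,{}^s\mathfrak A^\dagger+{}^s\mathfrak A^\dagger\wedge{}^s\mathfrak A^\dagger$. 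This gives exactly ${}^sB^L\,\rho_s+\rho_s\,{}^sB^{L\dagger}=0$.

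Finally, evaluating this $2$-form identity on any pair of tangent vectors shows ${}^sB^L(\sigma)\in\mathfrak{gl}(\HS)_{\rho_s}$. Since $\rho_s=g_{\rho_s}\sigma g_{\rho_s}^\dagger$ with $\rho_s\in\mathcal D^i(\HS)$, the conjugation identity $GL(\HS)_{\rho_s}=g_{\rho_s}H^i g_{\rho_s}^{-1}$ recalled in the geometric-phase section gives $\mathfrak{gl}(\HS)_{\rho_s}=g_{\rho_s}\mathfrak h^i g_{\rho_s}^{-1}\subset\overline{\mathfrak h^i}$, which closes the argument. I expect the only genuinely delicate points to be the sign and Hermitian-conjugation bookkeeping for matrix-valued forms—in particular establishing $(\alpha\wedge\beta)^\dagger=-\beta^\dagger\wedge\alpha^\dagger$ and verifying that the cross terms cancel with the correct signs—together with checking that the pulled-back identity $d\rho_s={}^s\mathfrak A\,\rho_s+\rho_s\,{}^s\mathfrak A^\dagger$ indeed persists on the singular strata, where the pseudo-inverse $W^\pinv$ intervenes in place of $W^{-1}$.
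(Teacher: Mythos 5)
Your proof is correct and follows essentially the same route as the paper: both arguments reduce the claim to showing ${}^sB^L\rho_s+\rho_s\,{}^sB^{L\dagger}=0$ by applying $d$ to a first-order identity tying $\mathfrak A$ to $\rho$ (you differentiate $d\rho=\mathfrak A\rho+\rho\mathfrak A^\dagger$ and use $d^2=0$, the paper differentiates the equivalent relation $\mathfrak A\,WW^\dagger=dW\,W^\dagger$ and lets the $dW\wedge dW^\dagger$ terms cancel), then conclude via $\mathfrak{gl}(\HS)_{\rho}\subset\overline{\mathfrak h^i}$. Your sign bookkeeping for $(\alpha\wedge\beta)^\dagger$ and the pseudo-inverse identity $W^\pinv WW^\dagger=W^\dagger$ on singular strata are both handled correctly.
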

\begin{proof}
Since $\mathfrak A WW^\dagger = dWW^\dagger$ we have $d\mathfrak A WW^\dagger - \mathfrak A \wedge d(WW^\dagger) = - dW \wedge dW^\dagger$. It follows that $d\mathfrak A WW^\dagger - \mathfrak A \wedge \mathfrak A WW^\dagger - \mathfrak A \wedge WW^\dagger \mathfrak A^\dagger = - dW \wedge dW^\dagger$. We have then $B^L WW^\dagger = \mathfrak A \wedge WW^\dagger \mathfrak A^\dagger + dW \wedge dW^\dagger$ and $WW^\dagger B^{L\dagger} = - \mathfrak A WW^\dagger \wedge \mathfrak A^\dagger + dW \wedge dW^\dagger$, implying that $B^L WW^\dagger + WW^\dagger B^{L\dagger} = 0 \Rightarrow B^L \in \mathfrak{gl}(\HS)_{WW^\dagger} \subset \overline{\mathfrak h^i}$.
\end{proof}

Remark, for a regular stratum, $\mathcal P^i$ is flat: ${^s} B^L = 0$ (${^s}\mathfrak A = dW_s W_s^{-1}$ is pure gauge). Under a gauge change of the first kind, the curving becomes
\begin{equation}
{^{gs}}B^L = g{^s}B^Lg^{-1}
\end{equation}
and under a gauge change of the second kind, it becomes
\begin{equation}
{^{sk}}B^L = {^s}B^L + d{^s}\eta_k - {^s}\eta_k \wedge {^s}\eta_k - [{^s}\mathfrak A,{^s}\eta_k]
\end{equation}

As in ref.\cite{Baez1,Baez2,Baez3,Viennot1} we can define fake curvatures of $\mathscr P^i$, as being ${^s}F^L = d{^s}A^L - {^s}A^L \wedge {^s}A^L - {^s}B^L \in \Omega^2(\Sigma^i(\HS),\mathfrak g_{\mathcal L})$ (${^s}A^L = \tr(X^{i\dagger} {^s}\mathfrak A) X_i$, $(X_i)_i$ being generators of $\mathfrak g_{\mathcal L}$) and ${^s}F^{BL} = d{^s}A^{BL} - {^s}A^{BL} \wedge {^s}A^{BL} - {^s}B^L \in \Omega^2(\Sigma^i(\HS),\overline{\mathfrak h^i})$ (${^s}A^{BL} = \mathcal P^\rho({^s}\mathfrak A)$, $\mathcal P^\rho$ being the projection of $\mathcal B(\HS)$ onto $\mathfrak{gl}(\HS)_\rho$). ${^s}F^L$ is the fake curvature of $\mathscr P^i$ with respect to the generator of the dynamics and ${^s}F^{BL}$ is the fake curvature of $\mathscr P^i$ with respect to the Sj\"oqvist-Andersson connection. For the adiabatic case, interpretations of $B^L$ and $F^L$ can be found in ref.\cite{Viennot3}, the curving is a measure of the ``kinematic decoherence'' (decoherence induced by variations in $\Sigma^i(\HS)$ during the dynamics), and the fake curvature is a measure of the non-adiabaticity of the system entangled with the ancilla. To understand these quantities in a non-adiabatic context, let $\Psi \in \Gamma(\Sigma^i(\HS),\HS \otimes \HA)$ be a local section of a singular stratum $i$. We denote by $p=(0,...,0,p_1,...,p_m) \in \Sigma(n)$ ($m < n = \dim \HS$, $p_i \in ]0,1[$) the local coordinates of $\sigma \in \Sigma^i(\HS)$. By a Schmidt decomposition \cite{Bengtsson} we can write that
\begin{equation}
\exists \phi^i(p) \in \HS, \exists \chi^i(p) \in \HA, \qquad \Psi(p) = \sum_{i=1}^m \sqrt{p_i} \phi^i(p) \otimes \chi^i(p)
\end{equation}
After some algebras, we can show that the curving associated with this section satisfies
\begin{equation}
\tr_{\HS}(\rho(p) B^L(p))  =  \sum_{i=1}^m p_i (F_{\mathcal A}(p) - A_{\mathcal S}(p) \wedge A_{\mathcal S}(p))_{ii}
\end{equation}
with $\rho(p) = \pi_{\mathcal A}(\Psi(p))$; $F_{\mathcal A}(p) = dA_{\mathcal A}(p) + A_{\mathcal A}(p) \wedge A_{\mathcal A}(p) \in \Omega^2(\Sigma^i(\HS),\mathfrak M_{m\times m}(\mathbb C))$ is the curvature associated with the Berry potential of the ancilla vectors of the Schmidt decomposition $A_{\mathcal A}(p)_{ij} = \langle \chi^i(p)|d\chi^j(p)\rangle_{\HA}$. Except non-abelian corrections, the statistical average of the curving $\omega_\rho(B^L)$ is essentially the average of the ancilla Berry curvature for the Schmidt decomposition. Let $\sigma^a,\sigma^b,\sigma^c \in \Sigma^i(\HS)$ be three infinitely close points. Let $\langle \sigma^a \sigma^b \sigma^c \rangle$ be the infinitesimal triangular simplex defined by these points. We have
\begin{equation}
e^{- \iint_{\langle \sigma^a \sigma^b \sigma^c \rangle} F_{\mathcal A}} \simeq Z^{bc} Z^{ca} Z^{ab}
\end{equation}
with $(Z^{ab})_{ij} = \langle \chi_i(p^a) | \chi_j(p^b) \rangle$. Now, if $\{\chi_i(p^b)\}_{i=1,...,m}$ spans the same subspace that $\{\chi_i(p^a)\}_{i=1,...,m}$, then $Z^{ab}$ is just a matrix of basis change inner to this subspace and $|\det(Z^{ab})|=1$. If the subspace spanned by $\{\chi_i(p^b)\}_{i=1,...,m}$ is different from the subspace spanned by $\{\chi_i(p^a)\}_{i=1,...,m}$, then $|\det(Z^{ab})|<1$. If follows that $\det(e^{- \iint_{\langle \sigma^a \sigma^b \sigma^c \rangle} F_{\mathcal A}})$ is a measure of the change of subspace in the neighbourhood of the simplex $\langle \sigma^a \sigma^b \sigma^c \rangle$. It follows that $F_{\mathcal A}(p)$ measures the propensity of the dynamics to leave the subspace spanned  by $\{\chi_i(p)\}_{i=1,...,m}$. Finally the curving $B^L(p)$ measures the propensity of the dynamics to leave the subspace spanned by $\{\phi_i(p) \otimes \chi_i(p)\}_{i=1,...,m}$. In order to describe the dynamics in the neighbourhood of $p$ (denoted by $\mathcal V(p)$), we need a subspace spanned by $\bigcup_{p' \in \mathcal V(p)} \{\phi_i(p') \otimes \chi_i(p')\}_{i=1,...,m}$. But even if $\mathcal V(p)$ is infinitesimal, if $\|B^L(p)\|$ is large, the dimension of this subspace will be larger than $m$. Finally we can understand $B^L(p)$ as the measure of the propensity of the system to leave its initial singular stratum to a less singular stratum. In accordance to this interpretation, $B^L$ is zero in regular strata. With similar arguments $F^L$ and $F^{BL}$ measure the propensity of the system to leave $S_{\mathcal L}(\Psi)$ and to leave a regime involving only phases by invariance.

\subsubsection{The connective structure of the right principal categorical bundle}
Let the following commutative diagramms:
$$ \begin{CD}
\Obj \mathscr Q^i @>{\id}>> \Morph \mathscr Q^i \\
@V{\varpi_{HS}}VV @VV{\varpi_{HS}}V \\
\Obj \mathscr M^i @>{\id}>> \Morph \mathscr M^i
\end{CD} 
\qquad
\begin{CD}
\Obj \mathscr Q^i @<{s}<< \Morph \mathscr Q^i \\
@V{\varpi_{HS}}VV @VV{\varpi_{HS}}V \\
\Obj \mathscr M^i @<{s}<< \Morph \mathscr M^i
\end{CD} 
\qquad
\begin{CD}
\Obj \mathscr Q^i @<{t}<< \Morph \mathscr Q^i \\
@V{\varpi_{HS}}VV @VV{\varpi_{HS}}V \\
\Obj \mathscr M^i @<{t}<< \Morph \mathscr M^i
\end{CD} 
$$
where we consider $\Obj \mathscr Q^i$ as a $K^i$-principal bundle on $\Obj \mathscr M^i$ and $\Morph \mathscr Q^i$ as $K^i \ltimes K^i$-principal bundle on $\Morph \mathscr M^i$. A connection on a principal 2-bundle with a base space being a non-trivial category is \cite{Viennot7} the data of connections $H_W \Obj \mathscr Q^i$ and $H_{(g,W)} \Morph \mathscr Q^i$ compatible in the sense where
\begin{eqnarray}
\id_{*} H_W \Obj \mathscr Q^i & = & H_{(1_\HA,W)} \Morph \mathscr Q^i \\
t_{*} H_{(g,W)} \Morph \mathscr Q^i & = & H_{gW} \Obj \mathscr Q^i \\
s_{*} H_{(g,W)} \Morph \mathscr Q^i & = & H_{W} \Obj \mathscr Q^i
\end{eqnarray}
the vertical tangent spaces being defined by $V_W \Obj \mathscr Q^i = V_W^R \mathcal B^i(\HS) = \{WX, X \in \mathfrak k^i_W\}$ and $V_{(g,W)} \Morph \mathscr Q^i =\{(0,WX), X \in \mathfrak k^i_W\}$. The connections are characterized by connection 1-forms, $\omega_o^R \in \Omega^1(\Obj \mathscr M^i,\mathfrak k^i)$ and $\omega_{\to}^R \in \Omega^1(\Morph \mathscr M^i, \mathfrak k^i \niplus \mathfrak k^i)$ related by
\begin{eqnarray}
\id^* \omega^R_\to & = & \omega_o^R \\
s^* \omega_o^R & = & \pi^{\mathfrak k}(\omega_o^R) \\
t^* \omega_o^R & = & t^{Lie}_\bullet (\omega^R_o)
\end{eqnarray}
with $\begin{array}{rcl} \pi^{\mathfrak k} :  \mathfrak k^i \niplus \mathfrak k^i & \to & \mathfrak k^i \\ Y \niplus X & \mapsto & X \end{array}$ and $\begin{array}{rcl} t^{Lie}_\bullet : \mathfrak k^i \niplus \mathfrak k^i & \to & \mathfrak k^i \\ Y \niplus X & \mapsto & Y+X \end{array}$. Let $\varsigma_o \in \Gamma(\Obj \mathscr M^i,\Obj \mathscr Q^i)$ and $\varsigma_\to \in \Gamma(\Morph \mathscr M^i, \Morph \mathscr Q^i)$ be the trivializing local sections:
\begin{eqnarray}
\varsigma_o(\rho) & = & \phi^i_{\mathscr Q}(\rho,1_\HS) = \sqrt{\rho} \\
\varsigma_{\to}(g,\rho) & = & \phi^i_{\mathscr Q}(g,\rho,1_\HA,1_\HA) = (\sqrt{g\rho g^\dagger} \sqrt{\rho}^\pinv,\sqrt \rho)
\end{eqnarray}
These local sections satisfy:
\begin{eqnarray}
id_{\varsigma_o(\rho)} & = & \varsigma_\to(\id_\rho) = (1_\HA,\sqrt \rho) \\
s(\varsigma_\to(g,\rho)) & = & \varsigma_o(s(g,\rho)) = \sqrt \rho \\
t(\varsigma_\to(h,\rho)) & = & \varsigma_o(t(g,\rho)) = \sqrt{g\rho g^\dagger}
\end{eqnarray}
The local data of the connection, the gauge potentials $A^R_o = \varsigma^*_o \omega^R_o \in \Omega^1(\Obj \mathscr M^i,\mathfrak k^i)$ and \\ $\underline A^R_\to = \varsigma^*_\to \omega^R_\to \in \Omega^1(\Morph \mathscr M^i, \mathfrak k^i \niplus \mathfrak k^i)$ are such that
\begin{eqnarray}
\id^* \underline A^R_\to(\rho) & = & A^R_o (\rho) \\
\pi^{\mathfrak k} (\underline A^R_\to (g,\rho)) & = & A^R_o(\rho) \\
t^{Lie}_\bullet (\underline A^R_\to (g,\rho)) & = & A^R_o(g\rho g^\dagger)
\end{eqnarray}
These conditions imply that $\underline A^R_\to(g,\rho) = A^R_o(\rho) + A^R_\to(g,\rho)$ with $A^R_\to(g,\rho) \in \Omega^1(\Morph \mathscr M^i, \mathfrak k^i)$ such that
\begin{eqnarray}
A^R_\to(1_\HS,\rho) & = & 0 \\
A^R_\to(g,\rho) & = & A^R_o(g\rho g^\dagger) - A^R_o(\rho)
\end{eqnarray}

In order to the horizontal lift of the arrows corresponding to the parallel transport of $\rho$ induces the geometric phases, we must have
$$
\begin{array}{ccccc}
\begin{CD} (\rho,1_\HA) \\ @V{(g_{\mathfrak A},\rho,1_\HA,1_\HA)}VV \\ (g_{\mathfrak A}\rho g^\dagger_{\mathfrak A},1_\HA) \end{CD} & \xrightarrow{\times (q,k)} \hspace{3em} & \begin{CD} (\rho,k) \\ @V{(g_{\mathfrak A},\rho, q,k)}VV \\ (g_{\mathfrak A}\rho g^\dagger_{\mathfrak A},qk) \end{CD} & \overset{\phi_{\mathscr Q}^i}{\Verylongrightarrow} & \begin{CD} g_{\mathfrak A}^{-1} \sqrt{g_{\mathfrak A} \rho g_{\mathfrak A}^\dagger} k^R \\ @VV{(g_{\mathfrak A},g_{\mathfrak A}^{-1} \sqrt{g_{\mathfrak A} \rho g_{\mathfrak A}^\dagger} k^R)}V \\ \sqrt{g_{\mathfrak A} \rho g_{\mathfrak A}^\dagger} k^R \end{CD}
\end{array}
$$
where $g_{\mathfrak A} = \Ted^{-\int_0^t i_{\dot{\sqrt{\rho}}} {^{Uhl \upcircarrow}}\mathfrak A dt'}$ is the left geometric phase, $k^R = \Teg^{-\int_0^t i_{\dot{\sqrt{\rho}}} A^R dt'}$ is the (right) Uhlmann geometric phase, and $(q,k) \in \Morph \mathscr K^i$ is the searched horizontal lift:
\begin{eqnarray}
(q,k) & = & \Teg^{-\int_0^t i_{\dot g_{\mathfrak A},\dot{\sqrt{\rho}}} \underline A^R_\to dt'} \\
& = & (\Teg^{-\int_0^t i_{\dot g_{\mathfrak A},\dot{\sqrt{\rho}}} A^R_\to dt'},\Teg^{-\int_0^t i_{\dot{\sqrt{\rho}}} \tilde A^R_o dt'})
\end{eqnarray}
where $i_{\dot{\sqrt{\rho}}} \tilde A^R_o = \left(\Teg^{-\int_0^t i_{\dot g_{\mathfrak A},\dot{\sqrt{\rho}}} A^R_\to dt'}\right)^{-1} i_{\dot{\sqrt{\rho}}} A^R_o \Teg^{-\int_0^t i_{\dot g_{\mathfrak A},\dot{\sqrt{\rho}}} A^R_\to dt'}$ by virtue of the intermediate representation theorem \cite{Messiah}. We have in the definition of the parallel transport $\sqrt{g_{\mathfrak A} \rho g_{\mathfrak A}^\dagger} k^R$ and not $g_{\mathfrak A} \sqrt \rho k^R$ as for the right connection, because we want that the target of a parallel transported arrow correspond to an usual parallel transport in the ``target bundle'', i.e. $\varsigma_o (g_{\mathfrak A} \rho g_{\mathfrak A}^\dagger) k^R = \sqrt{g_{\mathfrak A} \rho g_{\mathfrak A}^\dagger} k^R$.\\
By considering the definition of $\phi^i_{\mathscr Q}$ we find
\begin{equation}
\left(\sqrt{g_{\mathfrak A} \rho g_{\mathfrak A}^\dagger} \Teg^{-\int_0^t i_{\dot g_{\mathfrak A},\dot{\sqrt{\rho}}} A^R_\to dt'} \sqrt{\rho}^\pinv,\sqrt{\rho} \Teg^{-\int_0^t i_{\dot{\sqrt{\rho}}} \tilde A^R_o dt'} \right) = (g_{\mathfrak A},g_{\mathfrak A}^{-1} \sqrt{g_{\mathfrak A} \rho g_{\mathfrak A}^\dagger} k^R)
\end{equation}
inducing that
\begin{equation}
\begin{cases}
\Teg^{-\int_0^t i_{\dot g_{\mathfrak A},\dot{\sqrt{\rho}}} A^R_\to dt'} = \sqrt{g_{\mathfrak A} \rho g_{\mathfrak A}^\dagger}^\pinv g_{\mathfrak A} \sqrt \rho & \\
\Teg^{-\int_0^t i_{\dot g_{\mathfrak A},\dot{\sqrt{\rho}}} \underline A^R_\to dt'} = k^R
\end{cases}
\end{equation}
It follows from the last equality that $\underline A^R_\to(g,\rho) = A^R(g\rho g^\dagger) \iff A^R_\to(g,\rho) + A^R_o(\rho) = A^R(g\rho g^\dagger)$. We have clearly $A^R_o(\rho) = A^R(\rho)$, the ``object'' right connection is the Uhlmann connection. From the first equality, we see that the gauge potential of the ``arrow'' right connection is such that
\begin{eqnarray}
A^R_\to(g,\rho) & = & d\left(\sqrt{g \rho g^\dagger}^\pinv g \sqrt \rho \right) \left(\sqrt{g \rho g^\dagger}^\pinv g \sqrt \rho \right)^\pinv \\
& = & d\sqrt{g \rho g^\dagger}^\pinv \sqrt{g \rho g^\dagger} + \sqrt{g \rho g^\dagger}^\pinv dgg^{-1} \sqrt{g \rho g^\dagger} \nonumber \\
& & \quad + \sqrt{g \rho g^\dagger}^\pinv g {^{\sqrt{.}\upcircarrow}}\mathfrak A(\rho) g^{-1} \sqrt{g \rho g^\dagger}^\pinv \\
& = & \sqrt{g \rho g^\dagger}^\pinv \left(- {^{\sqrt{.}\upcircarrow}}\mathfrak A(g \rho g^\dagger) + dgg^{-1} g {^{\sqrt{.}\upcircarrow}}\mathfrak A(\rho) g^{-1} \right) \sqrt{g \rho g^\dagger}
\end{eqnarray}
${^{\sqrt{.}\upcircarrow}}\mathfrak A(g\rho g^\dagger) = g{^{\sqrt{.}\upcircarrow}}\mathfrak A g^{-1}+dgg^{-1} + g{^{\sqrt{.}}}\eta(g,\rho) g^{-1}$ by virtue of property \ref{generalgaugechange}. It follows that
\begin{equation}
A^R_\to(g,\rho) = - \sqrt{g \rho g^\dagger}^\pinv g {^{\sqrt{.}}}\eta(g,\rho) g^{-1} \sqrt{g \rho g^\dagger}
\end{equation}

The connections define the curving \cite{Viennot7} of $\mathscr Q^i$ is
\begin{equation}
B^R = dA^R_\to + A^R_\to \wedge A^R_\to + [A^R_o, A^R_\to] \in \Omega^2(\Morph \mathscr M^i, \mathfrak k^i)
\end{equation}
and the fake curvature of $\mathscr Q^i$ is
\begin{equation}
F^R = dA^R_o + A^R_o \wedge A^R_o + B^R \in \Omega^2(\Morph \mathscr M^i, \mathfrak k^i)
\end{equation}
$F^R(1_\HA,\rho)$ is just the curvature of the Uhlmann connection, it is a measure of the holonomy with respect to the Uhlmann's parallelism (relative phase factor \cite{Uhlmann0}) for infinitesimal loops starting from $\rho$. The Uhlmann's curvature is related to the quantum Fisher information matrix \cite{Matsumoto}, and it measures the difficulty to realize an estimation of the mixed state by measurements in the neighbourhood of $\rho$. The curving is just a correction associated with the shift between the two purifications of $g\rho g^\dagger$, i.e. $g \sqrt{\rho}$ and $\sqrt{g \rho g^\dagger}$.\\

We can note that it is possible to choose another ``object'' right connection as being another Uhlmann like connections. In particular, we can choose the Sj\"oqvist-Andersson connection by replacing $A^R$ by $A^{BR}$, in this case $k^R$ is the Sj\"oqvist geometric phase, and the rest of the discussion is totally similar. The fake curvature is then written $F^{BR} = dA^{BR} + A^{BR} \wedge A^{BR} + B^R$. The possibility of changing the ``object'' right connection follows from the definition of the parallel transport of the arrows $(g_{\mathfrak A},g_{\mathfrak A}^{-1} \sqrt{g_{\mathfrak A} \rho g_{\mathfrak A}^\dagger} k^R)$. If we change the definition of the right geometric phase $k^R$ (by passing, for example, from the Uhlmann to the Sj\"oqvist geometric phase), the arrows (in $\Morph \mathscr Q^i$ and $\Morph \mathscr M^i$) change in consequence, since $g_{\mathfrak A}$ is generated by $i_{\dot{\sqrt \rho}} \mathfrak A + \sqrt{\rho} \dot k^R k^{R-1} \sqrt{\rho}^\pinv$.

Table \ref{connection} summarizes the different data of the connective structures.
\begin{table}
\caption{\label{connection} The different data of the connective structures of $\mathscr P^i$ and $\mathscr Q^i$.}
\rotatebox{90}{\begin{tabular}{|r|c||c|c|c|c|l|}
\hline
\textit{differential form} & \textit{symb.} & \textit{side} &\textit{degree} & \textit{base space} & \textit{values} & \textit{interpretation} \\
\hline \hline
left gauge change of the 1st kind &  $g$ & L & 0 & $\Sigma^i(\HS)$ & $G$ & unitary and SLOCC operations on $\mathcal S$ \\
\hline
right gauge change &  $k$ & R & 0 & $\mathcal D^i(\HS)$ & $K$ & unobservable reconfigurations of $\mathcal A$ \\
\hline
left gauge change of the 2nd kind  &$\eta_k$ & L & 1 & $\Sigma^i(\HS)$ & $\overline{\mathfrak h^i}$ & unobservable reconfigurations of $\mathcal A$\\
\hline
left gauge potential &  $\mathfrak A$ & L& 1 & $\Sigma^i(\HS)$ & $\mathfrak g$ & generator of the operator valued geometric phase \\
\hline
potential-connection &  ${^{\upcircarrow}}\mathfrak A$ & L & 1 & $\mathcal D^i(\HS)$ & $\mathfrak g$ & generator of the operator valued geometric phase \\
\hline
right object gauge potential & $A^R_o$ & R & 1 & $\mathcal D^i(\HS)$ & $\mathfrak k^i$ & generator of the Uhlmann/Sj\"oqvist geometric phase \\
\hline
right arrow gauge potential &  $A^R_\to$ & R & 1 & $G \times \mathcal D^i(\HS)$ & $ \mathfrak k^i$ &  unitary and SLOCC operations on $\mathcal S$ \\
\hline
left curving & $B^L$ & L & 2 & $\Sigma^i(\HS)$ & $\overline{\mathfrak h^i}$ & measure of the kinematic decoherence \\
\hline
left fake curvature &  $F^L$ & L & 2 & $\Sigma^i(\HS)$ & $\mathfrak g_{\mathcal L}$ & measure of the transitions outsite $S_{\mathcal L}(\Psi)$  \\
\hline
two-sided left fake curvature  & $F^{BL}$ & L & 2 & $\Sigma^i(\HS)$ & $\overline{\mathfrak h^i}$ & measure of the non-invariance \\
\hline
right curving & $B^R$ & R & 2 & $G \times \mathcal D^i(\HS)$ & $\mathfrak k^i$ &  correction needed by the arrow structure\\
\hline
right fake curvature & $F^R$ & R &  2 & $G \times \mathcal D^i(\HS)$ & $ \mathfrak k^i$ & measure of the quantum estimation difficulty\\
\hline
two-sided right fake curvature & $F^{BR}$ & R &  2 & $G \times \mathcal D^i(\HS)$ & $ \mathfrak k^i$ & measure of the non-invariance \\
\hline
\end{tabular}}

\end{table}

\section{Conclusion}
The Lindblad equation describes a quantum system $\mathcal S$ in contact with a very large environment (a reservoir) $\mathcal R$. From a theoretical point of view, if $\Phi \in \HS \otimes \mathcal H_{\mathcal R}$ is the state of the system plus the reservoir, $\rho = \tr_{\mathcal H_{\mathcal R}} |\Phi \rrangle \llangle \Phi|$ obeys (under some assumptions\cite{Breuer}) to the Lindblad equation. But in practice, $\Phi$ is unknown because of the very large number of degrees of freedom of $\mathcal R$ (the partial trace models the forgetting of the ``informations'' concerning $\mathcal R$). The purification process permitts to introduce $\Psi \in \HS \otimes \HA$ such that $\rho = \tr_{\HA} |\Psi \rrangle \llangle \Psi|$, where the ancilla $\mathcal A$ plays the role of a small effective environment. Another approaches describing open quantum systems by a Schr\"odinger equation have been proposed \cite{Breuer, stocheq, Barchielli} (see also \ref{othereq}). These approaches involve stochastic processes in the Schr\"odinger equation in order to describe the effects of the environment onto the quantum system. These frameworks are not adaptated for our goal which consists to obtain a geometrization of the dynamics of open quantum systems, in order to use some methods issuing from the differential geometry. It is difficult to include random variables in a geometric framework, this induces the use of unsual geometries (based on stochastic and It\^o calculus). Our approach permitts to treat the dynamics of open quantum systems with the usual differential geometric methods. Moreover, with respect to the approaches consisting to model the environment with a high number of degrees of freedom, we propose a representation using an ancilly with minimal dimension as a kind of ``effective minimal environment''. This is important in practice for the use of the geometric tools on concrete example as for examples for the control of quantum systems hampered by the effects of the environment as in ref. \cite{Viennot3}. For the numerical computations of the geometric fields and for the size of the numerical data set to analyse, it is very important in practice to have a description involving the smallest Hilbert space.The price to pay to have a description in a small Hilbert space, without stochastic processes, is the nonlinearity of the Schr\"odinger equation governing $\Psi$. Moreover, in contrast with a precise description of the microscopic physics of the environment, the dynamics in our approach is still determined by the quantum jump operators and then by only the effects on the quantum system without knowledge of the environment. This is the reason of the gauge degree of freedom in the geometric description (the gauge changes interpreted as inobservable reconfigurations of the ancilla). Our approach can be then applied in situations where the physics of the environment is badly known and where the parameters of the Lindblad equation (jump rates, jump operators) are phenomenologically obtained from an experimental analysis.\\
The purified dynamics presented in this paper can permit to study the dynamics of open quantum systems submitted to relaxation processes, with the tools used for pure states. In particular we have shown how the geometric phases appear in this context, unifying some concepts previously introduced by different authors. The geometric structure involved by the purification and the geometric phases, is richer than the case of the pure states, and needs the use of the category theory. We hope that these facts can help to understand dynamics of open quantum systems and enlighten the position of the density matrix theory in the landscape of the geometric and gauge structures in theoretical physics.\\
In this paper, we have restricted our attention onto quantum systems described by finite dimensional Hilbert spaces. The extension of this work to infinite dimensional Hilbert spaces is an interesting question but it needs to treat some difficulties. Firstly, the various manifolds considered in this paper become infinite dimensional in that case. This involves the use of very special geometric methods with some topological cautions (for example the use of direct limits on chains of submanifolds with increasing dimension such that the union is the infinite dimensional manifold). Secondly, if the Hilbert space becomes infinite dimensional, the mixed states are not necessarily represented by a density operator. It is the case only if the relevant space of system observables is a von Neumann algebra. If it is just a $C^*$-algebra $\mathfrak a$, the mixed states are only defined as linear functionals of $\mathfrak a$ such that $\forall A \in \mathfrak a$, $\omega(A^\dagger A) \geq 0$ and $\|\omega\|=1$ ($\forall \omega$, $\exists \rho$ such that $\omega(A) = \tr(\rho A)$ if and only if the Hilbert space is finite dimensional or if $\mathfrak a$ is a von Neumann algebra) \cite{Bratteli}. In the two cases, the study of the Lindblad equation must be replaced by the study of transformation semigroups onto $C^*$ or von Neumann algebras. This needs a generalisation of the purification procedure in this context with anew some topological cautions.\\ 

The role of the category theory in quantum physics is an intriguing question, and it could be interesting to study the relations of the structure presented in this paper with the categorical structures proposed in other quantum problems\cite{Raptis, Coecke, Flori}. Moreover, it will be interesting in future works to generalize the holonomic quantum computation approach \cite{Zanardi, Lucarelli, Sjoqvist2, Xu} to open quantum systems by using the description based on the categorical bundles.\\

\noindent {\bf Acknowledgments :} The author acknowledges support from I-SITE Bourgogne-Franche-Comt\'e under grants from the I-QUINS project.

\appendix
\section{Comparison with other Schr\"odinger equations associated with the open quantum systems}
\label{othereq}
We have shown that the purified state obeys on the regular strata to the nonlinear Schr\"odinger equation:
\begin{equation}
\label{nlse}
\ihbar \dot \Psi_\rho = H^{eff} \otimes 1_\HA \Psi_\rho + \frac{\imath}{2} \gamma^k \Gamma_k \otimes \Gamma_k^\ddagger(\Psi_\rho) \Psi_\rho \qquad \Psi_\rho \in \HS \otimes \HA
\end{equation}
with $H^{eff} = H_{\mathcal S} - \frac{\imath}{2} \Gamma_k^\dagger \Gamma_k$. Ref. \cite{Gisin} studies a nonlinear Schr\"odinger equation for the pure evolution closest to the Lindblad evolution in the sense that it is viewed has the dynamics of some tangent vectors of $\mathcal D_0(\HS)$. This equation is the following:
\begin{equation}
\label{eqGisin}
\ihbar \dot \psi = (H^{eff} - \langle H^{eff} \rangle_\psi) \psi + \imath \gamma^k \langle \Gamma^\dagger_k \rangle_\psi (\Gamma_k - \langle \Gamma_k \rangle_\psi) \psi \qquad \psi \in \HS
\end{equation}
where $\langle O \rangle_\psi = \langle \psi|O|\psi \rangle$ ($\forall O \in \mathcal B(\HS)$). Except the shift of the operators by their average values, the structure of this equation is similar to eq. (\ref{nlse}). In fact, suppose that during a short time the dynamics remains on the singular stratum of the pure states: $\rho = |\psi \rangle \langle \psi|$, in that case $W_\rho = |\psi \rangle \langle \psi_0|$ and $W_\rho^\pinv = |\psi_0 \rangle \langle \psi|$ (with $\psi_0 = \psi(t=0)$). The equation (\ref{PNLSE}) becomes then
\begin{equation}
\label{nlsepure}
\ihbar \dot \psi \otimes \psi = (H^{eff} \psi + \frac{\imath}{2} \gamma^k \langle \Gamma_k^\dagger \rangle_\psi \Gamma_k \psi ) \otimes \psi
\end{equation}
which is very close to the equation (\ref{eqGisin}) (except the shift of the operators and a $\frac{1}{2}$ factor).\\
The deterministic part of the Liouville equation for a piecewise deterministic process is (see ref. \cite{Breuer} chapter 6.1) $\ihbar \dot \psi = H^{eff} \psi + \frac{\imath}{2} \gamma^k \|\Gamma_k \psi \|^2 \psi$. Since $\|\Gamma_k\|^2 = \langle \Gamma_k^\dagger \Gamma_k \rangle_\psi$, it is very close to our equation for a dynamics remaining on the pure state stratum. This is valid only between two quantum jumps; in a same manner equation (\ref{nlsepure}) could be approximatively valid only during a short time until the quantum jumps described by the operator $\frac{\imath}{2} \gamma^k \langle \Gamma_k^\dagger \rangle_\psi \Gamma_k$ tend to leave the pure state stratum. To describe quantum jumps we add a stochastic process (see ref. \cite{Breuer,stocheq}): $\psi \in \HS$
\begin{eqnarray}
\label{eqPDP}
\ihbar d\psi & = & H^{eff} \psi dt + \frac{\imath}{2} \gamma^k \langle \Gamma_k^\dagger \Gamma_k \rangle_\psi \psi dt + \imath dN^k_t \left(\frac{\Gamma_k}{\|\Gamma_k \psi\|} - 1_\HS\right) \psi \\
\text{or } \ihbar d\psi & = & H^{eff} \psi dt + \frac{\imath}{2} \gamma^k \left(\langle \Gamma_k^\dagger+\Gamma_k \rangle_\psi \Gamma_k - \frac{1}{4} \langle \Gamma_k^\dagger+\Gamma_k \rangle_\psi^2 \right) \psi dt \nonumber \\
& & \quad + \imath \sqrt{\gamma^k}dW_t^k \left(\Gamma_k -\frac{1}{2} \langle \Gamma_k^\dagger+\Gamma_k \rangle_\psi \right) \psi
\end{eqnarray}
where $dN^k_t$ is a Poisson process and $dW_t^k$ is a Wiener process. The first equation is the representation as a piecewise deterministic process and the second one is called quantum state diffusion process. The density matrix is then $\rho = \mathbb{E}\left(|\psi \rangle \langle \psi| \right)$ where $\mathbb E$ is the expectation value with respect to the stochastic process.\\
Ref. \cite{Yi} proposes a Schr\"odinger equation following the dynamics induced by the Lindblad equation:
\begin{equation}
\label{eqYi}
\ihbar \dot \Upsilon_\rho = H^{eff} \otimes 1_\HA \Upsilon_\rho - 1_\HS \otimes (H^{eff})^{\dagger \transp}\Upsilon_\rho + \imath \gamma^k \Gamma_k \otimes \Gamma_k^{\dagger \transp} \Upsilon_\rho \qquad \Upsilon_\rho \in \HS \otimes \HA
\end{equation}
(the operators of $\HS$ being transformed into operators of $\HA$ by the non canonical isomorphism ${O^a}_b |\zeta_a \rangle \langle \zeta^b| \mapsto {O^\alpha}_\beta |\xi_\alpha \rangle \langle \xi^\beta|$ ($(\zeta_a)_a$ and $(\xi_\alpha)_\alpha$ being the two choosen fixed basis of $\HS$ and $\HA$)). Equation (\ref{eqYi}) is linear, but it is not an equation for the purification, since in this case $\Upsilon_\rho = {\rho^a}_\beta \zeta_a \otimes \xi^\beta$ and then $\tr_{\HA} |\Upsilon_\rho \rrangle \llangle \Upsilon_\rho| \not= \rho$. In fact, if we choose the $\HA = \HS^{*}$, equation (\ref{eqYi}) is precisely the Lindblad equation in the Hilbert-Schmidt representation (in the Liouville space), $|\Upsilon_\rho\rrangle = {\rho^a}_b |\zeta_a\rangle \otimes \langle \zeta^b| = \rho$ (equation (\ref{eqYi}) is then just a reformulation of the ``superoperator'' formalism).\\
The comparisons between the different Schr\"odinger equations associated with an open quantum system are summarized table \ref{SEOQS}.
\begin{table}
\caption{\label{SEOQS} Properties of the Schr\"odinger equations associated with an open quantum system. ($^1$: an operator on the Hilbert-Schmidt space of operators is called superlinear in the physics literature.)}
\rotatebox{90}{\begin{tabular}{|l|c|l|c|l|}
\hline
\textit{Schr\"odinger equation} & \textit{Generator of the quantum jumps} & \textit{Nature} & \textit{dim.} & \textit{relation with the density matrix}\\
\hline
Hilbert-Schmidt representation & $\imath \gamma^k \Gamma_k \otimes \Gamma_k^{\dagger \transp}$ & ``superlinear''$^1$ & $n^2$ & $\rho = |\Upsilon_{\rho} \rrangle$ \\
\hline
Purification & $\frac{\imath}{2} \gamma^k \Gamma_k \otimes \Gamma_k^\ddagger(\Psi_\rho)$ & nonlinear & $n^2$ & $\rho = \tr_\HA |\Psi_\rho \rrangle\llangle \Psi_\rho|$ \\
\hline
Closest pure evolution & $\imath \gamma^k \langle \Gamma_k^\dagger \rangle_\psi \Gamma_k$ & nonlinear & $n$ & $\tr \left(\delta \rho - \delta|\psi\rangle \langle \psi|\right)^2$ is minimum\cite{Gisin} \\
\hline
Piecewise deterministic process & $dN^k_t \left(\frac{\Gamma_k}{\|\Gamma_k \psi\|} - 1_\HS\right)$ & nonlinear stochastic & $n$ & $\rho = \mathbb E(|\psi \rangle \langle \psi|)$ \\
\hline
Quantum state diffusion & $\imath \sqrt{\gamma^k}dW^k_t (\Gamma_k -\frac{1}{2} \langle \Gamma_k + \Gamma_k^\dagger \rangle_\psi)$ & nonlinear stochastic & $n$ & $\rho = \mathbb E(|\psi \rangle \langle \psi|)$ \\
\hline
\end{tabular}}
\end{table}

\section*{References}

\end{document}